\def\arxiv{1}    

\let\accentvec\vec 
\documentclass{llncs}
\let\vec\accentvec %

\ifnum\arxiv=1
\pagestyle{plain}
\fi

\usepackage[utf8]{inputenc}
\usepackage[T1]{fontenc} 

\usepackage{amsmath}
\usepackage{amsfonts}
\usepackage{amssymb}
\usepackage{amsbsy}

\usepackage{graphicx}
\usepackage{capt-of}
\usepackage{xcolor}
\definecolor{darkblue}{rgb}{0,0,.5}

\usepackage{url}
\usepackage{hyperref}
\hypersetup{colorlinks=true, breaklinks=true, linkcolor=darkblue, menucolor=darkblue, urlcolor=darkblue, citecolor=darkblue}

\usepackage{colortbl}
\usepackage{paralist}
\def\Item{\item~\vspace{-2\normalbaselineskip}}

\usepackage[ruled,vlined,linesnumberedhidden]{algorithm2e}
\setlength{\algomargin}{0em}
\DontPrintSemicolon

\usepackage{framed}
\renewenvironment{proof}[1][\proofname]{\begin{framed}{\noindent\bfseries #1. }}{\qed\end{framed}}
\newenvironment{proof_sketch}{\textsc{Sketch of Proof.}}{\qed}
\spnewtheorem*{quest}{Question}{\itshape}{\upshape}
\spnewtheorem*{main_theo}{Main Theorem}{\bfseries\upshape}{\itshape}


\ifnum\arxiv=1
\newcommand{\figPath}{.}
\else
\newcommand{\figPath}{plots}
\fi
\newcommand{\imgScale}{0.47}
\renewcommand{\vec}[1]{{\boldsymbol{#1}}}
\newcommand{\hpkn}{H^\dg_{n,p}}
\newcommand{\vhpkn}{H^\vec{\dg} _{n,\vec{p}}}
\newcommand{\vhknm}{H^\vec{\dg}_{n,m,\vec{\alpha}}}

\newcommand{\tvhknm}{\tilde{H}^\vec{\dg}_{n,m,\vec{\alpha}}}

\newcommand{\lo}{a}
\newcommand{\hi}{b}
\newcommand{\de}{\mathrm{d}}

\newcommand{\eps}{\varepsilon}
\newcommand{\Po}{\mathrm{Po}}

\newcommand{\dg}{\ensuremath{k}}
\newcommand{\Adg}{\bar{\dg}}
\newcommand{\uAdg}{\bar{K}}

\newcommand{\zup}{z_{\mathrm{up}}}
\newcommand{\zlo}{z_{\mathrm{lo}}}

\title{\texorpdfstring{On Thresholds for the Appearance of\\2-cores in Mixed Hypergraphs}{On Thresholds for the Appearance of 2-cores in Mixed Hypergraphs}}
\subtitle{(Extended Abstract)\vspace{-0.5cm}}

\author{Michael Rink\thanks{Research supported by DFG grant DI 412/10-2.}}
\institute{Fakultät für Informatik und Automatisierung, Technische Universität Ilmenau \email{michael.rink@tu-ilmenau.de}}

\renewcommand{\subparagraph}{}
\usepackage[noindentafter]{titlesec}
\titlespacing{\section}{0pt}{0.6em}{0.3em}
\titlespacing{\subsection}{0pt}{0.6em}{0.3em}

\begin{document}
\maketitle
\vspace{-0.5cm}
\begin{abstract}
We study thresholds for the appearance of a $2$-core in random hypergraphs 
that are a mixture of a constant number of random uniform hypergraphs each
with a linear number of edges but with different edge sizes.
For the case of two overlapping hypergraphs we give a solution for the optimal 
(expected) number of edges of each size such that the $2$-core threshold for the
resulting mixed hypergraph is maximized. We show that for adequate edge sizes
this threshold exceeds the maximum $2$-core threshold for any random uniform hypergraph,
which can be used to improve the space utilization of several data structures that
rely on this parameter.\vspace{-0.2cm}
\end{abstract}

\section{Introduction}
The $2$-core of a hypergraph $H$ is the largest induced sub-hypergraph (possibly empty),
that has minimum degree at least $2$. It can be obtained via
a simple peeling procedure (Algorithm~\ref{algo:peeling}) that successively removes nodes of degree $1$ together with
their incident edge.
\vspace{-0.5cm}
\begin{algorithm}
 \KwIn{Hypergraph $H$}
 \KwOut{Maximum induced sub-hypergraph with minimum degree $2$.}

  \While{$H$ has a node $v$ of degree $\leq 1$}
 {
   \lIf{$v$ is incident to an edge $e$}{remove $e$ from $H$}\;
   remove $v$ from $H$\;
 }
 \KwRet{$H$}
 \caption{\label{algo:peeling}Peeling}
\end{algorithm}
\vspace{-0.5cm}

Let $\hpkn$ be a random $\dg$-uniform hypergraph with $n$ nodes where each of the possible $\binom{n}{\dg}$ edges
is present with probability $p$ independent of the other edges. In the case that the expected number of edges equals $c\cdot n$ for some
constant $c>0$, the following theorem (conjectured e.g. in~\cite{MWHCFamily1996}, rigorously proved in~\cite{Molloy2004} and independently in~\cite{KimPoisson2006})
gives the threshold for the appearance of a $2$-core in $\hpkn$. Let 
\begin{equation}
 t(\lambda,\dg)= \frac{\lambda}{ \dg \cdot \Big(\Pr\left( \Po\left[\lambda \right] \geq 1\right) \Big)^{\dg-1} } \ ,
\end{equation}
where $\Po[\lambda]$ denotes a Poisson random variable with mean $\lambda$.
\begin{theorem}[{\cite[Theorem 1.2]{Molloy2004}}]
\label{theo:2-core_Molloy}
Let $\dg\geq 3$ be constant, and let $c^*(\dg)=\min_{\lambda>0} t(\lambda,\dg)$.
Then for $p={c\cdot n}/{\binom{n}{\dg}}$ with probability $1-o(1)$ for $n\to \infty$ the following holds:
\begin{compactenum}[$(i)$]
 \item if $c<c^*$ then $\hpkn$ has an empty $2$-core,
 \item if $c>c^*$ then $\hpkn$ has a non-empty $2$-core.
\end{compactenum}
\end{theorem}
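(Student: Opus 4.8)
\begin{proof_sketch}
I would first pass to a more convenient model and use the local structure of $\hpkn$ to locate $c^*(\dg)$. Since the number of edges of $\hpkn$ with $p=cn/\binom{n}{\dg}$ is a binomial with mean $cn$ and hence concentrated, and adding edges only enlarges the $2$-core, it suffices to treat the hypergraph $H_{n,m}$ with exactly $m=\lceil cn\rceil$ uniformly random $\dg$-subsets as edges; for $\dg\ge3$ there are a.a.s.\ no repeated edges, so this is the configuration (clone) model, in which the degrees are asymptotically $n$ i.i.d.\ copies of $\Po(\dg c)$ and, conditioned on the degree sequence, the hypergraph is uniform with that sequence. Its local weak limit around a random node is the Poisson Galton--Watson $\dg$-uniform hypertree: a node is incident to $\Po(\dg c)$ edges and, on traversing an edge, each of its $\dg-1$ other nodes is again incident to $\Po(\dg c)$ further edges. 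Running Algorithm~\ref{algo:peeling} on this hypertree, let $s$ be the probability that an edge is never deleted when peeling is restricted to the subtree hanging below it. An edge is deleted as soon as one of its $\dg-1$ lower endpoints reaches degree $\le1$; since that endpoint already has degree $1$ from the edge in question, this happens iff all its $\Po(\dg c)$ further edges are deleted, an event of probability $\Exp\bigl[(1-s)^{\Po(\dg c)}\bigr]=e^{-\dg c s}$, and independence over the $\dg-1$ endpoints gives
\[
  s=\bigl(1-e^{-\dg c s}\bigr)^{\dg-1}.
\]
Substituting $\lambda=\dg c s$ turns this into $c=\lambda/\bigl(\dg\,(1-e^{-\lambda})^{\dg-1}\bigr)=t(\lambda,\dg)$; since $t(\lambda,\dg)\to\infty$ both as $\lambda\to0^{+}$ (this is where $\dg\ge3$ enters) and as $\lambda\to\infty$, the equation has a root $s>0$ exactly when $c\ge\min_{\lambda>0}t(\lambda,\dg)=c^*$. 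For $c<c^*$ the only root is $s=0$: a.a.s.\ every edge is deleted and so is the root, so the core is predicted empty; for $c>c^*$ the largest root $s^{*}>0$ is attracting and the root lies in the core with probability $\Pr\bigl(\Po(\dg c s^{*})\ge2\bigr)>0$, predicting a linear-sized core.

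For $c<c^*$ I would turn this into a proof by a first-moment bound. If the $2$-core is non-empty it is a node set $S$ with $\abs{S}=a\ge1$ together with an edge set $F\subseteq\binom{S}{\dg}$ in which every node of $S$ has degree $\ge2$, so $\dg\abs{F}\ge2a$; since the number of edges of $\hpkn$ inside a fixed $a$-set has mean of order $c\,a^{\dg}/n^{\dg-1}$, this already forces $a=\Omega(n)$, hence the core is empty or linear. Summing over all pairs $(S,F)$ with $\abs{S}=\alpha n$, $\abs{F}=\beta n$ then reduces to a Laplace-type maximisation of an explicit entropy rate function in $(\alpha,\beta)$: its maximiser is the point delivered by the branching-process computation, the rate is negative throughout $\alpha\in(0,1]$ precisely when $c<c^*$, and so the expected number of such configurations is $o(1)$. (This is the kind of entropy computation the later sections carry out in the mixed setting.)

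For $c>c^*$ --- and, if one wants a single uniform argument, for $c<c^*$ as well --- I would analyse the peeling process itself by Wormald's differential-equation method. Running Algorithm~\ref{algo:peeling} one deletion at a time and conditioning on the current degree sequence keeps the remaining hypergraph uniform with that sequence, so one step (delete a degree-$1$ node and its edge, then decrement the degrees of the $\dg-1$ other endpoints of that edge, each a size-biased sample of the current degrees) has a one-step drift that is an explicit rational function of the scaled counts $v_j=V_j/n$ of degree-$j$ nodes. Wormald's theorem then says the trajectory follows, to within $o(n)$, the solution of
\[
  \frac{\de v_j}{\de x}=-[\,j=1\,]+(\dg-1)\!\left(\frac{(j+1)\,v_{j+1}}{\sum_\ell \ell\,v_\ell}-\frac{j\,v_j}{\sum_\ell \ell\,v_\ell}\right),\qquad v_j(0)=e^{-\dg c}\,\frac{(\dg c)^{j}}{j!},
\]
which keeps the degree sequence Poisson-shaped and can be integrated in closed form through a single decreasing parameter $\lambda(x)$ with $\lambda(0)=\dg c$. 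Peeling stops the first time $v_1=0$, at which moment the remaining hypergraph has minimum degree $\ge2$ and is therefore the $2$-core; tracking $\lambda$ shows the stopping value obeys $\lambda=\dg c(1-e^{-\lambda})^{\dg-1}$, so for $c<c^*$ both $\lambda$ and the number of remaining edges $m(x)=\tfrac1\dg\sum_j j\,v_j$ reach $0$ (empty core), while for $c>c^*$ the parameter gets stuck at $\lambda^{*}>0$, leaving $\Theta(n)$ edges --- a non-empty $2$-core.

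The bulk of the work is in this last step. One has to verify the hypotheses of Wormald's theorem (bounded, Lipschitz drift away from the boundary $\sum_\ell \ell\,v_\ell=0$, with the process halted just before it), confirm that peeling preserves uniformity conditioned on the degree sequence, and --- the genuinely delicate point --- control the process near the stopping time $v_1=0$, where the drift of $v_1$ is small and scaling-window fluctuations decide emptiness: for $c$ bounded away from $c^*$ one must show that the ODE solution crosses, respectively stays strictly above, $v_1=0$ transversally, so that the $O(\sqrt{n}\log n)$ fluctuations cannot flip the answer. The borderline $c=c^*$, wisely excluded from the statement, is exactly where this transversality fails and a finer window analysis would be needed. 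If one instead insists on a moment proof of the supercritical case, the obstacle migrates to the second-moment method for the core --- establishing $\Exp[\#(S,F)^2]\sim\Exp[\#(S,F)]^{2}$ at the dominant $(\alpha,\beta)$ by ruling out correlated pairs of core-structures --- which the differential-equation route avoids.
\end{proof_sketch}
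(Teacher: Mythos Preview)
The paper does not prove this theorem at all: it is quoted verbatim as \cite[Theorem~1.2]{Molloy2004}, accompanied only by the remark that it is a special case of Molloy's more general $\ell$-core result (and was independently obtained in \cite{KimPoisson2006}). There is therefore no ``paper's own proof'' to compare your sketch against.

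That said, your outline is a faithful summary of how the cited references establish the result. The branching-process heuristic you give is exactly the computation that produces $t(\lambda,\dg)$, and your second route --- tracking the peeling process via Wormald's differential-equation method, with the degree profile remaining Poisson with a decaying parameter $\lambda(x)$ --- is precisely Molloy's argument in \cite{Molloy2004}. Your alternative, passing to the Poisson clone model and doing moments, is closer in spirit to Kim's approach in \cite{KimPoisson2006}. You have also correctly identified the genuine technical issue (behaviour near the stopping boundary $v_1=0$ and the need for transversality when $c\neq c^*$), which is where most of the work in Molloy's paper actually lies. So while there is nothing in the present paper to benchmark against, your sketch is sound and matches the literature it cites.
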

\begin{remark}
Actually this is only a special case of \cite[Theorem 1.2]{Molloy2004} which covers $\ell$-cores for $\dg$-uniform hypergraphs for all $\ell \geq 2$, $\dg$ and $\ell$ not both equal to 2. 
\end{remark}
Now consider a mixture of graphs $\hpkn$ on $n$ nodes for different values of $p$ and $\dg$.
Let $\vhpkn$ be a random hypergraph with $n$ nodes where each of the possible $\binom{n}{\dg_i}$ edges
is present with probability $p_i$, given via the vectors $\vec{\dg}=(\dg_1,\dg_2,\ldots,\dg_s)$ and $\vec{p}=(p_1,p_2,\ldots,p_s)$.
While studying cores of hypergraphs in the context of cuckoo hashing the authors of \cite{DGMMPR2010}
described how to extend the analysis of uniform hypergraphs to mixed hypergraphs, which directly leads to 
the following theorem. For
$\vec{\alpha}=(\alpha_1,\alpha_2,\ldots,\alpha_s)\in[0,1]^s$ with $\sum_{i=1}^s\alpha_i=1$ let
\begin{equation}
\label{eq:general_threshold_function}
 t(\lambda,\vec{\dg},\vec{\alpha})= \frac{\lambda}{ \sum\limits_{i=1}^s \alpha_i \cdot \dg_i \cdot \Big(\Pr\left( \Po\left[\lambda \right] \geq 1\right) \Big)^{\dg_i-1} } \ . 
\end{equation}
\begin{theorem}[{generalization of Theorem \ref{theo:2-core_Molloy}, implied by \cite{DGMMPR2010}}]
\label{theo:2-core_general}
Let $s\geq 1$ be constant.
For each $1\leq i\leq s$ let $\dg_i\geq 3$ be constant, and let $\alpha_i\in[0,1]$ be constant, where $\sum_{i=1}^s \alpha_i=1$.
Furthermore let
$c^*(\vec{\dg},\vec{\alpha})=\min_{\lambda>0} t(\lambda,\vec{\dg},\vec{\alpha})$.
Then for $p_i=\alpha_i \cdot {c\cdot n}/{\binom{n}{\dg_i}}$ with probability $1-o(1)$ for $n\to \infty$ the following holds:
\begin{compactenum}[$(i)$]
 \item if $c<c^*$ then $\vhpkn$ has an empty $2$-core,
 \item if $c>c^*$ then $\vhpkn$ has a non-empty $2$-core.
\end{compactenum}
\end{theorem}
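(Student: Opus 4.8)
The plan is to follow the proof scheme of Theorem~\ref{theo:2-core_Molloy} (Molloy's argument, and its extension indicated in~\cite{DGMMPR2010}), the only change being to replace single‑type generating functions by the natural $s$‑type mixtures, which collapse back to a scalar fixed point because of the Poisson structure. First I would fix the degree distribution. With $p_i=\alpha_i cn/\binom{n}{\dg_i}$ the expected number of size‑$\dg_i$ edges is $\alpha_i c n$, there are $\alpha_i c n\,\dg_i$ half‑edges of type $i$, and a typical node receives $\Po(\alpha_i c \dg_i)$ of them independently over $i$; hence the total degree is asymptotically $\Po(\mu)$ with $\mu=c\sum_i\alpha_i \dg_i$. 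Working in the Poisson cloning model (equivalently the configuration model with these degrees), the local weak limit of $\vhpkn$ around a random node is a multi‑type Galton--Watson tree whose edges carry sizes $\dg_i$, and the peeling of Algorithm~\ref{algo:peeling} is analyzed either by the differential equation method applied to the numbers of edges of each size together with the residual degree sequence, or equivalently by density evolution on this tree --- exactly as in the uniform case.

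The core computation is the fixed‑point equation. Call a half‑edge $(v,e)$ \emph{useless} if $e$ is removed by peeling through endpoints of $e$ other than $v$; let $x$ be its probability for a uniformly random half‑edge, and let $y$ be the probability that a node is removed ``from below'', i.e.\ all of its other half‑edges are useless. Since a node's other‑degree is again $\Po(\mu)$ by Poisson size‑biasing, $y=\Exp[x^{\Po(\mu)}]=e^{-\mu(1-x)}$; and a half‑edge into a type‑$j$ edge is useless iff at least one of the $\dg_j-1$ remaining endpoints is removed from below, so averaging over the type of the edge (type $j$ with probability $c\alpha_j \dg_j/\mu$) gives $1-x=\tfrac{c}{\mu}\sum_j\alpha_j \dg_j(1-y)^{\dg_j-1}$. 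Substituting $\lambda:=\mu(1-x)$ --- the expected number of useful other half‑edges at a node --- and using $1-y=1-e^{-\lambda}=\Pr(\Po[\lambda]\ge 1)$, the two equations combine into
\[
 \lambda \;=\; c\sum_{i=1}^{s}\alpha_i\,\dg_i\,\big(\Pr(\Po[\lambda]\ge 1)\big)^{\dg_i-1},\qquad\text{equivalently}\qquad c \;=\; t(\lambda,\vec{\dg},\vec{\alpha}).
\]
Thus a nonempty $2$‑core corresponds to a positive solution $\lambda^{*}>0$ of this equation (obtained as the limit of density evolution), in which case whp the surviving sub‑hypergraph has $\sim n\,\Pr(\Po[\lambda^{*}]\ge 2)$ nodes, while $\lambda=0$ is the fixed point corresponding to the empty $2$‑core.

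It remains to translate ``a positive solution exists'' into the stated threshold. Since $t(\cdot,\vec{\dg},\vec{\alpha})$ is continuous and positive on $(0,\infty)$ with $t(\lambda,\vec{\dg},\vec{\alpha})\to\infty$ as $\lambda\to 0^{+}$ (here $\dg_i\ge 3$ enters, since $(1-e^{-\lambda})^{\dg_i-1}\approx\lambda^{\dg_i-1}$) and as $\lambda\to\infty$, the set of $c$ admitting a positive fixed point is exactly $[c^{*},\infty)$ with $c^{*}=\min_{\lambda>0}t(\lambda,\vec{\dg},\vec{\alpha})$: for $c<c^{*}$ the curve $c=t(\lambda,\cdot)$ has no solution, at $c^{*}$ it is tangent, and for $c>c^{*}$ there is an attracting positive root. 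For $c>c^{*}$ one runs the differential equation method for the peeling trajectory started at the empirical degree profile: it stays concentrated and halts with a linear number of edges still present, so a linear‑size sub‑hypergraph of minimum degree $2$ survives whp. For $c<c^{*}$ the same trajectory runs out of removable nodes only once all but $o(n)$ nodes are gone; a separate first‑moment calculation --- now summing over the $s$ possible edge sizes --- bounds the expected number of nonempty sub‑hypergraphs of minimum degree $2$ on $o(n)$ nodes by $o(1)$, ruling out a small $2$‑core and completing the subcritical case. (The hypothesis $\dg_i\ge 3$ is again essential: a linear number of size‑$2$ edges would, as in $G_{n,m}$, create $\Theta(1)$ short cycles and destroy the sharp threshold.)

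I expect the main obstacle to be purely technical: establishing concentration of the multi‑parameter peeling process (the differential equation method with $s$ edge‑size classes on a Poisson residual‑degree profile, together with the delicate behavior near $c=c^{*}$, where the relevant fixed point is only neutrally stable) and the matching first‑moment bound over all edge sizes. Conceptually nothing is new relative to Molloy's proof of Theorem~\ref{theo:2-core_Molloy}: because only the expected‑degree data enters the density evolution and Poisson splitting keeps the fixed point scalar, the generalization is essentially bookkeeping, replacing $\dg\cdot(\Pr(\Po[\lambda]\ge 1))^{\dg-1}$ by the mixture $\sum_i\alpha_i\dg_i(\Pr(\Po[\lambda]\ge 1))^{\dg_i-1}$ throughout.
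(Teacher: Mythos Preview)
Your proposal is correct and matches exactly what the paper indicates: the paper does not give a standalone proof of Theorem~\ref{theo:2-core_general} but merely states that it ``can be proved along the lines of \cite[Theorem 1.2]{Molloy2004}'' using the mixed-hypergraph ideas from \cite[Section 4]{DGMMPR2010}, and your outline is precisely such a proof (Poisson degree profile, density evolution / differential equation method yielding the scalar fixed point $c=t(\lambda,\vec{\dg},\vec{\alpha})$, concentration for $c>c^*$, and a first-moment bound for $c<c^*$). There is nothing to add; your derivation of the fixed-point equation via the substitution $\lambda=\mu(1-x)$ is clean and correct.
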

Using ideas from \cite[Section 4]{DGMMPR2010} this theorem can be proved along the lines of \cite[Theorem 1.2]{Molloy2004} utilizing that
$\vhpkn$ is a mixture of a constant number of independent hypergraphs.
\begin{remark}
Analogous to Theorem~\ref{theo:2-core_Molloy}, Theorem~\ref{theo:2-core_general} can also be generalized such that it covers $\ell$-cores for all $\ell\geq 2$. 
\end{remark}
Now consider hypergraphs $\vhpkn$ with edge probabilities $p_i=\alpha_i \cdot {c\cdot n}/{\binom{n}{\dg_i}}$ as in Theorem~\ref{theo:2-core_general}.
One can ask the following questions.
\begin{compactenum}
 \item Assume $\vec{\dg}$ is given. What is the optimal vector $\vec{\alpha}^*$ such the that threshold $c^*(\vec{\dg},\vec{\alpha}^*)=:c^*(\vec{\dg})$ is maximal
among all thresholds $c^*(\vec{\dg},\vec{\alpha})$? In other words, we want to solve the following optimization problem
\begin{equation}
\label{eq:optimization_problem}
c^*(\vec{\dg})=\min\limits_{\lambda>0} t(\lambda,\vec{\dg},\vec{\alpha}^*)
=\max\limits_{\vec{\alpha}}\min\limits_{\lambda>0} t(\lambda,\vec{\dg},\vec{\alpha})
 \ .
\end{equation}
 \item Is there a $\vec{\dg}$ such that $\vec{\alpha}^*$ gives some $c^*(\vec{\dg})$
that exceeds the maximum \mbox{$2$-core} threshold $c^*(\dg)$ among all $\dg$-uniform hypergraphs (not mixed),
which is known to be about $0.818$ for $\dg=3$, see e.g.~\cite[conjecture]{HMWCGraphs1993,MWHCFamily1996}, \cite[proof]{CooperCores2004}.
\begin{remark}
Often the $2$-core threshold is given for hypergraph models slightly different from $\hpkn$. The justification that some ``common'' hypergraph models
are equivalent in terms of this threshold is given in Section~\ref{sec:different_hypergraph_modesl}.
\end{remark}

\end{compactenum}

\subsection{Results}
We give the solution for the non-linear optimization problem 
\eqref{eq:optimization_problem} for $s=2$.
That is for each $\vec{\dg}=(\dg_1,\dg_2)$ we
either give optimal solutions $\vec{\alpha}^*=(\alpha^*,1-\alpha^*)$ and
$c^*(\vec{\dg})$ in analytical form
or identify a subset of the interval $(0,1]$ where we can use binary search to determine $\alpha^*$ and therefore 
$c^*(\vec{\dg})$ numerically with arbitrary precision.
Interestingly, it turns out that for adequate edge sizes $\dg_1$ and $\dg_2$ the maximum 2-core threshold $c^*(\vec{\dg})$
exceeds the maximum $2$-core threshold $c^*(k)$ for $\dg$-uniform hypergraphs. The following table lists some values.
\begin{center}
\small
\begin{tabular}{l||c|ccccccc|c}
$(\dg_1,\dg_2)$  & $(3,3)$  & $(3,4)$ & $(3,6)$ & $(3,8)$ & $(3,10)$ & $(3,12)$ & $(3,14)$ & $(3,16)$ & $(3,21)$                       \\\hline\hline
$c^*$            & 0.81847  & 0.82151 & 0.83520 & 0.85138 & 0.86752  & 0.88298  & 0.89761  & 0.91089  & \cellcolor{lightgray}0.92004   \\
$\alpha^*$       & 1        & 0.83596 & 0.85419 & 0.86512 & 0.87315  & 0.87946  & 0.88464  & 0.88684  & 0.88743                        \\
$\Adg$           & 3        & 3.16404 & 3.43744 & 3.67439 & 3.88795  & 4.08482  & 4.26898  & 4.47102  & 5.02626
\end{tabular}
\vspace{-0.1cm}
\captionof{table}{\label{tab:optimal_values}Optimal $2$-core thresholds $c^*(\vec{\dg})$, $\vec{\dg}=(\dg_1,\dg_2)$, and
$\vec{\alpha}^*=(\alpha^*,1-\alpha^*)$, and $\Adg=\alpha^*\cdot \dg_1 +(1-\alpha^*)\cdot\dg_2$.
The values are rounded to the nearest multiple of $10^{-5}$.}
\end{center}
More comprehensive tables for parameters $3\leq \dg_1\leq 6$ and $\dg_1\leq \dg_2\leq 50$ are given in Appendix~\ref{app:optimal_values}.
The maximum threshold found is about $0.92$ for $\vec{\dg}=(3,21)$.
\begin{remark}
So why does it help to use edges of different sizes? Consider a $\dg$-uniform hypergraph $\hpkn$ 
that has a non-empty $2$-core with node set $V$ and edge set $E$.
Let $V'$ be the set of nodes outside $V$, that is $V'\cap V=\emptyset$.
Assume that $c$ is just above $c^*(\dg)$.
Then there are many small sets $E_1,E_2,\ldots\subset E$, such that if one removes all
edges of any of these sets, the $2$-core of the remaining hypergraph would be empty.
Now randomly replace a constant fraction $\beta=1-\alpha$ of the edges of $\hpkn$ by edges of larger size.
If $\beta$ is large enough, then it is likely that there exists a set $E_i$, where all of the edges are substituted 
and all of the corresponding larger edges are incident with nodes from $V'$.
Consider an arbitrary large edge $e$ with $e\cap V'\neq \emptyset$. If $\beta$ is small enough, then it is likely that
there is at least one node $v$ from $e\cap V'$
that is incident to only one large edge (namely $e$).
It follows that $e$ will be removed by the standard peeling algorithm (Algorithm~\ref{algo:peeling}).
Hence, if $\beta$ is not too small and not too large, then it is likely that there exists a set $E_i$
whose edges are substituted by larger edges that will
be removed by the peeling algorithm, which results in an empty $2$-core.
\end{remark}
\subsection{Extensions to Other Hypergraph Models}
\label{sec:different_hypergraph_modesl}
While Theorems~\ref{theo:2-core_Molloy}~and~\ref{theo:2-core_general} are stated for hypergraphs $\vhpkn$,
one often considers slightly different hypergraphs, e.g. in the analysis of data structures.

Let $\vhknm$ and $\tvhknm$ be random hypergraphs with $n$ nodes and $m$ edges,
where for each $1\leq i\leq s$, a fraction of $\alpha_i$ of the edges are fully randomly chosen from the set of all possible edges of size $\dg_i$.
In the case of $\vhknm$ the random edge choices are made \emph{without} replacement and in the case of
$\tvhknm$ the random edge choices are made \emph{with} replacement.
Using standard arguments, one sees that if $m=c\cdot n, \dg_i\geq 3,$ and $p_i=\alpha_i \cdot {c\cdot n}/{\binom{n}{\dg_i}}$
as in the situation of Theorem~\ref{theo:2-core_general},
the $2$-core threshold of $\vhknm$  is the same as for $\vhpkn$ (see e.g. \cite[analogous to Proposition 2]{FPOrientability2010}),
and  the $2$-core threshold of $\tvhknm$ is the same as for $\vhknm$ (see e.g. \cite[analogous to Proposition 1]{FPOrientability2010}).


\subsection{Related Work}
\label{sec:related_work}
Non-uniform hypergraphs have proven very useful in the design of erasure correcting codes,
such as Tornado codes~\cite{LMSSSLoss-Resilient1997,LMSSTornado2001},
LT codes~\cite{LubyLTCodes2002}, Online codes~\cite{MaymounkovOnlineCodes2002}, and Raptor codes~\cite{ShokrollahiRaptorCodes2006}.
Each of these codes heavily rely on one or more hypergraphs where the hyperedges
correspond to variables (input/message symbols) and the nodes correspond to 
constraints on these variables (encoding/check symbols).
An essential part of the decoding process of an encoded message 
is the application of a procedure that can be interpreted as
peeling the hypergraph (see Algorithm~\ref{algo:peeling}) associated with the recovery process,
where it is required that the result is an empty 2-core.
Given $m$ message symbols, carefully designed non-uniform hypergraphs
allow, in contrast to uniform ones, to gain codes where
in the example of Tornado, Online, and Raptor codes a
random set of $(1+\eps)\cdot m$ encoding symbols are necessary to
decode the whole message in linear time (with high probability),
and in the case of LT codes a random set of $m+o(m)$ encoding symbols
are necessary to decode the whole message in time proportional to $m\cdot\ln(m)$ (with high probability).
Tornado codes use explicit underlying hypergraphs designed
for a given fixed code rate, whereas LT codes 
and its improvements, Online and Raptor codes, use
implicit graph constructions to generate essentially 
infinite hypergraphs resulting in so called rateless codes.
In the case of Torndado codes the size of the
hyperedges as well as the degree of the nodes
follow precalculated sequences that are optimized to obtain the desired properties.
In the case of LT codes, as well as in the last stage of Raptor and Online codes
each node chooses its degree at random according to some fixed distribution, and then selects its incident hyperedges uniformly at random.
(For Online codes also a skewed selection of the hyperedges is discussed, see \cite[Section 7]{MaymounkovOnlineCodes2002}.)
While the construction of the non-uniform hypergraph
used for these codes is not quite the same as
for $\tvhknm$ (or $\vhpkn$, $\vhknm$), since, among other reasons, the degree of the nodes is part of the design,
they are similar enough to seemingly make the
optimization methods / heuristics of \cite{LMSSTornado2001} applicable, see footnote~\cite[page 10]{GoodrichM2011}.
Having said that, compared to e.g. \cite{LMSSTornado2001},
our optimization problem is easier in the sense that it has fewer free parameters
and harder in the sense that we are seeking a global optimum.
\subsection{Overview of the Paper}
In the next section we discuss the effect of our results on three succinct data structures.
Afterwards, we give our main theorem that shows how to determine optimal 2-core thresholds
for mixed hypergraphs with two different edge sizes.
It follows a section with experimental evaluation of the appearance of 2-cores for
a few selected mixed hypergraphs, which underpins our theoretical results.
We conclude with a short summary and an open question.

\section{Some Applications to Succinct Data Structures}
\label{sec:applications}
Several succinct data structures are closely related to the $2$-core threshold of $\dg$-uniform hypergraphs
(often one considers $\tvhknm$ for $s=\alpha_1=1$ and $\dg_1=3$).
More precisely, the space usage of these data structures is inversely proportional to the $2$-core threshold $c^*(\dg)$,
while the evaluation time is proportional to the edge size $\dg$. By showing that the value of $c^*(3)$ can be
improved using mixed hypergraphs instead of uniform ones, our result opens a new possibility for a space–time tradeoff
regarding these data structures, allowing to further reduce their space needs at the cost of a constant increase in the evaluation time.
Below we briefly sketch three data structures and discuss possible improvements, where
we make use of the following definitions.

Let $\vec{a}=(a_1,a_2,\ldots,a_n)$ be a vector with $n$ cells each of size $r$ bits.
Let $S=\{x_1,x_2,\ldots,x_m\}$ be a set of $m$ keys, where $S$ is subset of some universe $U$
and it holds $m=c\cdot n$ for some constant $c<1$.
The vector cells correspond to nodes of a hypergraph
and the keys from $S$ are mapped via some function $\varphi$ to a sequence of vector cells and therefore correspond to hyperedges.
We identify cells (and nodes) via their index $i$, $1\leq i\leq n$, whereas $a_i$ stands for the value of cell $i$.
The following three data structures essentially consist of a vector $\vec{a}$ and a mapping $\varphi$.
For each data structure we compare their performance, depending if $\varphi$ realizes a uniform or a mixed hypergraph.
In the case of a uniform hypergraph each key $x_j$ is mapped to $\dg=3$ random nodes $\varphi(x_j)=( g_1(x_j),g_2(x_j),g_3(x_j) )$ 
via functions $g_1,g_2,g_3: U \to \{1,2,\ldots,n\}$.
In the case of a mixed hypergraph, as an example, a fraction of $\alpha^*=0.88684$ keys are mapped to $3$ random nodes 
using functions $g_1,g_2,g_3$ and a fraction of $1-\alpha^*$ keys are mapped to $16$ random nodes via functions $g'_1,g'_2,\ldots,g'_{16}: U \to \{1,2,\ldots,n\}$.
We fix $c$ below the 2-core threshold to $c=c^*-0.005$, which gives $c=0.813$ in the uniform case and $c=0.906$ in the mixed case, cf. Table~\ref{tab:optimal_values}.
The reason why we use a rather small distance of $0.005$ is that for large $m$ one observes a fairly sharp phase transition from
``empty 2-core'' to ``non-empty 2-core'' in experiments, cf. Section~\ref{sec:experiments}.

\subsection{Invertible Bloom Lookup Table}
The invertible Bloom Lookup Table~\cite{GoodrichM2011} (IBLT) is a Bloom filter data structure that,
amongst others, supports a complete listing of the inserted elements (with high probability).
We restrict ourselves to the case where the IBLT is optimized for the listing operation
and we assume without loss of generality that the keys from $S$ are integers.
Each vector cell contains a summation counter and a quantity counter, initialized with $0$.
The keys arrive one by one and are inserted into the IBLT.
Inserting a key $x_j$ adds its value to the summation counter
and increments the quantity counter at each of the cells given via $\varphi(x_j)$.
To list the inserted elements of the IBLT one essentially uses the standard peeling process
for finding the 2-core of the underlying hypergraph (see Algorithm~\ref{algo:peeling}).
While there exists a cell where the quantity counter has value 1,
extract the value of the summation counter of this cell which gives some element $x_j$.
Determine the summation counters and quantity counters associated with $x_j$ via evaluating $\varphi(x_j)$
and subtract $x_j$ from the summation counters and decrement the quantity counters.
With this method a complete listing of the inserted elements is possible if the 2-core of the hypergraph is empty.
Therefore in the case of uniform hypergraphs we get a space usage of $n/m\cdot r \approx 1.23 \cdot r$ bits per key.
As already pointed out by the authors of~\cite{GoodrichM2011}, who highlight parallels to erasure correcting codes (see Section~\ref{sec:related_work}), a
non-uniform version of the IBLT where keys have a different number of associated cells could improve the
maximum fraction $c=m/n$ where a complete listing is successful with high probability.
Using our example of mixed hypergraphs leads to such an improved space usage of about $1.10 \cdot r$ bits per key.

\subsection{Retrieval Data Structure}
Given a set of key-value pairs $\{ (x_j,v_j) \mid x_j\in S, v_j\in R, j\in[m]\}$,
the \emph{retrieval problem} is the problem of building a
function $f: U\to R$ such that for all $x_j$ from $S$ it holds $f(x_j)=v_j$;
for any $y$ from $U\setminus S$ the value $f(y)$ can be an arbitrary element from $R$.
Chazelle et al.~\cite{CKRTBloomier2004} gave a simple and practical construction of a retrieval data structure,
consisting of a vector $\vec{a}$ and some mapping $\varphi$ that has
constant evaluation time, via simply calculating $f(x_j)= \bigoplus\nolimits_{i \in \varphi(x_j)} a_i$.
The construction is based on the following observation, which is stated more explicitly in~\cite{BotelhoPZ2007}.
Let $\vec{v}=(v_1,v_2,\ldots,v_{m})$ be the vector of the function values and
let $\vec{M}$ be the $m\times n$ incidence matrix of the underlying hypergraph,
where the characteristic vector of each hyperedge is a row vector of $\vec{M}$.
If the hypergraph has an empty 2-core then the linear system $\vec{M}\cdot\vec{a} = \vec{v}$ can be solved in linear time.
For appropriate $c$ this gives expected linear construction time.
As before, in the case of uniform hypergraphs the space usage is about $1.23\cdot r$ bits per key, assuming that the values $v_j$ are bit strings of length $r$.
And in our example of mixed hypergraphs the space usage is about $1.10 \cdot r$ bits per key at the cost of
a slight increase of the evaluation time of $f$.

In \cite{DPSuccinct2008} it is shown how to obtain a retrieval data structure with space usage of $(1 + \eps)\cdot r$ bits per key,
for any fixed $\eps>0$, evaluation time $O(\log(1/\eps))$, and linear expected construction time, while using essentially the same construction as above.
The central idea is to transfer the problem of solving one large linear system into the problem of solving
many small linear systems, where each system fits into a single memory word and can be solved via precomputed pseudoinverses.
As shown in~\cite{ADRExperimental2009} this approach is limited in its practicability but can be adapted to build
retrieval data structures with $1.10\cdot r$ bits per key (and fewer) for realistic key set sizes. But this modified 
construction could possibly be outperformed by our direct approach of solving one large linear system in expected linear time.

\subsection{Perfect Hash Function}
Given a set of keys $S$, the problem of \emph{perfect hashing} is to build a function $h:U\to \{1,2,\ldots,n\}$ that is 1-to-1 on $S$.
The construction from \cite{BotelhoPZ2007} and \cite{CKRTBloomier2004} gives a data structure consisting of a vector $\vec{a}$
and some mapping $\varphi$ that has constant evaluation time.
Formulated in the context of retrieval, one builds a vector $\vec{v}=(v_1,v_2,\ldots,v_m)$ such that
each key $x_j$ is associated with a value $f(x_j)=v_j$ that is the index $\iota$ of the position of a node in the sequence $\varphi(x_j)$.
This node must have the property that if one applies the peeling process to the underlying hypergraph (Algorithm~\ref{algo:peeling})
it will be selected and removed because it gets degree 1.
If $c$ is below the 2-core threshold then with high probability for each $x_j$ there exists such an index $\iota$, and the
linear system $\vec{M}\cdot\vec{a} = \vec{v}$ can be solved in linear time.
Given the vector $\vec{a}$ the evaluation of $h$ is done via $h(x_j)={\varphi(x_j)_\iota}$ where $\iota= \bigoplus\nolimits_{i \in \varphi(x_j)} a_i$.

In the case of a $3$-uniform hypergraph one gets a space usage of about $1.23\cdot 2$ bits per key,
since there are at most 3 different entries in $\vec{a}$. If one applies a simple compression method that stores
every $5$ consecutive elements from $\vec{a}$ in one byte,
one gets a space usage of about $1.23\cdot 8/5\approx 1.97$ bits per key. The range of $h$ is $n=1.23\cdot m$.

In contrast to the examples above, improving this data structure by simply using
a mixed hypergraph is not necessarily successful, since the increase of the load $c$ is compensated by the increase of the
maximum index in the sequence $\varphi(x_j)$, which in our example would lead to a space usage of about $1.10\cdot 4$ bits per key
for uncompressed $\vec{a}$, since we use up to $16$ functions for $\varphi(x_j)$.
However, this can be circumvented by modifying the construction of the vector $\vec{v}$ as follows.
Let $G=(S \cup \{1,2,\ldots,n\}, E)$ be a bipartite graph with edge set $E=\{ \{x,g_\iota(x_j)\} \mid x_j \in S, \iota\in\{1,2,3\} \}$.
According to the results on $3$-ary cuckoo hashing, see e.g.~\cite{FPOrientability2010,DGMMPR2010},
it follows that for $c<0.917$ (as in our case) the graph
$G$ has a left-perfect matching with high probability. 
Given such a matching one stores in $\vec{v}$ for each key $x_j$ the index $\iota$ of $g_\iota$ that has the property that $\{x_j,g_\iota(x_j)\}$ is a matching edge.
Now given the solution of $\vec{M}\cdot\vec{a} = \vec{v}$, the function $h$ is evaluated via $h(x_j)={g_\iota(x_j)}$ where $\iota= \bigoplus\nolimits_{i \in \varphi(x_j)} a_i$.
Since $\vec{a}$ has at most three different entries it follows that the 
space usage in our mixed hypergraph case is about $1.10\cdot 2$ bits per key.
Using the same compression as before, the space usage can be reduced to about $1.10\cdot 8/5=1.76$ bits per key.
Now the range of $h$ is $n=1.10\cdot m$.
Solving the linear system can be done in expected linear time. It is conjectured that if $G$ has a matching then it is found by
the ($\dg,1$)-generalized selfless algorithm from~\cite[Section 5]{DGMMPR2009}; this algorithm can be implemented to work in expected linear time.

A more flexible trade-off between space usage and range 
yields the CHD algorithm from~\cite{BBDHash2009}. This algorithm 
allows to gain ranges $n=(1+\eps)\cdot m$ for arbitrary $\eps > 0$ in combination with a adjustable compression rate that
depends on some parameter $\lambda$. For example, using a range of about $1.11 \cdot m$, a space usage of $1.65$ bits per key is achievable,
see \cite[Fig. 1(b), $\lambda=5$]{BBDHash2009}. But since the expected construction time of the CHD algorithm is $O(m\cdot (2^\lambda+(1/\eps)^\lambda))$
\cite[Theorem 2]{BBDHash2009}, our approach could be faster for a comparable space usage and range.

\section{Maximum Thresholds for the Case $s=2$}
In this section we state our main theorem that gives a solution for the
non-linear optimization problem \eqref{eq:optimization_problem} 
for the case $s=2$, that is given two edge sizes 
we show how to compute the optimal (expected) fraction of edges of each size 
such that the threshold of the appearance of a $2$-core
of a random hypergraph using this configuration is maximal.

Let $\vec{\dg}=(\lo,\hi)$ with $\lo\geq 3$, and $\hi>\lo$.
Furthermore, let
$\vec{\alpha}=(\alpha,1-\alpha)$
and\footnote{We can exclude the case $\alpha=0$, since 
if $3\leq \lo<\hi$, then it holds that $c^*(\lo)>c^*(\hi)$.}
$\alpha\in(0,1]$,
as well as $\lambda\in(0,+\infty)$. 
Consider the following threshold function as a special case of \eqref{eq:general_threshold_function}
\begin{equation}
 t(\lambda,\lo,\hi,\alpha)=\frac{\lambda}{ \alpha\cdot \lo \cdot (1-e^{-\lambda})^{\lo-1}+(1-\alpha)\cdot \hi \cdot(1-e^{-\lambda})^{\hi-1} } \ .
\end{equation}
We transform $t(\lambda,\lo,\hi,\alpha)$ in a more manageable function using a monotonic and bijective domain mapping via  
$z=1-e^{-\lambda}$ and $\lambda=-\ln(1-z)$.
Hence the transformed threshold function is
\begin{equation}
\label{eq:transformed_threshold_function}
 T(z,\lo,\hi,\alpha)=\frac{-\ln(1-z)}{\alpha \cdot \lo \cdot z^{\lo-1}+ (1-\alpha)\cdot \hi\cdot z^{\hi-1} } \ ,
\end{equation}
where $z\in(0,1)$.
According to \eqref{eq:optimization_problem} and using $T(z,\lo,\hi,\alpha)$ instead of $t(\lambda,\lo,\hi,\alpha)$
the optimization problem is defined as
\begin{equation}
\label{eq:transformed_optimization_problem}
 \max_{\alpha\in(0,1]}\min_{z\in(0,1)} T(z,\lo,\hi,\alpha) \ .
\end{equation}
For a short formulation of our results we make use of the following three auxiliary functions.
\begin{align}
 f(z)&=\frac{-\ln(1-z)\cdot (1-z)}{z} \\
 g(z,\lo,\hi)&=f(z)\cdot (\hi-1)\cdot(\lo-1) + \frac{1}{1-z} +2 -\hi -\lo \\
 h(z,\lo,\hi)&=\frac{ \lo\cdot z^{\lo -\hi} -\hi -f(z)\cdot (\lo\cdot(\lo-1)\cdot z^{\lo-\hi}-\hi\cdot(\hi-1)) }{\hi\cdot ((\hi-1)\cdot f(z)-1)} \ .
\end{align}
Furthermore we need to define some ``special'' points.
\begin{align}
 z' &= \left( \frac{\lo}{\hi} \right)^{\frac{1}{\hi-\lo}} & &
 z_l = f^{-1}\left(\frac{1}{\lo-1}\right) & & z_r = f^{-1}\left(\frac{1}{\hi-1}\right) \\ 
 z_1 &= \min \{z \mid g(z)=0\} & & z_2 = \max \{z \mid g(z)=0\} \ .
\end{align}
It can be shown that if $z_1$ and $z_2$ exist, then it holds $z'\neq z_1$ and $z'\neq z_2$. Now we can state our main theorem\footnote{For any function $\phi=\phi(\cdot,x)$ we will use $\phi(\cdot)$ and $\phi(\cdot,x)$ synonymously, if $x$ is considered to be fixed.}.
\begin{theorem}
\label{theo:main}
Let $\lo,\hi$ be fixed and let $T({z}^*,{\alpha}^*)=\max\limits_{\alpha\in(0,1]}\min\limits_{z\in(0,1)} T(z,\alpha)$.
Then the following holds:
\begin{compactenum}[1.]
 \item Let $\min_z g(z)\geq  0$.
\begin{compactenum}[$(i)$]
\item If $h(z')\leq 1$ then the optimal point is
$({z}^*,{\alpha}^*) =(z_l,1) $
and the maximum threshold is given by
\begin{displaymath}
 T({z}^*,{\alpha^*})= \frac{-\ln( 1-z_l)}{\lo\cdot z_l^{\lo-1}} \ .
\end{displaymath}
\item If $h(z')>1$ then the optimal point is the saddle point
\begin{displaymath}
({z}^*,{\alpha}^*) =\left(\left(\tfrac{\lo}{\hi}\right)^{\tfrac{1}{\hi-\lo}} ,\tfrac{\hi-1}{\hi-\lo}-\tfrac{1}{f({z}^*)\cdot (\hi-\lo)}\right)
\end{displaymath}
and the maximum threshold is given by
\begin{displaymath}
 T({z}^*,{\alpha^*})= -\ln\left( 1-\left(\frac{\lo}{\hi} \right)^\frac{1}{\hi-\lo}  \right)\cdot \left(\frac{\hi^{\lo-1}}{\lo^{\hi-1}}\right)^{\frac{1}{\hi-\lo}} \ .
\end{displaymath}
\end{compactenum}
\item Let $\min_z g(z)<0$.
\begin{compactenum}[$(i)$]
\item If $h(z')\leq 1$              then the optimum is the same as in case $1(i)$.
\item If $h(z')\in (1,h(z_2)]$      then the optimum is the same as in case $1 (ii)$.
\item If $h(z')\in (h(z_2),h(z_1))$ then there are two optimal points
$(z^*,\alpha^*)$ and $(z^{**},\alpha^*)$. It holds $1/\alpha^*=h(z^*)=h(z^{**})$ and $T(z^*,\alpha^*)=T(z^{**},\alpha^*)$.

The optimal points can be determined numerically using binary search for
the value $\alpha$ that gives
$T(\tilde{z}_1,\alpha)=T(\tilde{z}_2,\alpha)$,
where $\alpha$ is from the interval $[1/h(\zup),1/h(\zlo)]$ and it holds
$h(\tilde{z}_1)=h(\tilde{z}_2)=1/\alpha$, with $\tilde{z}_1$ from $(z_l,\zup)$,
and $\tilde{z}_2$ from $(\zlo,z_r)$.
The (initial) interval for $\alpha$ is:
\begin{compactitem}[$\bullet$]
\item $[1/h(z_1),1/h(z_2)]$, if $z_1<z'<z_2$,
\item $[1/h(z'),1/h(z_2)]$,  if $z'<z_1$,
\item $[1/h(z_1),1/h(z')]$,  if $z'>z_2$. 
\end{compactitem}
\item If $h(z') \in [h(z_1),\infty) $ then the optimum is the same as in case $1 (ii)$.
\end{compactenum}
\end{compactenum}
\end{theorem}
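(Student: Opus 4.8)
The plan is to collapse the two‑variable minimax to a one‑dimensional problem: solve the inner minimization $\min_z T(z,\lo,\hi,\alpha)$ exactly for each fixed $\alpha$, then handle the outer maximization over $\alpha$ with the envelope theorem. Throughout $\lo,\hi$ are fixed and suppressed. For fixed $\alpha\in(0,1]$ one first checks that $T(z,\alpha)\to+\infty$ as $z\to 0^+$ (since $\lo,\hi\ge 3$, the denominator vanishes faster than $-\ln(1-z)\sim z$) and as $z\to 1^-$ (numerator blows up, denominator stays bounded away from $0$), so the minimum is attained at an interior critical point. Differentiating and using the identity $-\ln(1-z)(1-z)=z\,f(z)$ that defines $f$, one finds that $\partial_z T(z,\alpha)$ has the sign of
\[
\alpha\,\lo\, z^{\lo-1}\bigl(1-(\lo-1)f(z)\bigr)+(1-\alpha)\,\hi\, z^{\hi-1}\bigl(1-(\hi-1)f(z)\bigr);
\]
since $f$ is strictly decreasing with $f(z_l)=\tfrac1{\lo-1}$, $f(z_r)=\tfrac1{\hi-1}$, this is negative on $(0,z_l)$ and positive on $(z_r,1)$, so every critical point lies in $(z_l,z_r)$ and $T(\cdot,\alpha)$ is decreasing on $(0,z_l)$ and increasing on $(z_r,1)$. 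Solving $\partial_z T=0$ for $\alpha$ gives exactly $1/\alpha=h(z)$, so the critical points of $T(\cdot,\alpha)$ are precisely the roots $z\in(z_l,z_r)$ of $h(z)=1/\alpha$.

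Next I study the shape of $h$. A direct computation with $f'(z)=(z+\ln(1-z))/z^2$, eliminating $\alpha$ through the critical‑point equation, shows that $\operatorname{sign} h'(z)=\operatorname{sign} g(z)$ on $(z_l,z_r)$, and that $h(z_l)=1$, $h(z)\to+\infty$ as $z\to z_r^-$. One also checks $g(0^+)=(\lo-2)(\hi-2)>0$, $g(1^-)=+\infty$, $g(z_l)>0$, $g(z_r)>0$, and — by comparing $z^2/(1-z)^2$ with $(\lo-1)(\hi-1)(-z-\ln(1-z))$ — that $g$ is first strictly decreasing then strictly increasing, so $g$ has at most two zeros, and (one verifies) $\min_z g<0$ forces them to lie in $(z_l,z_r)$, whence $g<0$ exactly on $(z_1,z_2)$ with $z_l<z_1<z_2<z_r$. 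Hence in Case 1, $h$ is a strictly increasing bijection $(z_l,z_r)\to(1,\infty)$: each $\alpha$ yields a unique critical point $\zeta(\alpha)=h^{-1}(1/\alpha)$, $T(\cdot,\alpha)$ is unimodal, and $\zeta$ is decreasing in $\alpha$ with $\zeta(1)=z_l$; in Case 2, $h$ increases on $(z_l,z_1)$, decreases on $(z_1,z_2)$, increases on $(z_2,z_r)$, so $T(\cdot,\alpha)$ has two local minima flanking a local maximum exactly when $1/\alpha\in(h(z_2),h(z_1))$, and is effectively unimodal otherwise.

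For the outer maximization, write $m(\alpha)=\min_z T(z,\alpha)$. The envelope theorem gives $m'(\alpha)=\partial_\alpha T(\zeta(\alpha),\alpha)$, and since $\partial_\alpha T(z,\alpha)$ is a negative multiple of $\lo z^{\lo-1}-\hi z^{\hi-1}$, it vanishes iff the active critical point equals $z'=(\lo/\hi)^{1/(\hi-\lo)}$, is negative when it is $<z'$ and positive when it is $>z'$. Combined with the monotonicity from the previous step and (for Case 2) the fact that the pointwise minimum of monotone functions is monotone, this shows: if $h(z')\le 1$ then the interior saddle candidate $\alpha=1/h(z')$ would be $\ge 1$, the constraint $\alpha\le 1$ binds, and the optimum is $\alpha^*=1$, $z^*=z_l$ — Cases $1(i)$ and $2(i)$; if $h(z')>1$ and $T(\cdot,\alpha^*)$ is effectively unimodal — automatic in Case 1, and the content of the sub‑cases $h(z')\le h(z_2)$ and $h(z')\ge h(z_1)$ in Case 2, where $z'$ sits on an injective branch of $h$ — then $m$ increases then decreases with maximum at $\alpha^*=1/h(z')=\tfrac{\hi-1}{\hi-\lo}-\tfrac1{f(z')(\hi-\lo)}$, a genuine saddle point at $z^*=z'$ — Cases $1(ii)$, $2(ii)$, $2(iv)$. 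In every such case the stated closed form for $T(z^*,\alpha^*)$ follows by substituting $z=z_l$, respectively $z=z'$ (at which the denominator of $T$ collapses to $\lo (z')^{\lo-1}=\hi (z')^{\hi-1}$, independent of $\alpha$), into \eqref{eq:transformed_threshold_function}.

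What remains, and where the main difficulty lies, is Case $2(iii)$: $h(z')\in(h(z_2),h(z_1))$. For $\alpha$ with $1/\alpha\in(h(z_2),h(z_1))$, $T(\cdot,\alpha)$ has two local minima $\tilde z_1(\alpha)\in(z_l,z_1)$ and $\tilde z_2(\alpha)\in(z_2,z_r)$, so $m(\alpha)=\min\{T(\tilde z_1(\alpha),\alpha),T(\tilde z_2(\alpha),\alpha)\}$ is the lower envelope of two smooth branches whose derivatives, by the envelope theorem, have the signs of $\tilde z_1(\alpha)-z'$ and $\tilde z_2(\alpha)-z'$. The heart of the proof is to show that over the admissible $\alpha$‑interval each $\tilde z_i(\alpha)$ stays on one fixed side of $z'$ — using $\tilde z_1<z_1<z_2<\tilde z_2$, the position of $z'$ relative to $z_1$ and $z_2$, and the fact (noted before the theorem) that $z'\notin\{z_1,z_2\}$ — so that one branch is strictly decreasing and the other strictly increasing; then $m$ is single‑peaked, its peak is attained exactly where the two branches cross, i.e. where $T(\tilde z_1,\alpha)=T(\tilde z_2,\alpha)$ and therefore $1/\alpha^*=h(\tilde z_1)=h(\tilde z_2)=h(z^*)=h(z^{**})$, this crossing is unique (the difference of two strictly monotone functions), and it is located by binary search over the interval $[1/h(\zup),1/h(\zlo)]$ stated in the theorem, whose three possible forms reflect which branch of $h$ carries $z'$. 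I expect this uniform ``sidedness'' claim — and deriving from it the monotonicity of the two envelope branches, the existence and uniqueness of the crossing, and the correctness of the initial brackets — to be the main obstacle; once it is established the theorem follows.
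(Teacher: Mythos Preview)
Your proposal is correct and follows essentially the same route as the paper: identify interior critical points via $1/\alpha=h(z)$, classify them through $\operatorname{sign} h'=\operatorname{sign} g$, and analyze the outer maximization by tracking the active critical point relative to $z'$. The only cosmetic difference is that the paper parameterizes the critical-value curve by $z$ (writing $\tilde T(z)=T(z,1/h(z))$ and computing $\partial_z\tilde T$ directly, which yields exactly your envelope-theorem sign relation), whereas you parameterize by $\alpha$ and invoke the envelope theorem; the Case $2(iii)$ ``sidedness'' argument you flag as the main obstacle is handled in the paper precisely as you outline.
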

\begin{proof_sketch}
Assume first that $\alpha\in(0,1]$ is arbitrary but fixed, that is we are looking for a
global minimum of \eqref{eq:transformed_threshold_function} in $z$-direction.
Since 
$\lim_{z\to 0} T(z)=\lim_{z\to 1} T(z)=+\infty$ 
and $T(z)$ is continuous for $z\in(0,1)$, a global minimum must be a
point where the first derivative of $T(z)$ is zero, that is a critical point.
Let $\tilde{z}$ be a critical point of $T(z)$
then it must hold $\tilde{z} \in[z_l,z_r)$ and $\alpha=1/h(\tilde{z})$.

Consider the case $\min g(z)> 0$. (The case $\min g(z)=0$ can be handled analogously).
Since $\frac{\partial h(z)}{\partial z}>0 \Leftrightarrow g(z)>0$, the function
$h(z)$ is monotonically increasing in $\in[z_l,z_r)$. Furthermore it holds, if $g(\tilde{z})>0$ then $\tilde{z}$ is a local minimum point of $T(z)$.
It follows that for each $\alpha$ there is only one critical point $\tilde{z}$ and according to
the monotonicity of $T(z)$ this must be a global minimum point. Now consider the function of critical points $\tilde{T}(z):=T(z,1/h(z))$ of $T(z,\alpha)$.
It holds that
\begin{equation*}
 \forall z<z' \colon \frac{\partial \tilde{T}(z)}{\partial z} > 0 \Leftrightarrow g(z)>0 \text{ and }
 \forall z>z' \colon \frac{\partial \tilde{T}(z)}{\partial z} < 0 \Leftrightarrow g(z)>0 \ .
\end{equation*}
It follows that the function of critical points has a global maximum at $z'=\left( \frac{\lo}{\hi} \right)^{\frac{1}{\hi-\lo}}$,
where $z'$ is at the same time a global minimum of $T(z,\alpha)$ in $z$-direction.
If $h(z')>1$ then $\alpha=1/h(z')\in(0,1)$ and the optimum point $(z^*,\alpha^*)$ is $(z',1/h(z'))$, which is the only saddle point of
$T(z,\alpha)$. If $h(z')\leq 1$ then because of the monotonicity of $\tilde{T}(z)$ 
the solution for $\alpha^*$ is $1$ (degenerated solution). Since $h(z_l)=1$ it follows that that $(z^*,\alpha^*)=(z_l,1)$.

Consider the case $\min g(z)< 0$. The function $g(z)$ has exactly two roots, $z_1$ and $z_2$,
and for $z\in(z_l,z_r)$ the function $h(z)$ is strictly increasing to a local maximum at $z_1$,
is then strictly decreasing to a local minimum at $z_2$, and is strictly increasing afterwards.
Now for fixed $\alpha$ there can be more than one critical point and one has to do a case-by-case analysis.
A complete proof of the theorem is given in Appendix~\ref{app:proof_main_theorem}.
\end{proof_sketch}
\smallskip
The distinction between case 1 and case 2 of Theorem~\ref{theo:main}
can be done via solving $\frac{\partial g(z)}{\partial z}=0$, for $z\in(0,1)$,
since the function $g(z)$ has only one critical point and this point is a global minimum point. 
Hence, Theorem~\ref{theo:main} can be easily transferred into an algorithm that
determines $\alpha^*, z^*$ and $T(z^*,\alpha^*)$ for given $\vec{\dg}=(\lo,\hi)$.
(The pseudocode of such an algorithm is given at the end of Appendix~\ref{app:proof_main_theorem}.)
Some results for $c^*(\vec{\dg})=t(\lambda^*,\lo,\hi,\alpha^*)=T(z^*,\lo,\hi,\alpha^*)$
for selected $\vec{\dg}=(\lo,\hi)$ are given in Table~\ref{tab:optimal_values}
and Appendix~\ref{app:optimal_values}. 
They show that the optimal $2$-core threshold of mixed hypergraphs can be above the $2$-core threshold for $3$-uniform hypergraphs.
 
\section{Experiments}
\label{sec:experiments}
In this section we consider mixed hypergraphs $\tvhknm$ as described in Section~\ref{sec:different_hypergraph_modesl}.
For the parameters $\vec{\dg}=(\dg_1,\dg_2)\in \{(3,4), (3,8), (3,16), (3,21) \}$ and the corresponding optimal fractions of edge size $\vec{\alpha^*}$ 
we experimentally approximated the point $c^*(\vec{\dg})$ of the phase transition from empty to non-empty $2$-core.

For each fixed tuple $(\vec{\dg},\vec{\alpha^*})$ we performed the following experiments.
We fixed the number of nodes to $n=10^7$ and considered growing equidistant edge densities $c=m/n$. 
The densities covered an interval of size $0.008$ with the theoretical $2$-core threshold $c^*(\vec{\dg})$ in its center.
For each quintuple $(\dg_1,\dg_2,\alpha^*,n,c)$ we constructed $10^2$ random hypergraphs $\tvhknm$
with nodes $\{1,2,\ldots, n\}$ and  $c\cdot \alpha^*\cdot n$ edges of size $\dg_1$ and $c\cdot (1-\alpha^*)\cdot n$ edges of size $\dg_2$.
For the random choices of each edge we used the pseudo random number generator MT19937 ``Mersenne Twister'' of the
GNU Scientific Library~\cite{GNUScientific2011}. Given a concrete hypergraph we applied Algorithm~\ref{algo:peeling}
to determine if the $2$-core is empty. 

A non-empty $2$-core was considered as failure, an empty $2$-core was considered as success.
We measured the failure rate and determined
an approximation of the $2$-core threshold, via fitting the sigmoid function
\begin{equation*}
 \sigma(c;x,y) = (1+\exp( -(c-x)/y ) )^{-1}
\end{equation*}
to the measured failure rate using the ``least squares fit'' of \texttt{gnuplot}~\cite{gnuplot}.
The resulting fit parameter $x=x(\vec{\dg})$ is our approximation of the theoretical threshold~$c^*(\vec{\dg})$.
Table~\ref{tab:approximations} compares $c^*(\vec{\dg})$ and $x(\vec{\dg})$.
The quality of the approximation is quantified in terms of the sum of squares of residuals $\sum_{\mathrm{res}}$.
The results show a difference of theoretical and experimentally estimated threshold of less than $2 \cdot 10^{-4}$.
The corresponding plots of the measured failure rates and the fit function
are shown in Figures~\ref{fig:thresh_3-4},~\ref{fig:thresh_3-8},~\ref{fig:thresh_3-16} and \ref{fig:thresh_3-21}.
\begin{center}
\begin{minipage}{0.5\textwidth}
 \centering
 \scalebox{\imgScale}{
\begingroup
  \makeatletter
  \providecommand\color[2][]{%
    \GenericError{(gnuplot) \space\space\space\@spaces}{%
      Package color not loaded in conjunction with
      terminal option `colourtext'%
    }{See the gnuplot documentation for explanation.%
    }{Either use 'blacktext' in gnuplot or load the package
      color.sty in LaTeX.}%
    \renewcommand\color[2][]{}%
  }%
  \providecommand\includegraphics[2][]{%
    \GenericError{(gnuplot) \space\space\space\@spaces}{%
      Package graphicx or graphics not loaded%
    }{See the gnuplot documentation for explanation.%
    }{The gnuplot epslatex terminal needs graphicx.sty or graphics.sty.}%
    \renewcommand\includegraphics[2][]{}%
  }%
  \providecommand\rotatebox[2]{#2}%
  \@ifundefined{ifGPcolor}{%
    \newif\ifGPcolor
    \GPcolorfalse
  }{}%
  \@ifundefined{ifGPblacktext}{%
    \newif\ifGPblacktext
    \GPblacktexttrue
  }{}%
  \let\gplgaddtomacro\g@addto@macro
  \gdef\gplbacktext{}%
  \gdef\gplfronttext{}%
  \makeatother
  \ifGPblacktext
    \def\colorrgb#1{}%
    \def\colorgray#1{}%
  \else
    \ifGPcolor
      \def\colorrgb#1{\color[rgb]{#1}}%
      \def\colorgray#1{\color[gray]{#1}}%
      \expandafter\def\csname LTw\endcsname{\color{white}}%
      \expandafter\def\csname LTb\endcsname{\color{black}}%
      \expandafter\def\csname LTa\endcsname{\color{black}}%
      \expandafter\def\csname LT0\endcsname{\color[rgb]{1,0,0}}%
      \expandafter\def\csname LT1\endcsname{\color[rgb]{0,1,0}}%
      \expandafter\def\csname LT2\endcsname{\color[rgb]{0,0,1}}%
      \expandafter\def\csname LT3\endcsname{\color[rgb]{1,0,1}}%
      \expandafter\def\csname LT4\endcsname{\color[rgb]{0,1,1}}%
      \expandafter\def\csname LT5\endcsname{\color[rgb]{1,1,0}}%
      \expandafter\def\csname LT6\endcsname{\color[rgb]{0,0,0}}%
      \expandafter\def\csname LT7\endcsname{\color[rgb]{1,0.3,0}}%
      \expandafter\def\csname LT8\endcsname{\color[rgb]{0.5,0.5,0.5}}%
    \else
      \def\colorrgb#1{\color{black}}%
      \def\colorgray#1{\color[gray]{#1}}%
      \expandafter\def\csname LTw\endcsname{\color{white}}%
      \expandafter\def\csname LTb\endcsname{\color{black}}%
      \expandafter\def\csname LTa\endcsname{\color{black}}%
      \expandafter\def\csname LT0\endcsname{\color{black}}%
      \expandafter\def\csname LT1\endcsname{\color{black}}%
      \expandafter\def\csname LT2\endcsname{\color{black}}%
      \expandafter\def\csname LT3\endcsname{\color{black}}%
      \expandafter\def\csname LT4\endcsname{\color{black}}%
      \expandafter\def\csname LT5\endcsname{\color{black}}%
      \expandafter\def\csname LT6\endcsname{\color{black}}%
      \expandafter\def\csname LT7\endcsname{\color{black}}%
      \expandafter\def\csname LT8\endcsname{\color{black}}%
    \fi
  \fi
  \setlength{\unitlength}{0.0500bp}%
  \begin{picture}(7200.00,5040.00)%
    \gplgaddtomacro\gplbacktext{%
      \csname LTb\endcsname%
      \put(1078,767){\makebox(0,0)[r]{\strut{} 0}}%
      \put(1078,1131){\makebox(0,0)[r]{\strut{} 0.1}}%
      \put(1078,1496){\makebox(0,0)[r]{\strut{} 0.2}}%
      \put(1078,1860){\makebox(0,0)[r]{\strut{} 0.3}}%
      \put(1078,2224){\makebox(0,0)[r]{\strut{} 0.4}}%
      \put(1078,2589){\makebox(0,0)[r]{\strut{} 0.5}}%
      \put(1078,2953){\makebox(0,0)[r]{\strut{} 0.6}}%
      \put(1078,3318){\makebox(0,0)[r]{\strut{} 0.7}}%
      \put(1078,3682){\makebox(0,0)[r]{\strut{} 0.8}}%
      \put(1078,4046){\makebox(0,0)[r]{\strut{} 0.9}}%
      \put(1078,4411){\makebox(0,0)[r]{\strut{} 1}}%
      \put(1623,484){\makebox(0,0){\strut{} 0.818}}%
      \put(2322,484){\makebox(0,0){\strut{} 0.819}}%
      \put(3022,484){\makebox(0,0){\strut{} 0.82}}%
      \put(3721,484){\makebox(0,0){\strut{} 0.821}}%
      \put(4421,484){\makebox(0,0){\strut{} 0.822}}%
      \put(5120,484){\makebox(0,0){\strut{} 0.823}}%
      \put(5820,484){\makebox(0,0){\strut{} 0.824}}%
      \put(6519,484){\makebox(0,0){\strut{} 0.825}}%
      \put(308,2771){\rotatebox{-270}{\makebox(0,0){\strut{}failure rate among $100$ random hypergraphs}}}%
      \put(4071,154){\makebox(0,0){\strut{}$c$}}%
      \put(4117,967){\makebox(0,0)[l]{\strut{}$x=0.821466$}}%
      \put(4117,1368){\makebox(0,0)[l]{\strut{}$\sum_{\mathrm{res}}=0.00536087$}}%
    }%
    \gplgaddtomacro\gplfronttext{%
      \csname LTb\endcsname%
      \put(2260,4602){\makebox(0,0)[l]{\strut{}measured data}}%
      \csname LTb\endcsname%
      \put(2260,4382){\makebox(0,0)[l]{\strut{}$(1+e^{-(c-x)/y)})^{-1}$}}%
    }%
    \gplbacktext
    \put(0,0){\includegraphics{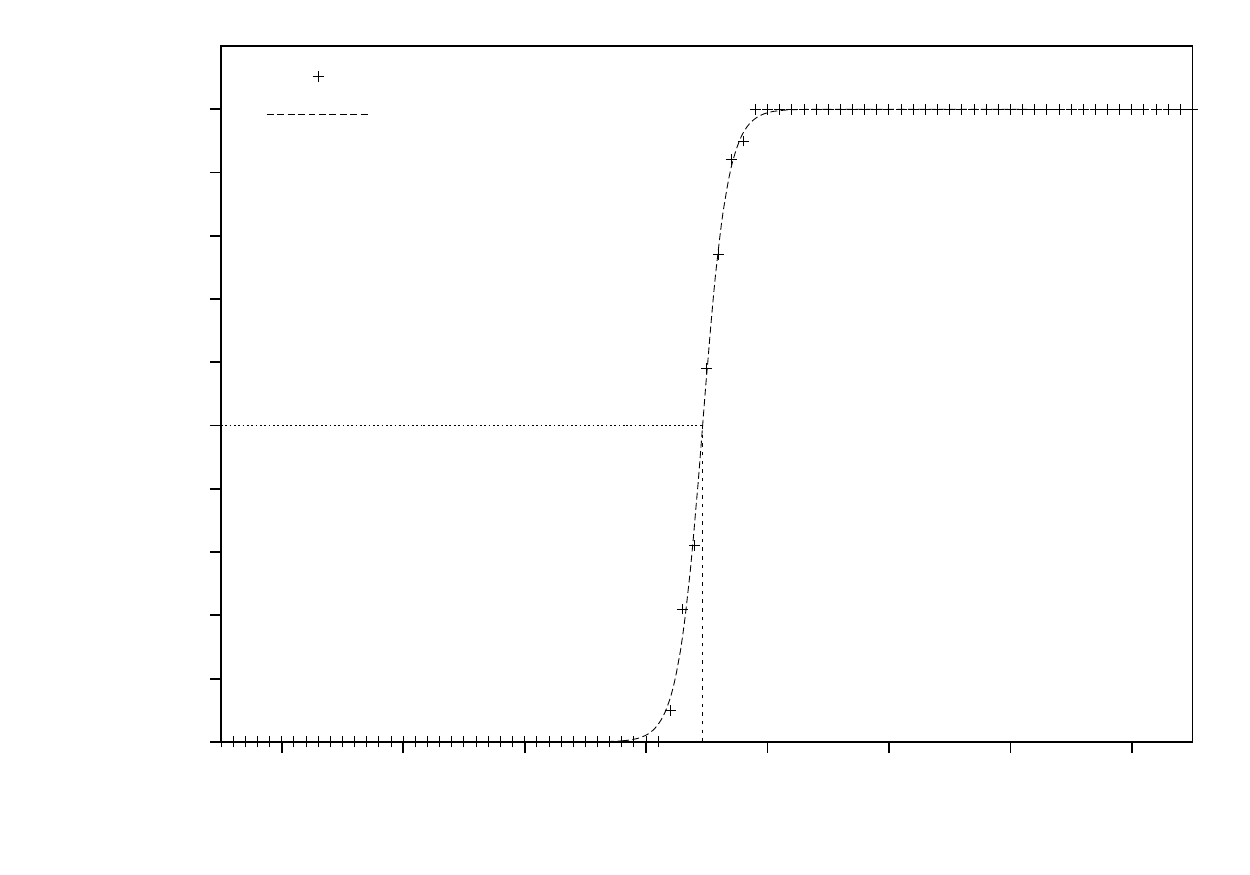}}%
    \gplfronttext
  \end{picture}%
\endgroup
}
 \parbox{0.8\textwidth}{
 \captionof{figure}{\label{fig:thresh_3-4}$(\dg_1,\dg_2)=(3,4)$}}
\end{minipage}\hfill
\begin{minipage}{0.5\textwidth}
 \centering
 \scalebox{\imgScale}{
\begingroup
  \makeatletter
  \providecommand\color[2][]{%
    \GenericError{(gnuplot) \space\space\space\@spaces}{%
      Package color not loaded in conjunction with
      terminal option `colourtext'%
    }{See the gnuplot documentation for explanation.%
    }{Either use 'blacktext' in gnuplot or load the package
      color.sty in LaTeX.}%
    \renewcommand\color[2][]{}%
  }%
  \providecommand\includegraphics[2][]{%
    \GenericError{(gnuplot) \space\space\space\@spaces}{%
      Package graphicx or graphics not loaded%
    }{See the gnuplot documentation for explanation.%
    }{The gnuplot epslatex terminal needs graphicx.sty or graphics.sty.}%
    \renewcommand\includegraphics[2][]{}%
  }%
  \providecommand\rotatebox[2]{#2}%
  \@ifundefined{ifGPcolor}{%
    \newif\ifGPcolor
    \GPcolorfalse
  }{}%
  \@ifundefined{ifGPblacktext}{%
    \newif\ifGPblacktext
    \GPblacktexttrue
  }{}%
  \let\gplgaddtomacro\g@addto@macro
  \gdef\gplbacktext{}%
  \gdef\gplfronttext{}%
  \makeatother
  \ifGPblacktext
    \def\colorrgb#1{}%
    \def\colorgray#1{}%
  \else
    \ifGPcolor
      \def\colorrgb#1{\color[rgb]{#1}}%
      \def\colorgray#1{\color[gray]{#1}}%
      \expandafter\def\csname LTw\endcsname{\color{white}}%
      \expandafter\def\csname LTb\endcsname{\color{black}}%
      \expandafter\def\csname LTa\endcsname{\color{black}}%
      \expandafter\def\csname LT0\endcsname{\color[rgb]{1,0,0}}%
      \expandafter\def\csname LT1\endcsname{\color[rgb]{0,1,0}}%
      \expandafter\def\csname LT2\endcsname{\color[rgb]{0,0,1}}%
      \expandafter\def\csname LT3\endcsname{\color[rgb]{1,0,1}}%
      \expandafter\def\csname LT4\endcsname{\color[rgb]{0,1,1}}%
      \expandafter\def\csname LT5\endcsname{\color[rgb]{1,1,0}}%
      \expandafter\def\csname LT6\endcsname{\color[rgb]{0,0,0}}%
      \expandafter\def\csname LT7\endcsname{\color[rgb]{1,0.3,0}}%
      \expandafter\def\csname LT8\endcsname{\color[rgb]{0.5,0.5,0.5}}%
    \else
      \def\colorrgb#1{\color{black}}%
      \def\colorgray#1{\color[gray]{#1}}%
      \expandafter\def\csname LTw\endcsname{\color{white}}%
      \expandafter\def\csname LTb\endcsname{\color{black}}%
      \expandafter\def\csname LTa\endcsname{\color{black}}%
      \expandafter\def\csname LT0\endcsname{\color{black}}%
      \expandafter\def\csname LT1\endcsname{\color{black}}%
      \expandafter\def\csname LT2\endcsname{\color{black}}%
      \expandafter\def\csname LT3\endcsname{\color{black}}%
      \expandafter\def\csname LT4\endcsname{\color{black}}%
      \expandafter\def\csname LT5\endcsname{\color{black}}%
      \expandafter\def\csname LT6\endcsname{\color{black}}%
      \expandafter\def\csname LT7\endcsname{\color{black}}%
      \expandafter\def\csname LT8\endcsname{\color{black}}%
    \fi
  \fi
  \setlength{\unitlength}{0.0500bp}%
  \begin{picture}(7200.00,5040.00)%
    \gplgaddtomacro\gplbacktext{%
      \csname LTb\endcsname%
      \put(1078,767){\makebox(0,0)[r]{\strut{} 0}}%
      \put(1078,1131){\makebox(0,0)[r]{\strut{} 0.1}}%
      \put(1078,1496){\makebox(0,0)[r]{\strut{} 0.2}}%
      \put(1078,1860){\makebox(0,0)[r]{\strut{} 0.3}}%
      \put(1078,2224){\makebox(0,0)[r]{\strut{} 0.4}}%
      \put(1078,2589){\makebox(0,0)[r]{\strut{} 0.5}}%
      \put(1078,2953){\makebox(0,0)[r]{\strut{} 0.6}}%
      \put(1078,3318){\makebox(0,0)[r]{\strut{} 0.7}}%
      \put(1078,3682){\makebox(0,0)[r]{\strut{} 0.8}}%
      \put(1078,4046){\makebox(0,0)[r]{\strut{} 0.9}}%
      \put(1078,4411){\makebox(0,0)[r]{\strut{} 1}}%
      \put(1693,484){\makebox(0,0){\strut{} 0.848}}%
      \put(2392,484){\makebox(0,0){\strut{} 0.849}}%
      \put(3092,484){\makebox(0,0){\strut{} 0.85}}%
      \put(3791,484){\makebox(0,0){\strut{} 0.851}}%
      \put(4491,484){\makebox(0,0){\strut{} 0.852}}%
      \put(5190,484){\makebox(0,0){\strut{} 0.853}}%
      \put(5890,484){\makebox(0,0){\strut{} 0.854}}%
      \put(6589,484){\makebox(0,0){\strut{} 0.855}}%
      \put(308,2771){\rotatebox{-270}{\makebox(0,0){\strut{}failure rate among $100$ random hypergraphs}}}%
      \put(4071,154){\makebox(0,0){\strut{}$c$}}%
      \put(4108,967){\makebox(0,0)[l]{\strut{}$x=0.851353$}}%
      \put(4108,1368){\makebox(0,0)[l]{\strut{}$\sum_{\mathrm{res}}=0.00175451$}}%
    }%
    \gplgaddtomacro\gplfronttext{%
      \csname LTb\endcsname%
      \put(2260,4602){\makebox(0,0)[l]{\strut{}measured data}}%
      \csname LTb\endcsname%
      \put(2260,4382){\makebox(0,0)[l]{\strut{}$(1+e^{-(c-x)/y)})^{-1}$}}%
    }%
    \gplbacktext
    \put(0,0){\includegraphics{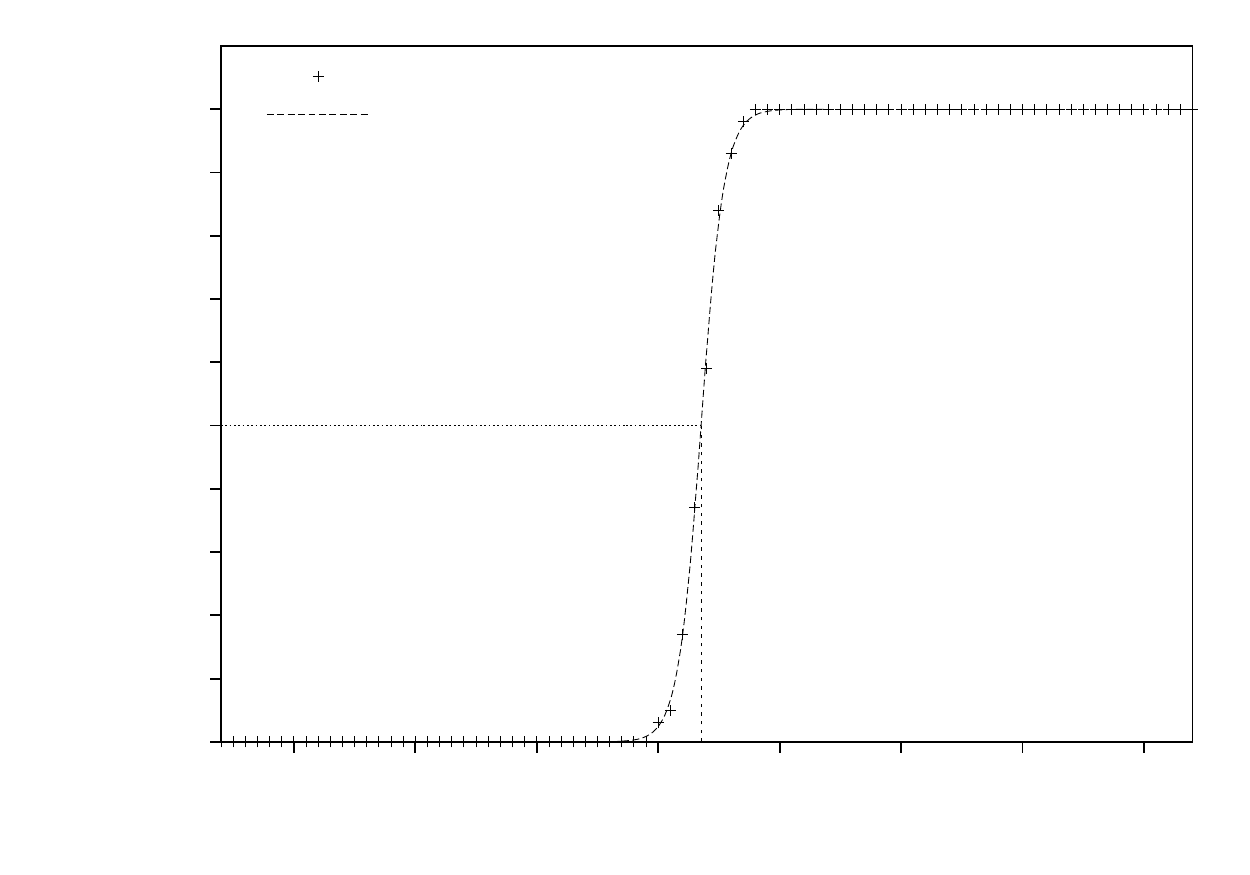}}%
    \gplfronttext
  \end{picture}%
\endgroup
}
 \parbox{0.8\textwidth}{
 \captionof{figure}{\label{fig:thresh_3-8}$(\dg_1,\dg_2)=(3,8)$}}
\end{minipage}\hfill
\vspace{0.2cm}
\noindent
\begin{minipage}{0.5\textwidth}
 \centering
 \scalebox{\imgScale}{
\begingroup
  \makeatletter
  \providecommand\color[2][]{%
    \GenericError{(gnuplot) \space\space\space\@spaces}{%
      Package color not loaded in conjunction with
      terminal option `colourtext'%
    }{See the gnuplot documentation for explanation.%
    }{Either use 'blacktext' in gnuplot or load the package
      color.sty in LaTeX.}%
    \renewcommand\color[2][]{}%
  }%
  \providecommand\includegraphics[2][]{%
    \GenericError{(gnuplot) \space\space\space\@spaces}{%
      Package graphicx or graphics not loaded%
    }{See the gnuplot documentation for explanation.%
    }{The gnuplot epslatex terminal needs graphicx.sty or graphics.sty.}%
    \renewcommand\includegraphics[2][]{}%
  }%
  \providecommand\rotatebox[2]{#2}%
  \@ifundefined{ifGPcolor}{%
    \newif\ifGPcolor
    \GPcolorfalse
  }{}%
  \@ifundefined{ifGPblacktext}{%
    \newif\ifGPblacktext
    \GPblacktexttrue
  }{}%
  \let\gplgaddtomacro\g@addto@macro
  \gdef\gplbacktext{}%
  \gdef\gplfronttext{}%
  \makeatother
  \ifGPblacktext
    \def\colorrgb#1{}%
    \def\colorgray#1{}%
  \else
    \ifGPcolor
      \def\colorrgb#1{\color[rgb]{#1}}%
      \def\colorgray#1{\color[gray]{#1}}%
      \expandafter\def\csname LTw\endcsname{\color{white}}%
      \expandafter\def\csname LTb\endcsname{\color{black}}%
      \expandafter\def\csname LTa\endcsname{\color{black}}%
      \expandafter\def\csname LT0\endcsname{\color[rgb]{1,0,0}}%
      \expandafter\def\csname LT1\endcsname{\color[rgb]{0,1,0}}%
      \expandafter\def\csname LT2\endcsname{\color[rgb]{0,0,1}}%
      \expandafter\def\csname LT3\endcsname{\color[rgb]{1,0,1}}%
      \expandafter\def\csname LT4\endcsname{\color[rgb]{0,1,1}}%
      \expandafter\def\csname LT5\endcsname{\color[rgb]{1,1,0}}%
      \expandafter\def\csname LT6\endcsname{\color[rgb]{0,0,0}}%
      \expandafter\def\csname LT7\endcsname{\color[rgb]{1,0.3,0}}%
      \expandafter\def\csname LT8\endcsname{\color[rgb]{0.5,0.5,0.5}}%
    \else
      \def\colorrgb#1{\color{black}}%
      \def\colorgray#1{\color[gray]{#1}}%
      \expandafter\def\csname LTw\endcsname{\color{white}}%
      \expandafter\def\csname LTb\endcsname{\color{black}}%
      \expandafter\def\csname LTa\endcsname{\color{black}}%
      \expandafter\def\csname LT0\endcsname{\color{black}}%
      \expandafter\def\csname LT1\endcsname{\color{black}}%
      \expandafter\def\csname LT2\endcsname{\color{black}}%
      \expandafter\def\csname LT3\endcsname{\color{black}}%
      \expandafter\def\csname LT4\endcsname{\color{black}}%
      \expandafter\def\csname LT5\endcsname{\color{black}}%
      \expandafter\def\csname LT6\endcsname{\color{black}}%
      \expandafter\def\csname LT7\endcsname{\color{black}}%
      \expandafter\def\csname LT8\endcsname{\color{black}}%
    \fi
  \fi
  \setlength{\unitlength}{0.0500bp}%
  \begin{picture}(7200.00,5040.00)%
    \gplgaddtomacro\gplbacktext{%
      \csname LTb\endcsname%
      \put(1078,767){\makebox(0,0)[r]{\strut{} 0}}%
      \put(1078,1131){\makebox(0,0)[r]{\strut{} 0.1}}%
      \put(1078,1496){\makebox(0,0)[r]{\strut{} 0.2}}%
      \put(1078,1860){\makebox(0,0)[r]{\strut{} 0.3}}%
      \put(1078,2224){\makebox(0,0)[r]{\strut{} 0.4}}%
      \put(1078,2589){\makebox(0,0)[r]{\strut{} 0.5}}%
      \put(1078,2953){\makebox(0,0)[r]{\strut{} 0.6}}%
      \put(1078,3318){\makebox(0,0)[r]{\strut{} 0.7}}%
      \put(1078,3682){\makebox(0,0)[r]{\strut{} 0.8}}%
      \put(1078,4046){\makebox(0,0)[r]{\strut{} 0.9}}%
      \put(1078,4411){\makebox(0,0)[r]{\strut{} 1}}%
      \put(1343,484){\makebox(0,0){\strut{} 0.907}}%
      \put(2042,484){\makebox(0,0){\strut{} 0.908}}%
      \put(2742,484){\makebox(0,0){\strut{} 0.909}}%
      \put(3441,484){\makebox(0,0){\strut{} 0.91}}%
      \put(4141,484){\makebox(0,0){\strut{} 0.911}}%
      \put(4840,484){\makebox(0,0){\strut{} 0.912}}%
      \put(5540,484){\makebox(0,0){\strut{} 0.913}}%
      \put(6239,484){\makebox(0,0){\strut{} 0.914}}%
      \put(308,2771){\rotatebox{-270}{\makebox(0,0){\strut{}failure rate among $100$ random hypergraphs}}}%
      \put(4071,154){\makebox(0,0){\strut{}$c$}}%
      \put(4004,967){\makebox(0,0)[l]{\strut{}$x=0.910704$}}%
      \put(4004,1368){\makebox(0,0)[l]{\strut{}$\sum_{\mathrm{res}}=0.00347648$}}%
    }%
    \gplgaddtomacro\gplfronttext{%
      \csname LTb\endcsname%
      \put(2260,4602){\makebox(0,0)[l]{\strut{}measured data}}%
      \csname LTb\endcsname%
      \put(2260,4382){\makebox(0,0)[l]{\strut{}$(1+e^{-(c-x)/y)})^{-1}$}}%
    }%
    \gplbacktext
    \put(0,0){\includegraphics{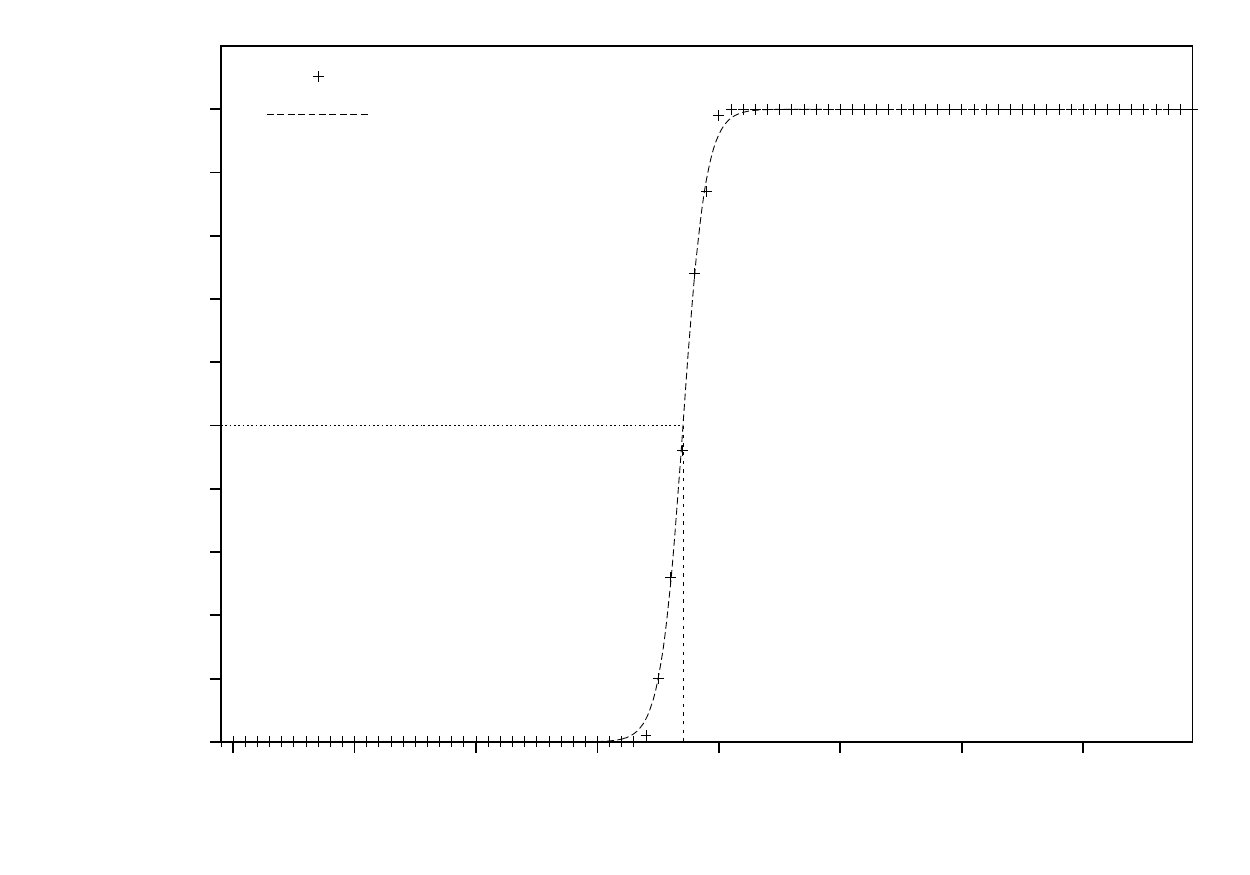}}%
    \gplfronttext
  \end{picture}%
\endgroup
}
 \parbox{0.8\textwidth}{
 \captionof{figure}{\label{fig:thresh_3-16}$(\dg_1,\dg_2)=(3,16)$}}
\end{minipage}\hfill
\begin{minipage}{0.5\textwidth}
 \centering
 \scalebox{\imgScale}{
\begingroup
  \makeatletter
  \providecommand\color[2][]{%
    \GenericError{(gnuplot) \space\space\space\@spaces}{%
      Package color not loaded in conjunction with
      terminal option `colourtext'%
    }{See the gnuplot documentation for explanation.%
    }{Either use 'blacktext' in gnuplot or load the package
      color.sty in LaTeX.}%
    \renewcommand\color[2][]{}%
  }%
  \providecommand\includegraphics[2][]{%
    \GenericError{(gnuplot) \space\space\space\@spaces}{%
      Package graphicx or graphics not loaded%
    }{See the gnuplot documentation for explanation.%
    }{The gnuplot epslatex terminal needs graphicx.sty or graphics.sty.}%
    \renewcommand\includegraphics[2][]{}%
  }%
  \providecommand\rotatebox[2]{#2}%
  \@ifundefined{ifGPcolor}{%
    \newif\ifGPcolor
    \GPcolorfalse
  }{}%
  \@ifundefined{ifGPblacktext}{%
    \newif\ifGPblacktext
    \GPblacktexttrue
  }{}%
  \let\gplgaddtomacro\g@addto@macro
  \gdef\gplbacktext{}%
  \gdef\gplfronttext{}%
  \makeatother
  \ifGPblacktext
    \def\colorrgb#1{}%
    \def\colorgray#1{}%
  \else
    \ifGPcolor
      \def\colorrgb#1{\color[rgb]{#1}}%
      \def\colorgray#1{\color[gray]{#1}}%
      \expandafter\def\csname LTw\endcsname{\color{white}}%
      \expandafter\def\csname LTb\endcsname{\color{black}}%
      \expandafter\def\csname LTa\endcsname{\color{black}}%
      \expandafter\def\csname LT0\endcsname{\color[rgb]{1,0,0}}%
      \expandafter\def\csname LT1\endcsname{\color[rgb]{0,1,0}}%
      \expandafter\def\csname LT2\endcsname{\color[rgb]{0,0,1}}%
      \expandafter\def\csname LT3\endcsname{\color[rgb]{1,0,1}}%
      \expandafter\def\csname LT4\endcsname{\color[rgb]{0,1,1}}%
      \expandafter\def\csname LT5\endcsname{\color[rgb]{1,1,0}}%
      \expandafter\def\csname LT6\endcsname{\color[rgb]{0,0,0}}%
      \expandafter\def\csname LT7\endcsname{\color[rgb]{1,0.3,0}}%
      \expandafter\def\csname LT8\endcsname{\color[rgb]{0.5,0.5,0.5}}%
    \else
      \def\colorrgb#1{\color{black}}%
      \def\colorgray#1{\color[gray]{#1}}%
      \expandafter\def\csname LTw\endcsname{\color{white}}%
      \expandafter\def\csname LTb\endcsname{\color{black}}%
      \expandafter\def\csname LTa\endcsname{\color{black}}%
      \expandafter\def\csname LT0\endcsname{\color{black}}%
      \expandafter\def\csname LT1\endcsname{\color{black}}%
      \expandafter\def\csname LT2\endcsname{\color{black}}%
      \expandafter\def\csname LT3\endcsname{\color{black}}%
      \expandafter\def\csname LT4\endcsname{\color{black}}%
      \expandafter\def\csname LT5\endcsname{\color{black}}%
      \expandafter\def\csname LT6\endcsname{\color{black}}%
      \expandafter\def\csname LT7\endcsname{\color{black}}%
      \expandafter\def\csname LT8\endcsname{\color{black}}%
    \fi
  \fi
  \setlength{\unitlength}{0.0500bp}%
  \begin{picture}(7200.00,5040.00)%
    \gplgaddtomacro\gplbacktext{%
      \csname LTb\endcsname%
      \put(1078,767){\makebox(0,0)[r]{\strut{} 0}}%
      \put(1078,1131){\makebox(0,0)[r]{\strut{} 0.1}}%
      \put(1078,1496){\makebox(0,0)[r]{\strut{} 0.2}}%
      \put(1078,1860){\makebox(0,0)[r]{\strut{} 0.3}}%
      \put(1078,2224){\makebox(0,0)[r]{\strut{} 0.4}}%
      \put(1078,2589){\makebox(0,0)[r]{\strut{} 0.5}}%
      \put(1078,2953){\makebox(0,0)[r]{\strut{} 0.6}}%
      \put(1078,3318){\makebox(0,0)[r]{\strut{} 0.7}}%
      \put(1078,3682){\makebox(0,0)[r]{\strut{} 0.8}}%
      \put(1078,4046){\makebox(0,0)[r]{\strut{} 0.9}}%
      \put(1078,4411){\makebox(0,0)[r]{\strut{} 1}}%
      \put(1273,484){\makebox(0,0){\strut{} 0.916}}%
      \put(1973,484){\makebox(0,0){\strut{} 0.917}}%
      \put(2672,484){\makebox(0,0){\strut{} 0.918}}%
      \put(3372,484){\makebox(0,0){\strut{} 0.919}}%
      \put(4071,484){\makebox(0,0){\strut{} 0.92}}%
      \put(4771,484){\makebox(0,0){\strut{} 0.921}}%
      \put(5470,484){\makebox(0,0){\strut{} 0.922}}%
      \put(6170,484){\makebox(0,0){\strut{} 0.923}}%
      \put(6869,484){\makebox(0,0){\strut{} 0.924}}%
      \put(308,2771){\rotatebox{-270}{\makebox(0,0){\strut{}failure rate among $100$ random hypergraphs}}}%
      \put(4071,154){\makebox(0,0){\strut{}$c$}}%
      \put(4035,967){\makebox(0,0)[l]{\strut{}$x=0.919848$}}%
      \put(4035,1368){\makebox(0,0)[l]{\strut{}$\sum_{\mathrm{res}}=0.0109112$}}%
    }%
    \gplgaddtomacro\gplfronttext{%
      \csname LTb\endcsname%
      \put(2260,4602){\makebox(0,0)[l]{\strut{}measured data}}%
      \csname LTb\endcsname%
      \put(2260,4382){\makebox(0,0)[l]{\strut{}$(1+e^{-(c-x)/y)})^{-1}$}}%
    }%
    \gplbacktext
    \put(0,0){\includegraphics{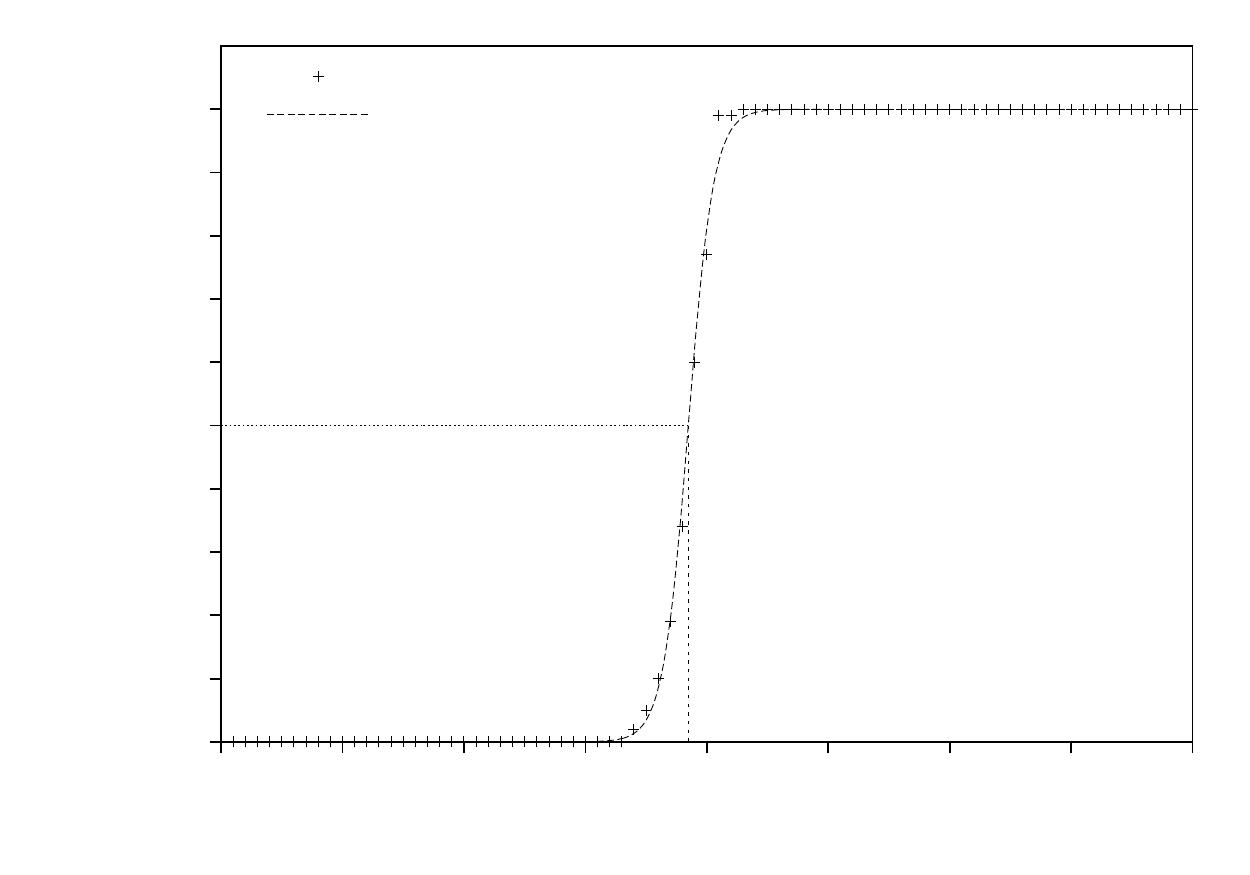}}%
    \gplfronttext
  \end{picture}%
\endgroup
}
 \parbox{0.8\textwidth}{
 \captionof{figure}{\label{fig:thresh_3-21}$(\dg_1,\dg_2)=(3,21)$}}
\end{minipage}
\end{center}
\begin{center}
{\small
\begin{tabular}{l|cccc}
$(\dg_1,\dg_2)$       &  $(3,4)$ &  $(3,8)$ &  $(3,16)$ & $(3,21)$  \\\hline\hline
$c^*$                 &  0.82151 &  0.85138 &  0.91089  & 0.92004   \\
$x$                   &  0.82147 &  0.85135 &  0.91070  & 0.91985   \\
$\sum_{\mathrm{res}}$ &  0.00536 &  0.00175 &  0.00348  & 0.01091  
\end{tabular}}
\captionof{table}{\label{tab:approximations}Comparison of experimentally approximated and theoretical 2-core thresholds.
The values are rounded to the nearest multiple of $10^{-5}$.}
\end{center}

\section{Summary and Future Work}
We have shown that the threshold for the appearance of a $2$-core in mixed hypergraphs
can be larger than the $2$-core threshold for $\dg$-uniform hypergraphs, for each $\dg\geq 3$.
Moreover, for hypergraphs with two given constant edges sizes
we showed how to determine the optimal (expected) fraction
of edges of each size, that maximizes the $2$-core threshold.
The maximum threshold found for $3\leq \dg_1 \leq 6$ and $\dg_1\leq \dg_2\leq 50$
is about $0.92$ for $\vec{\dg}=(3,21)$. We \emph{conjecture} that
this is the best possible for two edge sizes.

Based on the applications of mixed hypergraphs, as for example discussed in Section~\ref{sec:applications},
the following question seems natural to ask.
Consider the hypergraph $\tvhknm$ and some fixed upper bound $\uAdg$
on the average edge size $\Adg=\sum_{i=1}^s \alpha_i \cdot \dg_i$.
\begin{quest}
Which pair of vectors $\vec{\dg}$ and $\vec{\alpha}$ 
that gives an average edge size below $\uAdg$ maximizes the threshold for the appearance of a $2$-core? 
That means we are looking for the solution of
$\max\limits_{\vec{\dg},\vec{\alpha}}\min\limits_{\lambda>0} t(\lambda,\vec{\dg},\vec{\alpha})$ under the constraint that
$\Adg\leq \uAdg$.
\end{quest}
\vspace{-0.6cm}
\subsubsection{Acknowledgment}
The author would like to thank Martin Dietzfelbinger for
many helpful suggestions and the time
he spent in discussions on the topic.
He also would like to thank Udi Wieder for pointing
him to applications of mixed hypergraphs
in LT codes, Online codes and Raptor codes.

\ifnum\arxiv=0

\bibliographystyle{splncs03} 
\bibliography{literature.bib}

\else

\fi

\vfill
\pagebreak
\appendix

\section{Proof of the Main Theorem}
\label{app:proof_main_theorem}
In this section we give the full proof of Theorem~\ref{theo:main}, i.e. we
solve the (transformed) non-linear optimization problem~\eqref{eq:transformed_optimization_problem}.
As is to be expected, the proof mainly employs methods from calculus.
\subsection{Preliminaries}
\subsubsection{Derivatives.}
At first we want to determine the partial derivatives of $T(z,\lo,\hi,\alpha)$ with respect to $z$ and $\alpha$.
To shorten and simplify notation we use the following definitions. For all $j \in \mathbb{N}$ let
\begin{align*}
 D_j(z,\lo,\hi,\alpha)&= \alpha \cdot \lo\cdot (\lo-1)^j\cdot z^{\lo-1} +   (1-\alpha) \cdot \hi \cdot (\hi-1)^j \cdot z^{\hi-1} \text{\hspace{1cm}and} \nonumber\\
 Z_j(z,\lo,\hi)  &= \lo\cdot (\lo-1)^j\cdot        z^{\lo-1} - \hi \cdot (\hi-1)^j\cdot z^{\hi-1} \ .
\end{align*}
The first partial derivatives of $T(z,\alpha)$ are
\begin{align}
\label{eq:first_derivative_z}    \frac{\partial T(z,\alpha)}{\partial z}=& \frac{1}{1-z}\cdot \frac{1}{D_0(z,\alpha)} + \frac{\ln(1-z)}{z} \cdot \frac{D_1(z,\alpha)}{D_0(z,\alpha)^2}  \text{\hspace{1cm}and}\\
\label{eq:first_derivative_alpha}\frac{\partial T(z,\alpha)}{\partial \alpha}=& \frac{\ln(1-z)\cdot Z_0(z)}{D_0(z,\alpha)^2} \ .
\end{align}  
The second partial derivatives of $T(z,\alpha)$ are
\begin{align}
\label{eq:second_derivative_zz} \frac{\partial^2 T(z,\alpha)}{(\partial z)^2} =& \frac{1}{(1-z)^2}\cdot \frac{1}{D_0(z,\alpha)} -\frac{2}{z\cdot (1-z)}\cdot \frac{D_1(z,\alpha)}{D_0(z,\alpha)^2}\\
 &+ \frac{\ln(1-z)}{z^2}\cdot \frac{D_2(z,\alpha)-D_1(z,\alpha)}{D_0(z,\alpha)^2} -\frac{2\cdot \ln(1-z)}{z^2}\cdot \frac{D_1(z,\alpha)^2}{D_0(z,\alpha)^3} \nonumber \\
 \frac{\partial^2 T(z,\alpha)}{(\partial \alpha)^2} =& -\frac{2\cdot \ln(1-z) \cdot Z_0(z)^2 }{D_0(z,\alpha)^3} \\
 \frac{\partial}{\partial z} \left(\frac{\partial T(z,\alpha)}{\partial \alpha}\right)=&
-\frac{1}{1-z}\cdot \frac{Z_0(z)}{D_0(z,\alpha)^2}+\frac{\ln(1-z)}{z}\cdot \frac{Z_1(z)}{D_0(z,\alpha)^2}  \\ 
&-\frac{2\cdot \ln(1-z)}{z}\cdot \frac{Z_0(z)\cdot D_1(z,\alpha)}{D_0(z,\alpha)^3} \nonumber \ .
\end{align}
\subsubsection{Auxiliary Functions.}
Our analysis is heavily based on three functions, 
\begin{align}
 f(z)&=\frac{-\ln(1-z)\cdot (1-z)}{z} \\
 g(z,\lo,\hi)&=f(z)\cdot (\hi-1)\cdot(\lo-1) + \frac{1}{1-z} +2 -\hi -\lo \\
 h(z,\lo,\hi)&=\frac{ \lo\cdot z^{\lo -\hi} -\hi -f(z)\cdot (\lo\cdot(\lo-1)\cdot z^{\lo-\hi}-\hi\cdot(\hi-1)) }{\hi\cdot ((\hi-1)\cdot f(z)-1)}\\
     &=\frac{Z_0(z,\lo,\hi)-f(z) \cdot Z_1(z,\lo,\hi)}{\hi \cdot z^{\hi-1} \cdot \left((\hi-1)\cdot f(z)-1\right)}  \nonumber \ ,
\end{align}
which are shown in Figures~\ref{fig:f(z)}, \ref{fig:g(z)} and \ref{fig:h(z)} for some parameters $\lo$ and $\hi$.
Furthermore we make use of the following definitions,
\begin{align*}
 z' &= \left( \frac{\lo}{\hi} \right)^{\frac{1}{\hi-\lo}} & &
 z_l = f^{-1}\left(\frac{1}{\lo-1}\right) & & z_r = f^{-1}\left(\frac{1}{\hi-1}\right) \\ 
 z_1 &= \min \{z \mid g(z)=0\} & & z_2 = \max \{z \mid g(z)=0\} \ .
\end{align*}
Our line of argument will rely on essential properties of $f,g,h$ and $z_l,z_r,z_1,z_2$ and $z'$.
Proving these properties is standard calculus but unfortunately lengthy.
Therefore the proofs of the next four lemmas are only given in 
extra sections of the appendix. We start with the three auxiliary functions.
\begin{lemma}[Properties of $f(z)$]\\\hfill
\begin{minipage}{0.5\textwidth}
\label{lem:f(z)}
Let $z\in(0,1)$, then it holds
\begin{compactenum}[$(i)$]
 \item \label{prop:f(z)_1-z}$f(z)>1-z>0$.
 \item \label{prop:f(z)_lim_0}$\lim_{z\to 0} f(z)=1$.
 \item \label{prop:f(z)_lim_1}$\lim_{z\to 1} f(z)=0$.
 \item \label{prop:f(z)_decreasing}$f(z)$ is strictly decreasing.
 \item $f(z)$ is concave.
 \item \label{prop:f(z)_z'}$f(z')>f(z_r )=\frac{1}{\hi-1}$.
 \item \label{prop:f(z)_phi}$f(z)\neq -\frac{1}{1-z'}-2+\hi+\lo$.
\item[]
\item[]
\item[]
\item[]
\end{compactenum}
\end{minipage}\hfill
\begin{minipage}{0.5\textwidth}
\centering\upshape
\scalebox{0.48}{
\begingroup
  \makeatletter
  \providecommand\color[2][]{%
    \GenericError{(gnuplot) \space\space\space\@spaces}{%
      Package color not loaded in conjunction with
      terminal option `colourtext'%
    }{See the gnuplot documentation for explanation.%
    }{Either use 'blacktext' in gnuplot or load the package
      color.sty in LaTeX.}%
    \renewcommand\color[2][]{}%
  }%
  \providecommand\includegraphics[2][]{%
    \GenericError{(gnuplot) \space\space\space\@spaces}{%
      Package graphicx or graphics not loaded%
    }{See the gnuplot documentation for explanation.%
    }{The gnuplot epslatex terminal needs graphicx.sty or graphics.sty.}%
    \renewcommand\includegraphics[2][]{}%
  }%
  \providecommand\rotatebox[2]{#2}%
  \@ifundefined{ifGPcolor}{%
    \newif\ifGPcolor
    \GPcolortrue
  }{}%
  \@ifundefined{ifGPblacktext}{%
    \newif\ifGPblacktext
    \GPblacktexttrue
  }{}%
  \let\gplgaddtomacro\g@addto@macro
  \gdef\gplbacktext{}%
  \gdef\gplfronttext{}%
  \makeatother
  \ifGPblacktext
    \def\colorrgb#1{}%
    \def\colorgray#1{}%
  \else
    \ifGPcolor
      \def\colorrgb#1{\color[rgb]{#1}}%
      \def\colorgray#1{\color[gray]{#1}}%
      \expandafter\def\csname LTw\endcsname{\color{white}}%
      \expandafter\def\csname LTb\endcsname{\color{black}}%
      \expandafter\def\csname LTa\endcsname{\color{black}}%
      \expandafter\def\csname LT0\endcsname{\color[rgb]{1,0,0}}%
      \expandafter\def\csname LT1\endcsname{\color[rgb]{0,1,0}}%
      \expandafter\def\csname LT2\endcsname{\color[rgb]{0,0,1}}%
      \expandafter\def\csname LT3\endcsname{\color[rgb]{1,0,1}}%
      \expandafter\def\csname LT4\endcsname{\color[rgb]{0,1,1}}%
      \expandafter\def\csname LT5\endcsname{\color[rgb]{1,1,0}}%
      \expandafter\def\csname LT6\endcsname{\color[rgb]{0,0,0}}%
      \expandafter\def\csname LT7\endcsname{\color[rgb]{1,0.3,0}}%
      \expandafter\def\csname LT8\endcsname{\color[rgb]{0.5,0.5,0.5}}%
    \else
      \def\colorrgb#1{\color{black}}%
      \def\colorgray#1{\color[gray]{#1}}%
      \expandafter\def\csname LTw\endcsname{\color{white}}%
      \expandafter\def\csname LTb\endcsname{\color{black}}%
      \expandafter\def\csname LTa\endcsname{\color{black}}%
      \expandafter\def\csname LT0\endcsname{\color{black}}%
      \expandafter\def\csname LT1\endcsname{\color{black}}%
      \expandafter\def\csname LT2\endcsname{\color{black}}%
      \expandafter\def\csname LT3\endcsname{\color{black}}%
      \expandafter\def\csname LT4\endcsname{\color{black}}%
      \expandafter\def\csname LT5\endcsname{\color{black}}%
      \expandafter\def\csname LT6\endcsname{\color{black}}%
      \expandafter\def\csname LT7\endcsname{\color{black}}%
      \expandafter\def\csname LT8\endcsname{\color{black}}%
    \fi
  \fi
  \setlength{\unitlength}{0.0500bp}%
  \begin{picture}(7200.00,5040.00)%
    \gplgaddtomacro\gplbacktext{%
      \csname LTb\endcsname%
      \put(858,978){\makebox(0,0)[r]{\strut{} }}%
      \put(858,1168){\makebox(0,0)[r]{\strut{} 0.1}}%
      \put(858,1569){\makebox(0,0)[r]{\strut{} 0.2}}%
      \put(858,1969){\makebox(0,0)[r]{\strut{} 0.3}}%
      \put(858,2370){\makebox(0,0)[r]{\strut{} 0.4}}%
      \put(858,2771){\makebox(0,0)[r]{\strut{} }}%
      \put(858,3172){\makebox(0,0)[r]{\strut{} 0.6}}%
      \put(858,3573){\makebox(0,0)[r]{\strut{} 0.7}}%
      \put(858,3973){\makebox(0,0)[r]{\strut{} 0.8}}%
      \put(858,4374){\makebox(0,0)[r]{\strut{} 0.9}}%
      \put(858,4775){\makebox(0,0)[r]{\strut{} 1}}%
      \put(1635,484){\makebox(0,0){\strut{} 0.1}}%
      \put(2216,484){\makebox(0,0){\strut{} 0.2}}%
      \put(2798,484){\makebox(0,0){\strut{} 0.3}}%
      \put(3379,484){\makebox(0,0){\strut{} 0.4}}%
      \put(3961,484){\makebox(0,0){\strut{} 0.5}}%
      \put(4543,484){\makebox(0,0){\strut{} 0.6}}%
      \put(5124,484){\makebox(0,0){\strut{} }}%
      \put(5213,484){\makebox(0,0){\strut{}$z_l$}}%
      \put(5706,484){\makebox(0,0){\strut{} 0.8}}%
      \put(6287,484){\makebox(0,0){\strut{} 0.9}}%
      \put(6801,484){\makebox(0,0){\strut{}$z_r$}}%
      \put(6869,484){\makebox(0,0){\strut{} }}%
      \put(3961,154){\makebox(0,0){\strut{}$z$}}%
      \put(588,2771){\makebox(0,0)[l]{\strut{}$\frac{1}{\lo-1}$}}%
      \put(588,978){\makebox(0,0)[l]{\strut{}$\frac{1}{\hi-1}$}}%
    }%
    \gplgaddtomacro\gplfronttext{%
      \csname LTb\endcsname%
      \put(5882,4602){\makebox(0,0)[r]{\strut{}$f(z)$}}%
      \csname LTb\endcsname%
      \put(5882,4382){\makebox(0,0)[r]{\strut{}$1-z$}}%
    }%
    \gplbacktext
    \put(0,0){\includegraphics{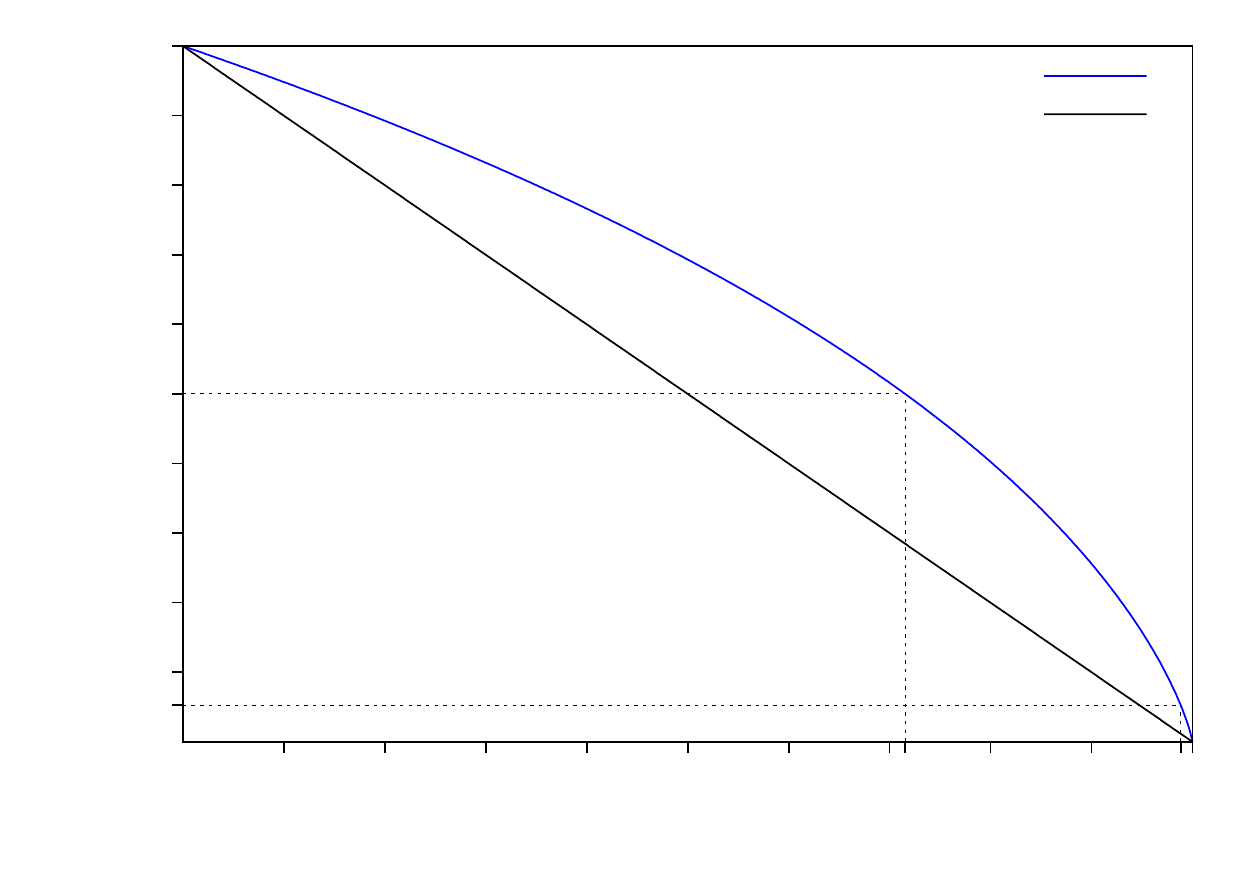}}%
    \gplfronttext
  \end{picture}%
\endgroup
}
\parbox{0.8\textwidth}{
\captionof{figure}{\label{fig:f(z)}Function $f(z)$ with $z_l$ and $z_r$ for $\lo=3, \hi=20$.}}
\end{minipage}
\end{lemma}
The proof of Lemma~\ref{lem:f(z)} is given in Appendix~\ref{app:f(z)}. A plot of $f(z)$ is shown in Figure~\ref{fig:f(z)}.

\begin{lemma}[Properties of $g(z)$]\\\hfill
\label{lem:g(z)}
Let $3 \leq \lo < \hi, \  z\in(0,1)$, then it holds
\begin{compactenum}[$(i)$]
\item \label{prop:g(z)_monotonicity} $g(z)$ is strictly decreasing, reaches a global minimum and is then strictly increasing.
The global minimum point is the only point where $\frac{\partial g(z)}{\partial z}=0$.
\item \label{prop:g(z)_left_interval}$g(z)>0$, $\forall z \in (0,z_l]$.
\item \label{prop:g(z)_right_interval}$g(z)>0$, $\forall z \in [z_r,1)$.
\item \label{prop:g(z)_zeros}If $\min g(z)<0$ then $g(z)$ has exactly two roots, say $z_1$ and $z_2$, with $z_1 <z_2$ and
$z_1,z_2\in (z_l,z_r)$.
\item \label{prop:g(z)_monotonicity_in_b}Let $z>z_l$ then it holds $g(z,\lo,\hi)>g(z,\lo,\hi+1)$.
\item \label{prop:g(z)_b'}For fixed  $\lo$ there is a threshold $\hi'$, $\hi' \geq \lo+1$, such for 
$\lo<\hi < \hi'$ it holds that $\min_z g(z,\hi)\geq 0$, and if
$\hi\geq \hi'$ then it holds $\min_z g(z,\hi)<0$. 
\end{compactenum}
\end{lemma}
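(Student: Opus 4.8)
The plan is to run everything off the derivative identity $g'(z)=(\hi-1)(\lo-1)f'(z)+(1-z)^{-2}$ with $f'(z)=(z+\ln(1-z))/z^{2}<0$ (Lemma~\ref{lem:f(z)}), the ``forward difference'' $g(z,\lo,\hi+1)-g(z,\lo,\hi)=(\lo-1)f(z)-1$, and the landmark evaluations $g(0^{+})=(\lo-2)(\hi-2)>0$, $\lim_{z\to1}g(z)=+\infty$, $g(z_l)=(1-z_l)^{-1}-(\lo-1)$, $g(z_r)=(1-z_r)^{-1}-(\hi-1)$, where the last two come from $f(z_l)=1/(\lo-1)$, $f(z_r)=1/(\hi-1)$ and are positive since $f(z)>1-z$ (Lemma~\ref{lem:f(z)}\ref{prop:f(z)_1-z}) forces $1-z_l<1/(\lo-1)$ and $1-z_r<1/(\hi-1)$. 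Because $f$ is strictly decreasing with $f(z_l)=1/(\lo-1)$, the forward difference is $>0$ for $z>z_l$ and $\ge0$ for $z\le z_l$; the first fact is exactly property~\ref{prop:g(z)_monotonicity_in_b}, and both drive properties~\ref{prop:g(z)_left_interval} and~\ref{prop:g(z)_b'}.

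For property~\ref{prop:g(z)_monotonicity} I would introduce the auxiliary function $\psi(z):=-1/((1-z)^{2}f'(z))=-z^{2}/((1-z)^{2}(z+\ln(1-z)))$, so that $g'(z)>0\iff\psi(z)>(\hi-1)(\lo-1)$. The key computation is that $\psi$ is strictly increasing on $(0,1)$: with $A:=-z-\ln(1-z)>0$ one gets $\psi'/\psi=2/z+2/(1-z)-(z/(1-z))/A$, and since $A=z^{2}/2+z^{3}/3+\cdots>z^{2}/2$ this exceeds $2/z+2/(1-z)-2/(z(1-z))=0$. As $\psi(0^{+})=2<(\hi-1)(\lo-1)$ and $\psi(z)\to+\infty$, there is a unique $z_{\min}$ (a function of $(\hi-1)(\lo-1)$, increasing in it) with $\psi(z_{\min})=(\hi-1)(\lo-1)$, with $g'<0$ to its left and $g'>0$ to its right; this is precisely the claimed shape of $g$ and the uniqueness of the critical point.

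Properties~\ref{prop:g(z)_left_interval} and~\ref{prop:g(z)_right_interval} then reduce to locating $z_{\min}$ relative to $z_l$ and $z_r$. For~\ref{prop:g(z)_right_interval}: with $w_{r}:=1-z_r$ and $\hi-1=(1-w_{r})/(-w_{r}\ln w_{r})$, the inequality $\psi(z_r)>(\hi-1)^{2}$ collapses to $\varphi(x):=x^{2}-x+1-e^{-x}>0$ for $x:=-\ln w_{r}>0$, which holds since $\varphi(0)=\varphi'(0)=0$ and $\varphi''>0$; as $(\hi-1)^{2}>(\lo-1)(\hi-1)=\psi(z_{\min})$ this forces $z_{\min}<z_r$, so $g$ is strictly increasing on $[z_r,1)$ and $g\ge g(z_r)>0$ there. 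For~\ref{prop:g(z)_left_interval} I would not try to place $z_{\min}$ above $z_l$ (it need not lie there); instead, on $(0,z_l]$ the forward difference gives $g(z,\lo,\hi)\ge g(z,\lo,\lo+1)$, so it suffices to show $\min_{z}g(z,\lo,\lo+1)>0$. At the unique critical point $u$ of $g(\cdot,\lo,\lo+1)$ one has $\psi(u)=\lo(\lo-1)$; substituting $\lo=\tfrac12(1+\sqrt{1+4\lo(\lo-1)})$ and $L:=-\ln(1-u)$ and clearing denominators reduces $g(u,\lo,\lo+1)>0$ to the quadratic-in-$L$ positivity $4L^{2}-2u(5-u)L+u^{2}(6-u)>0$, whose discriminant equals $4u^{2}(u^{2}-6u+1)$ and is negative because $u=z_{\min}(\lo(\lo-1))>3-2\sqrt2$ (indeed $\psi(3-2\sqrt2)<6\le\lo(\lo-1)=\psi(u)$ and $\psi$ is increasing). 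Property~\ref{prop:g(z)_zeros} is then immediate: if $g(z_{\min})<0$ then, since $g(0^{+})>0$ and $g\to+\infty$ at $1$, the intermediate value theorem and the established shape give exactly one root in $(0,z_{\min})$ and one in $(z_{\min},1)$ and no others, and by~\ref{prop:g(z)_left_interval},~\ref{prop:g(z)_right_interval} both lie in $(z_l,z_r)$; these are $z_{1},z_{2}$.

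Finally, for property~\ref{prop:g(z)_b'} I would show that $\mu(\hi):=\min_{z}g(z,\lo,\hi)$ is eventually strictly decreasing with $\mu(\hi)\to-\infty$ and is positive for small $\hi$. Evaluating at any fixed $z_{0}>z_l$ gives $g(z_{0},\lo,\hi)=\hi\big((\lo-1)f(z_{0})-1\big)+O(1)\to-\infty$, so $\mu(\hi)\to-\infty$. For $\hi$ with $z_{\min}(\hi)\le z_l$ the forward difference gives $\mu(\hi)=g(z_{\min}(\hi),\lo,\hi)\ge g(z_{\min}(\hi),\lo,\lo+1)\ge\min_{z}g(z,\lo,\lo+1)>0$; for $\hi$ with $z_{\min}(\hi)>z_l$, property~\ref{prop:g(z)_monotonicity_in_b} applied at $z_{\min}(\hi)$ yields $\mu(\hi+1)\le g(z_{\min}(\hi),\lo,\hi+1)<g(z_{\min}(\hi),\lo,\hi)=\mu(\hi)$, i.e.\ strict decrease. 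Since $z_{\min}(\hi)$ increases with $\hi$, the set of $\hi>\lo$ with $z_{\min}(\hi)>z_l$ is an upper tail on which $\mu$ strictly decreases to $-\infty$, while $\mu>0$ on its complement, which pins down a unique threshold $\hi'\ge\lo+1$ with $\mu(\hi)\ge0$ for $\lo<\hi<\hi'$ and $\mu(\hi)<0$ for $\hi\ge\hi'$. I expect the main obstacle to be property~\ref{prop:g(z)_left_interval}: the only non-routine points are recognizing that it collapses to $\min_{z}g(\cdot,\lo,\lo+1)>0$ and that, after eliminating $\lo$ in favour of $u=z_{\min}$, this becomes a quadratic-in-$L$ positivity controlled by the elementary estimate $z_{\min}(\lo(\lo-1))>3-2\sqrt2$; the case bookkeeping in~\ref{prop:g(z)_b'} (splitting $\hi$ according to whether $z_{\min}(\hi)\le z_l$ or $z_{\min}(\hi)>z_l$) is the other place requiring care.
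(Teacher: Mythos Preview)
Your approach is correct (after fixing one sign typo) but differs substantially from the paper's in parts~(\ref{prop:g(z)_left_interval}), (\ref{prop:g(z)_right_interval}), and~(\ref{prop:g(z)_b'}). First the typo: in your opening paragraph you write ``the forward difference is $>0$ for $z>z_l$ and $\ge0$ for $z\le z_l$''; since the difference is $(\lo-1)f(z)-1$ and $f$ is decreasing with $f(z_l)=1/(\lo-1)$, the signs are reversed --- it is $<0$ for $z>z_l$ and $\ge0$ for $z\le z_l$. Your later uses of the difference have the signs right, so this is only a slip in the summary.

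For (\ref{prop:g(z)_monotonicity}), (\ref{prop:g(z)_zeros}), (\ref{prop:g(z)_monotonicity_in_b}) you and the paper do essentially the same thing; your $\psi$ is exactly the reciprocal of the paper's auxiliary $g_1(z)=\tfrac{1-z}{z}(f(z)-(1-z))$. The real divergence is in (\ref{prop:g(z)_left_interval}) and (\ref{prop:g(z)_right_interval}): the paper handles both by a single direct substitution. Writing $f(z)=\tfrac{1\pm\eps}{\lo-1}=\tfrac{1\pm\delta}{\hi-1}$ and $\tfrac{1}{1-z}=\tfrac{1+\gamma}{f(z)}$ (with $\eps\ge0$, $\delta,\gamma>0$ on the appropriate side), multiplying $g(z)>0$ through by $f(z)$ collapses it to $(1\pm\eps)(1\pm\delta)+(1+\gamma)-(1\pm\eps)-(1\pm\delta)>0$, i.e.\ $\eps\delta+\gamma>0$, which is immediate. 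This four-line argument replaces your entire detour through $\min_z g(z,\lo,\lo+1)>0$, the elimination of $\lo$ via $\psi(u)=\lo(\lo-1)$, and the discriminant calculation $4u^2(u^2-6u+1)<0$ --- all of which is correct (I checked the reduction to $4L^2-2u(5-u)L+u^2(6-u)>0$ and the estimate $\psi(3-2\sqrt2)<6$), but is a lot of machinery for something the paper gets in one line. Your treatment of (\ref{prop:g(z)_right_interval}) via $\psi(z_r)>(\hi-1)^2$ and the $\varphi(x)=x^2-x+1-e^{-x}$ inequality is neat, though again heavier than the paper's parallel substitution.

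For (\ref{prop:g(z)_b'}) the paper argues that once $\min_z g(\cdot,\lo,\hi)<0$ it stays negative (via (\ref{prop:g(z)_monotonicity_in_b})), and then produces one such $\hi$ by showing $g(z',\lo,\hi)<0$ for large $\hi$ through a limit computation in $\hi$. Your argument --- evaluating at a fixed $z_0>z_l$ to get $\mu(\hi)\to-\infty$, and splitting on $z_{\min}(\hi)\lessgtr z_l$ --- is arguably cleaner and, because you actually prove $\min_z g(\cdot,\lo,\lo+1)>0$, yields the sharper conclusion $\hi'>\lo+1$. So here your extra work in (\ref{prop:g(z)_left_interval}) pays a small dividend; the paper only establishes $\hi'\ge\lo+1$.
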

The proof of Lemma~\ref{lem:g(z)} is given in Appendix~\ref{app:g(z)}. Example plots of $g(z,\lo,\hi)$ are shown in Figure~\ref{fig:g(z)}.
\begin{lemma}[Properties of $h(z)$]\hfill\\
\label{lem:h(z)}
 Let $3 \leq \lo < \hi, \  z\in(0,1)$, then it holds
\begin{compactenum}[$(i)$]
\item \label{prop:h(z)_pole}$h(z)$ has a pole at $z=z_r$.
\item $\lim_{z\to0} h(z)=-\infty$.
\item \label{prop:h(z)_limit_pole}$\lim_{z \to z_r} h(z)=+\infty$.
\item \label{prop:h(z)_first_interval} $\forall z \in (0,z_l] : h(z)\in (-\infty,1]$.
\item \label{prop:h(z)_middle_interval}$\forall z \in (z_l,z_r) : h(z)\in(1,+\infty)$, and $h(z_l)=1$.
\item \label{prop:h(z)_last_interval}$\forall z \in (z_r,1) : h(z)\in(-\infty,1)$.
\item \label{prop:h(z)_monotonicity}$\frac{\partial h(z)}{\partial z}>0 \Leftrightarrow g(z)>0$.
\item \label{prop:h(z)_increasing_first_interval}$h(z)$ is strictly increasing in $z\in (0,z_l]$. 
\item $h(z)$ is strictly increasing in $z\in (z_r,1)$. 
\item \label{prop:h(z)_monotonicity_middle_interval_strict_increasing}If $\min_z g(z)\geq 0$ then $h(z)$ is strictly increasing in $z\in [z_l,z_r)$. 
\item \label{prop:h(z)_monotonicity_middle_interval_inc_dec_inc} If $\min_z g(z)<0$ then $h(z)$ is strictly increasing to a local maximum at $z_1$,
then strictly decreasing to a local minimum at $z_2$, then strictly increasing afterwards.
\end{compactenum}
\end{lemma}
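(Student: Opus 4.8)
The plan is to handle the analytic limits (i)--(iii) first, then the ``$h$ versus $1$'' statements (iv)--(vi) via a single algebraic identity, and finally the monotonicity statements (vii)--(xi) by differentiation together with Lemma~\ref{lem:g(z)}. Throughout write $h(z)=N(z)/M(z)$ with numerator $N(z)=Z_0(z,\lo,\hi)-f(z)\,Z_1(z,\lo,\hi)$ and denominator $M(z)=\hi\cdot z^{\hi-1}\cdot\big((\hi-1)f(z)-1\big)$. Since $f$ is strictly decreasing on $(0,1)$ (Lemma~\ref{lem:f(z)}) with $f(z_r)=1/(\hi-1)$ by the definition of $z_r$, the factor $(\hi-1)f(z)-1$ is positive on $(0,z_r)$, vanishes at $z_r$, and is negative on $(z_r,1)$; hence $M>0$ on $(0,z_r)$, $M(z_r)=0$, and $M<0$ on $(z_r,1)$. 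A one-step simplification using $f(z_r)=1/(\hi-1)$ gives $N(z_r)=\lo\cdot z_r^{\lo-1}\cdot\tfrac{\hi-\lo}{\hi-1}>0$, so the zero of $M$ at $z_r$ is a genuine simple sign-changing pole of $h$, proving (i); letting $z\to z_r^-$, where $M\to 0^+$ and $N\to N(z_r)>0$, proves (iii), while letting $z\to z_r^+$, where $M\to 0^-$, shows $h\to-\infty$, a fact reused below. For (ii), using the explicit form of $h$, as $z\to 0^+$ the terms carrying $z^{\lo-\hi}\to+\infty$ dominate: collecting them, the numerator behaves like $\lo\,z^{\lo-\hi}\big(1-(\lo-1)f(z)\big)$ with $f(z)\to 1$, i.e.\ like $(2-\lo)\,\lo\,z^{\lo-\hi}\to-\infty$ because $\lo\ge 3$, while the denominator tends to $\hi(\hi-2)>0$; hence $h(z)\to-\infty$.

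Next I would prove the identity underlying (iv)--(vi): in $N(z)-M(z)$ the $z^{\hi-1}$ contributions cancel, leaving
\[
 h(z)-1=\frac{N(z)-M(z)}{M(z)}=\frac{\lo\cdot z^{\lo-1}\cdot\big(1-(\lo-1)f(z)\big)}{M(z)} \ .
\]
As $f$ is strictly decreasing with $f(z_l)=1/(\lo-1)$, the factor $1-(\lo-1)f(z)$ is $\le 0$ on $(0,z_l]$ (vanishing exactly at $z_l$) and $>0$ on $(z_l,1)$, and $z_l<z_r$ because $1/(\lo-1)>1/(\hi-1)$. Combining with the sign of $M$ established above: on $(0,z_l]$ one gets $h\le 1$, so with (ii) this is (iv), and $h(z_l)=1$ since $N-M$ vanishes at $z_l$; on $(z_l,z_r)$ one gets $h>1$, so with (iii) this is (v); on $(z_r,1)$ one gets $h<1$, so with the right-hand limit $h\to-\infty$ at $z_r^+$ noted above (and $h$ finite elsewhere on that interval, tending to $(\hi-\lo)/\hi<1$ as $z\to 1^-$) this is (vi).

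For the monotonicity, differentiate $h=N/M$; since $M^2>0$, the sign of $h'(z)$ equals the sign of $N'(z)M(z)-N(z)M'(z)$. The plan is to eliminate the logarithm using the two identities $-\ln(1-z)=\tfrac{z\,f(z)}{1-z}$ and $z(1-z)\,f'(z)=(1-z)-f(z)$, both immediate from the definition of $f$; substituting these systematically turns $N'M-NM'$ into an expression that is polynomial in $z,z^{\lo-1},z^{\hi-1}$ and affine in $f(z)$, which collects into a manifestly positive quantity times $g(z,\lo,\hi)=f(z)(\hi-1)(\lo-1)+\tfrac1{1-z}+2-\hi-\lo$ (the $\tfrac1{1-z}$ term of $g$ appearing once the $(1-z)$ from $f'$ is cleared). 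This establishes (vii), $\partial h/\partial z>0\Leftrightarrow g(z)>0$. The remaining items then follow from (vii) and Lemma~\ref{lem:g(z)}: since $g>0$ on $(0,z_l]$ we get (viii); since $g>0$ on $[z_r,1)$ we get (ix); on $[z_l,z_r)$, if $\min_z g\ge 0$ then $g\ge 0$ there with at most the single zero allowed by the shape of $g$, so $h$ is strictly increasing, which is (x); and if $\min_z g<0$ then, by the shape of $g$ and the location of its two roots, $g>0$ on $(z_l,z_1)$, $g<0$ on $(z_1,z_2)$, $g>0$ on $(z_2,z_r)$, so $h$ increases to a local maximum at $z_1$, decreases to a local minimum at $z_2$, and increases afterwards, which is (xi).

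The hard part is the algebraic collapse in (vii): showing that, after clearing logarithms, $N'M-NM'$ equals a positive multiple of $g$. This is routine but heavy; I would keep it manageable by grouping monomials by the three types $z^{2(\lo-1)}$, $z^{(\lo-1)+(\hi-1)}$ and $z^{2(\hi-1)}$ and checking the coefficient of each separately, always replacing $f'$ via $z(1-z)f'=(1-z)-f$ so that $f$ occurs only linearly. The only other small computations are $N(z_r)=\lo z_r^{\lo-1}(\hi-\lo)/(\hi-1)>0$, used in (i) to exclude a removable singularity, and $N(z_l)-M(z_l)=0$, which pins down $h(z_l)=1$ in (v); both are one-liners once $f(z_r)$ and $f(z_l)$ are substituted.
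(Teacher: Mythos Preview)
Your proposal is correct and follows essentially the same route as the paper. The paper's proof of (vii) proceeds exactly as you describe: it writes out $\partial h/\partial z$, multiplies through by $h_1(z)^2$ with $h_1(z)=(\hi-1)f(z)-1$ (your $M$ is $\hi z^{\hi-1}h_1$, so this is your $N'M-NM'$ up to a positive factor), and then expands the $Z_j$ to collapse everything to $g(z)>0$; items (viii)--(xi) are then deduced from (vii) and Lemma~\ref{lem:g(z)} just as you do.

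The one place you improve on the paper is (iv)--(vi). The paper proves these by three separate $\varepsilon$-substitutions ($f(z)=(1\pm\varepsilon)/(\lo-1)$ or $(1\pm\varepsilon)/(\hi-1)$) and checks the resulting inequalities one by one. Your single identity
\[
h(z)-1=\frac{\lo\, z^{\lo-1}\,\big(1-(\lo-1)f(z)\big)}{M(z)}
\]
captures all three cases at once via the signs of the two factors $1-(\lo-1)f(z)$ and $M(z)$, and also yields $h(z_l)=1$ for free. This is cleaner and arguably the ``right'' way to phrase the paper's computations, but the underlying content is the same.
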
 
The proof of Lemma~\ref{lem:h(z)} is given in Appendix~\ref{app:h(z)}. Example plots of $g(z,\lo,\hi)$ are shown in Figure~\ref{fig:h(z)}.

\begin{minipage}{0.5\textwidth}
 \centering
 \scalebox{0.48}{
\begingroup
  \makeatletter
  \providecommand\color[2][]{%
    \GenericError{(gnuplot) \space\space\space\@spaces}{%
      Package color not loaded in conjunction with
      terminal option `colourtext'%
    }{See the gnuplot documentation for explanation.%
    }{Either use 'blacktext' in gnuplot or load the package
      color.sty in LaTeX.}%
    \renewcommand\color[2][]{}%
  }%
  \providecommand\includegraphics[2][]{%
    \GenericError{(gnuplot) \space\space\space\@spaces}{%
      Package graphicx or graphics not loaded%
    }{See the gnuplot documentation for explanation.%
    }{The gnuplot epslatex terminal needs graphicx.sty or graphics.sty.}%
    \renewcommand\includegraphics[2][]{}%
  }%
  \providecommand\rotatebox[2]{#2}%
  \@ifundefined{ifGPcolor}{%
    \newif\ifGPcolor
    \GPcolortrue
  }{}%
  \@ifundefined{ifGPblacktext}{%
    \newif\ifGPblacktext
    \GPblacktexttrue
  }{}%
  \let\gplgaddtomacro\g@addto@macro
  \gdef\gplbacktext{}%
  \gdef\gplfronttext{}%
  \makeatother
  \ifGPblacktext
    \def\colorrgb#1{}%
    \def\colorgray#1{}%
  \else
    \ifGPcolor
      \def\colorrgb#1{\color[rgb]{#1}}%
      \def\colorgray#1{\color[gray]{#1}}%
      \expandafter\def\csname LTw\endcsname{\color{white}}%
      \expandafter\def\csname LTb\endcsname{\color{black}}%
      \expandafter\def\csname LTa\endcsname{\color{black}}%
      \expandafter\def\csname LT0\endcsname{\color[rgb]{1,0,0}}%
      \expandafter\def\csname LT1\endcsname{\color[rgb]{0,1,0}}%
      \expandafter\def\csname LT2\endcsname{\color[rgb]{0,0,1}}%
      \expandafter\def\csname LT3\endcsname{\color[rgb]{1,0,1}}%
      \expandafter\def\csname LT4\endcsname{\color[rgb]{0,1,1}}%
      \expandafter\def\csname LT5\endcsname{\color[rgb]{1,1,0}}%
      \expandafter\def\csname LT6\endcsname{\color[rgb]{0,0,0}}%
      \expandafter\def\csname LT7\endcsname{\color[rgb]{1,0.3,0}}%
      \expandafter\def\csname LT8\endcsname{\color[rgb]{0.5,0.5,0.5}}%
    \else
      \def\colorrgb#1{\color{black}}%
      \def\colorgray#1{\color[gray]{#1}}%
      \expandafter\def\csname LTw\endcsname{\color{white}}%
      \expandafter\def\csname LTb\endcsname{\color{black}}%
      \expandafter\def\csname LTa\endcsname{\color{black}}%
      \expandafter\def\csname LT0\endcsname{\color{black}}%
      \expandafter\def\csname LT1\endcsname{\color{black}}%
      \expandafter\def\csname LT2\endcsname{\color{black}}%
      \expandafter\def\csname LT3\endcsname{\color{black}}%
      \expandafter\def\csname LT4\endcsname{\color{black}}%
      \expandafter\def\csname LT5\endcsname{\color{black}}%
      \expandafter\def\csname LT6\endcsname{\color{black}}%
      \expandafter\def\csname LT7\endcsname{\color{black}}%
      \expandafter\def\csname LT8\endcsname{\color{black}}%
    \fi
  \fi
  \setlength{\unitlength}{0.0500bp}%
  \begin{picture}(7200.00,5040.00)%
    \gplgaddtomacro\gplbacktext{%
      \csname LTb\endcsname%
      \put(726,767){\makebox(0,0)[r]{\strut{}-10}}%
      \put(726,1340){\makebox(0,0)[r]{\strut{} 0}}%
      \put(726,1912){\makebox(0,0)[r]{\strut{} 10}}%
      \put(726,2485){\makebox(0,0)[r]{\strut{} 20}}%
      \put(726,3057){\makebox(0,0)[r]{\strut{} 30}}%
      \put(726,3630){\makebox(0,0)[r]{\strut{} 40}}%
      \put(726,4202){\makebox(0,0)[r]{\strut{} 50}}%
      \put(726,4775){\makebox(0,0)[r]{\strut{} 60}}%
      \put(921,484){\makebox(0,0){\strut{} 0}}%
      \put(1513,484){\makebox(0,0){\strut{} 0.1}}%
      \put(2106,484){\makebox(0,0){\strut{} 0.2}}%
      \put(2698,484){\makebox(0,0){\strut{} 0.3}}%
      \put(3291,484){\makebox(0,0){\strut{} 0.4}}%
      \put(3883,484){\makebox(0,0){\strut{} 0.5}}%
      \put(4475,484){\makebox(0,0){\strut{} 0.6}}%
      \put(5068,484){\makebox(0,0){\strut{} }}%
      \put(5159,484){\makebox(0,0){\strut{}$z_l$}}%
      \put(5340,484){\makebox(0,0){\strut{}$z_1$}}%
      \put(5660,484){\makebox(0,0){\strut{} 0.8}}%
      \put(6253,484){\makebox(0,0){\strut{} 0.9}}%
      \put(6583,484){\makebox(0,0){\strut{}$z_2$}}%
      \put(6804,484){\makebox(0,0){\strut{}$z_r$}}%
      \put(6845,484){\makebox(0,0){\strut{} 1}}%
      \put(3883,154){\makebox(0,0){\strut{}$z$}}%
    }%
    \gplgaddtomacro\gplfronttext{%
      \csname LTb\endcsname%
      \put(5858,4602){\makebox(0,0)[r]{\strut{}$g(z,3,10)$}}%
      \csname LTb\endcsname%
      \put(5858,4382){\makebox(0,0)[r]{\strut{}$g(z,3,20)$}}%
      \csname LTb\endcsname%
      \put(5858,4162){\makebox(0,0)[r]{\strut{}$g(z,3,30)$}}%
    }%
    \gplbacktext
    \put(0,0){\includegraphics{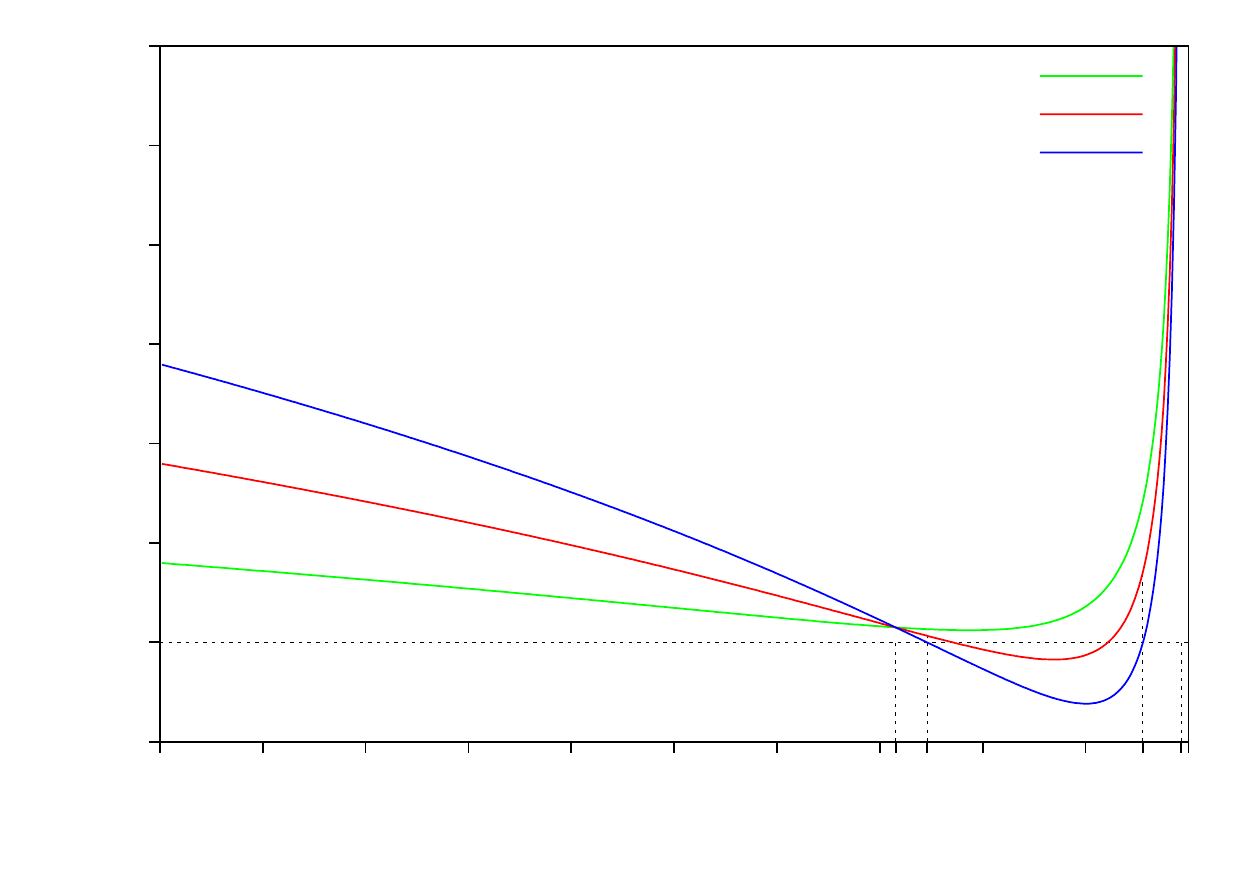}}%
    \gplfronttext
  \end{picture}%
\endgroup
}
 \parbox{0.8\textwidth}{
 \captionof{figure}{\label{fig:g(z)}Function $g(z)$, $z_1,z_2$ and $z_l, z_r$ for $\lo=3$ and $\hi=20$.}}
\end{minipage}\hfill
\begin{minipage}{0.5\textwidth}
 \centering
 \scalebox{0.48}{
\begingroup
  \makeatletter
  \providecommand\color[2][]{%
    \GenericError{(gnuplot) \space\space\space\@spaces}{%
      Package color not loaded in conjunction with
      terminal option `colourtext'%
    }{See the gnuplot documentation for explanation.%
    }{Either use 'blacktext' in gnuplot or load the package
      color.sty in LaTeX.}%
    \renewcommand\color[2][]{}%
  }%
  \providecommand\includegraphics[2][]{%
    \GenericError{(gnuplot) \space\space\space\@spaces}{%
      Package graphicx or graphics not loaded%
    }{See the gnuplot documentation for explanation.%
    }{The gnuplot epslatex terminal needs graphicx.sty or graphics.sty.}%
    \renewcommand\includegraphics[2][]{}%
  }%
  \providecommand\rotatebox[2]{#2}%
  \@ifundefined{ifGPcolor}{%
    \newif\ifGPcolor
    \GPcolortrue
  }{}%
  \@ifundefined{ifGPblacktext}{%
    \newif\ifGPblacktext
    \GPblacktexttrue
  }{}%
  \let\gplgaddtomacro\g@addto@macro
  \gdef\gplbacktext{}%
  \gdef\gplfronttext{}%
  \makeatother
  \ifGPblacktext
    \def\colorrgb#1{}%
    \def\colorgray#1{}%
  \else
    \ifGPcolor
      \def\colorrgb#1{\color[rgb]{#1}}%
      \def\colorgray#1{\color[gray]{#1}}%
      \expandafter\def\csname LTw\endcsname{\color{white}}%
      \expandafter\def\csname LTb\endcsname{\color{black}}%
      \expandafter\def\csname LTa\endcsname{\color{black}}%
      \expandafter\def\csname LT0\endcsname{\color[rgb]{1,0,0}}%
      \expandafter\def\csname LT1\endcsname{\color[rgb]{0,1,0}}%
      \expandafter\def\csname LT2\endcsname{\color[rgb]{0,0,1}}%
      \expandafter\def\csname LT3\endcsname{\color[rgb]{1,0,1}}%
      \expandafter\def\csname LT4\endcsname{\color[rgb]{0,1,1}}%
      \expandafter\def\csname LT5\endcsname{\color[rgb]{1,1,0}}%
      \expandafter\def\csname LT6\endcsname{\color[rgb]{0,0,0}}%
      \expandafter\def\csname LT7\endcsname{\color[rgb]{1,0.3,0}}%
      \expandafter\def\csname LT8\endcsname{\color[rgb]{0.5,0.5,0.5}}%
    \else
      \def\colorrgb#1{\color{black}}%
      \def\colorgray#1{\color[gray]{#1}}%
      \expandafter\def\csname LTw\endcsname{\color{white}}%
      \expandafter\def\csname LTb\endcsname{\color{black}}%
      \expandafter\def\csname LTa\endcsname{\color{black}}%
      \expandafter\def\csname LT0\endcsname{\color{black}}%
      \expandafter\def\csname LT1\endcsname{\color{black}}%
      \expandafter\def\csname LT2\endcsname{\color{black}}%
      \expandafter\def\csname LT3\endcsname{\color{black}}%
      \expandafter\def\csname LT4\endcsname{\color{black}}%
      \expandafter\def\csname LT5\endcsname{\color{black}}%
      \expandafter\def\csname LT6\endcsname{\color{black}}%
      \expandafter\def\csname LT7\endcsname{\color{black}}%
      \expandafter\def\csname LT8\endcsname{\color{black}}%
    \fi
  \fi
  \setlength{\unitlength}{0.0500bp}%
  \begin{picture}(7200.00,5040.00)%
    \gplgaddtomacro\gplbacktext{%
      \csname LTb\endcsname%
      \put(858,767){\makebox(0,0)[r]{\strut{}-0.5}}%
      \put(858,1212){\makebox(0,0)[r]{\strut{} 0}}%
      \put(858,1658){\makebox(0,0)[r]{\strut{} 0.5}}%
      \put(858,2103){\makebox(0,0)[r]{\strut{} 1}}%
      \put(858,2548){\makebox(0,0)[r]{\strut{} 1.5}}%
      \put(858,2994){\makebox(0,0)[r]{\strut{} 2}}%
      \put(858,3439){\makebox(0,0)[r]{\strut{} 2.5}}%
      \put(858,3884){\makebox(0,0)[r]{\strut{} 3}}%
      \put(858,4330){\makebox(0,0)[r]{\strut{} 3.5}}%
      \put(858,4775){\makebox(0,0)[r]{\strut{} 4}}%
      \put(1053,484){\makebox(0,0){\strut{} 0.65}}%
      \put(1884,484){\makebox(0,0){\strut{} 0.7}}%
      \put(2139,484){\makebox(0,0){\strut{}$z_l$}}%
      \put(2715,484){\makebox(0,0){\strut{} 0.75}}%
      \put(3066,484){\makebox(0,0){\strut{}$z_1$}}%
      \put(3546,484){\makebox(0,0){\strut{} 0.8}}%
      \put(4376,484){\makebox(0,0){\strut{} 0.85}}%
      \put(5207,484){\makebox(0,0){\strut{} 0.9}}%
      \put(5569,484){\makebox(0,0){\strut{}$z_2$}}%
      \put(6038,484){\makebox(0,0){\strut{} 0.95}}%
      \put(6675,484){\makebox(0,0){\strut{}$z_r$}}%
      \put(6869,484){\makebox(0,0){\strut{} 1}}%
      \put(3961,154){\makebox(0,0){\strut{}$z$}}%
    }%
    \gplgaddtomacro\gplfronttext{%
      \csname LTb\endcsname%
      \put(2637,4602){\makebox(0,0)[r]{\strut{}$h(z,3,10)$}}%
      \csname LTb\endcsname%
      \put(2637,4382){\makebox(0,0)[r]{\strut{}$h(z,3,20)$}}%
      \csname LTb\endcsname%
      \put(2637,4162){\makebox(0,0)[r]{\strut{}$h(z,3,30)$}}%
    }%
    \gplbacktext
    \put(0,0){\includegraphics{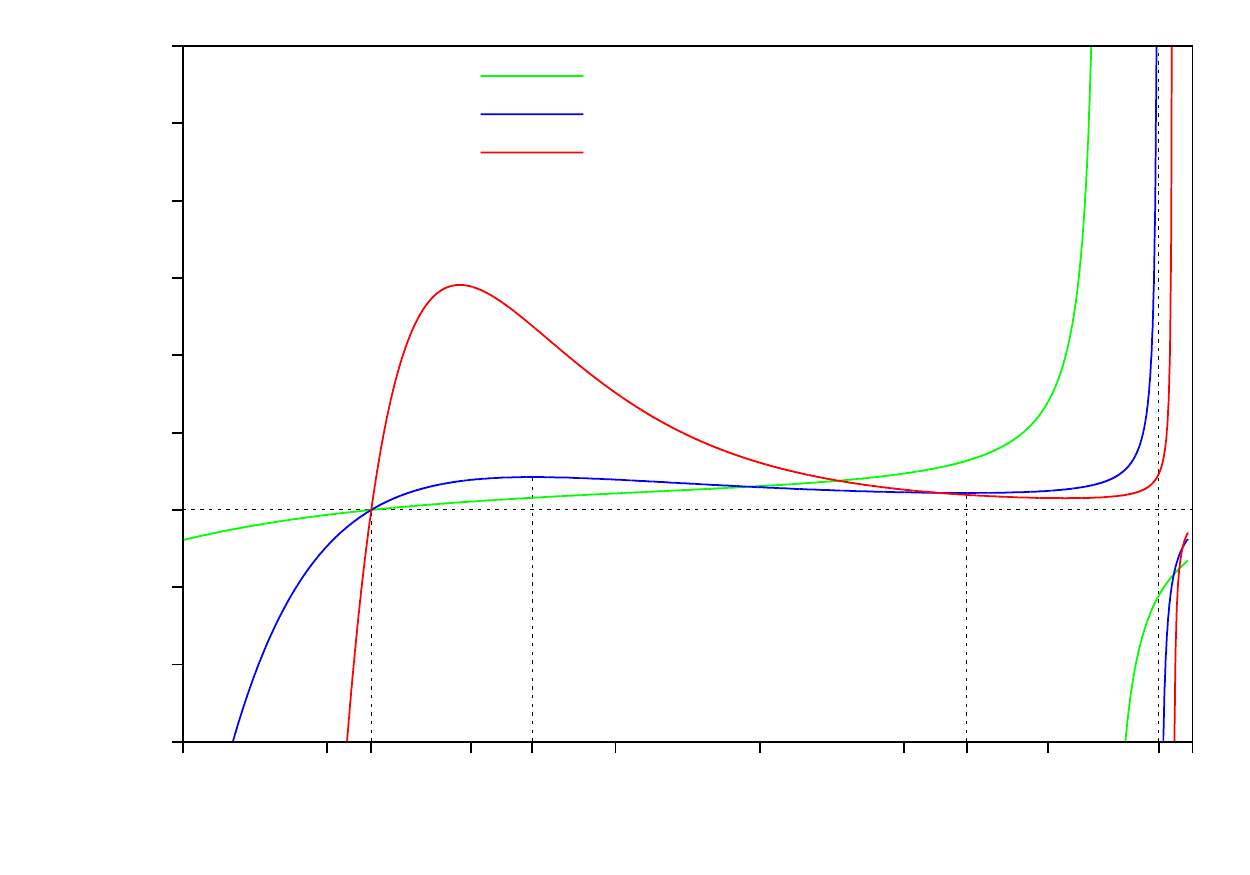}}%
    \gplfronttext
  \end{picture}%
\endgroup
}
 \parbox{0.8\textwidth}{
 \captionof{figure}{\label{fig:h(z)}Function $h(z)$, $z_1,z_2$ and $z_l, z_r$  for $\lo=3$ and $\hi=20$.}}
\end{minipage}
\vspace{1em}

\noindent Concerning the defined points, we are only interested in how they are related to each other.
\begin{lemma}
\label{lem:special_points}
Let $3 \leq \lo < \hi, \  z\in(0,1)$, then it holds
\begin{compactenum}[$(i)$]
\item $0<z_l<z_r<1$.
\item  $z_l<z_1,z_2<z_r$, if $z_1$ and $z_2$ exist.
\item  $z'\in(0,z_r)$.
\item \label{prop:z'_not_zero} $z'\neq z_1, z' \neq z_2$, if $z_1$ and $z_2$ exist.
\end{compactenum}
\end{lemma}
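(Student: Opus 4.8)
\textbf{Proof plan for Lemma~\ref{lem:special_points}.}
The plan is to establish the four orderings by systematically exploiting the monotonicity of $f$ (Lemma~\ref{lem:f(z)}$(iv)$) together with the values at the special points. For $(i)$, since $f$ is strictly decreasing on $(0,1)$ with $\lim_{z\to0}f(z)=1$ and $\lim_{z\to1}f(z)=0$, it is a bijection from $(0,1)$ onto $(0,1)$; as $3\leq\lo<\hi$ gives $\tfrac{1}{\lo-1}>\tfrac{1}{\hi-1}$, both values lie in $(0,1)$ and applying $f^{-1}$ (which is also strictly decreasing) yields $0<z_l=f^{-1}(\tfrac{1}{\lo-1})<f^{-1}(\tfrac{1}{\hi-1})=z_r<1$. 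For $(ii)$, I would simply invoke Lemma~\ref{lem:g(z)}$(iv)$, which already states $z_1,z_2\in(z_l,z_r)$ whenever they exist, so this part is immediate from the earlier lemma.

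For $(iii)$, I want to show $z'=(\lo/\hi)^{1/(\hi-\lo)}<z_r$, equivalently (since $f$ is decreasing) $f(z')>f(z_r)=\tfrac{1}{\hi-1}$ — but this is exactly property $(vi)$ of Lemma~\ref{lem:f(z)}, and $z'>0$ is clear since $\lo,\hi>0$. So $(iii)$ follows directly from the stated properties of $f$. The only genuinely new work is $(iv)$: showing $z'\neq z_1$ and $z'\neq z_2$. Here I would argue by contradiction. Suppose $z'$ is a root of $g$, i.e. $g(z')=0$, which rearranges to
\begin{equation*}
 f(z')=\frac{\hi+\lo-2-\frac{1}{1-z'}}{(\hi-1)(\lo-1)}.
\end{equation*}
On the other hand, the relation $g(z')=0$ combined with the algebraic identity $z'^{\hi-\lo}=\lo/\hi$ (the defining property of $z'$) should force a contradiction: substituting $z'^{\hi-\lo}=\lo/\hi$ into the expression $Z_1(z',\lo,\hi)=\lo(\lo-1)z'^{\lo-1}-\hi(\hi-1)z'^{\hi-1}=z'^{\hi-1}(\lo(\lo-1)z'^{\lo-\hi}-\hi(\hi-1))=z'^{\hi-1}\bigl((\lo-1)-(\hi-1)\bigr)\hi=-z'^{\hi-1}\hi(\hi-\lo)$ simplifies $Z_1$ cleanly, and similarly $Z_0(z')=\lo z'^{\hi-1}-\hi z'^{\hi-1}\cdot(\hi/\lo)\cdot(\lo/\hi)$ — wait, more carefully $Z_0(z')=\lo z'^{\lo-1}-\hi z'^{\hi-1}=z'^{\hi-1}(\lo z'^{\lo-\hi}-\hi)=z'^{\hi-1}(\lo\cdot\tfrac{\hi}{\lo}-\hi)=0$. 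So $Z_0(z')=0$, and then feeding $Z_0(z')=0$ into the formula $h(z)=\frac{Z_0(z)-f(z)Z_1(z)}{\hi z^{\hi-1}((\hi-1)f(z)-1)}$ shows that at $z=z'$ the numerator of $h$ reduces to $-f(z')Z_1(z')$, a nonzero quantity; combining this with what $g(z')=0$ would independently force about $f(z')$ (via property $(vii)$ of Lemma~\ref{lem:f(z)}, which asserts $f(z)\neq-\tfrac{1}{1-z'}-2+\hi+\lo$, i.e. essentially that $z'$ cannot satisfy the $g$-root equation) gives the contradiction.

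The main obstacle is part $(iv)$: I expect it to hinge on the clean simplification $Z_0(z')=0$ at the special point $z'$ and then matching that against the $g(z')=0$ constraint, and the cleanest route is to note that Lemma~\ref{lem:f(z)}$(vii)$ was tailored precisely so that $f(z')$ cannot equal the value the equation $g(z')=0$ would demand. Parts $(i)$–$(iii)$ are short bookkeeping with the monotonicity of $f$ and the already-proved $z_1,z_2\in(z_l,z_r)$ inclusion, so the bulk of the writeup and the only place requiring care is verifying that the two algebraic conditions ``$z'$ is a root of $g$'' and ``$z'=(\lo/\hi)^{1/(\hi-\lo)}$'' are mutually inconsistent.
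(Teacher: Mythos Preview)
Your plan is correct and matches the paper's approach in all four parts: $(i)$ via the monotonicity and limits of $f$ (Lemma~\ref{lem:f(z)}$(ii)$--$(iv)$), $(ii)$ by direct citation of Lemma~\ref{lem:g(z)}$(iv)$, $(iii)$ via Lemma~\ref{lem:f(z)}$(vi)$, and $(iv)$ by contradiction through Lemma~\ref{lem:f(z)}$(vii)$.

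One remark on $(iv)$: the paper's argument is considerably more direct than your exploratory detour through $Z_0(z')=0$, $Z_1(z')$, and the numerator of $h$. Those computations are correct but irrelevant here. The paper simply reads off from the definition of $g$ that $g(z')=0$ forces a specific value for $f(z')$ and then cites Lemma~\ref{lem:f(z)}$(vii)$ for the contradiction; no appeal to $Z_j$ or $h$ is needed. Your final sentence already identifies this as ``the cleanest route,'' so in the writeup you should drop the $Z_0,Z_1,h$ discussion entirely. Also note that your rearrangement $f(z')=\bigl(\hi+\lo-2-\tfrac{1}{1-z'}\bigr)/\bigl((\hi-1)(\lo-1)\bigr)$ carries a denominator $(\hi-1)(\lo-1)$ that the literal statement of Lemma~\ref{lem:f(z)}$(vii)$ does not; make sure when writing this up that the form you derive actually matches what $(vii)$ excludes (the paper's own one-line proof of $(iv)$ glosses over this factor).
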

The proof Lemma~\ref{lem:special_points} is given in Appendix~\ref{app:points}.
Now we are ready for solving the optimization problem.

\subsection{Analysis}
Assume first that $\alpha$ is arbitrary but fixed, that is we are looking for a
global minimum of \eqref{eq:transformed_threshold_function} in $z$-direction.
Since 
\begin{equation}
\label{eq:lim_T(z)}
\lim_{z\to 0} T(z)=\lim_{z\to 1} T(z)=+\infty \ , 
\end{equation}
and $T(z)$ is continuous for $z\in(0,1)$, a global minimum must be a
point where the first derivative of $T(z)$ is zero, that is a critical point.
According to \eqref{eq:first_derivative_z} critical points in $z$-direction for \emph{unbounded} $\alpha$, i.e., $\alpha\in\mathbb{R}$,
can be described via
\begin{equation}
 \begin{split}
\label{eq:critical_h(z)}
                 \frac{\partial T(z)}{\partial z}=0 
 \Leftrightarrow & \frac{1}{1-z}\cdot D_0(z)=\frac{-\ln(1-z)}{z}\cdot D_1(z) \\
\Leftrightarrow  & \frac{D_0(z)}{D_1(z)}=f(z)
\Leftrightarrow   \alpha = 1/h(z) \ .  
 \end{split}
\end{equation}
The next lemma identifies and classifies critical points of $T(z)$ for \emph{bounded} $\alpha$ that is for $\alpha\in(0,1]$.
\begin{lemma}
\label{lem:critical_points_interval}
 Let $\alpha\in(0,1]$ be arbitrary but fixed.
 If $\frac{\partial T}{\partial z}(\tilde{z})=0$ for some $\tilde{z}\in(0,1)$ then it holds
\begin{compactenum}[$(i)$]
\item \label{prop:critical_points_interval}$\tilde{z}\in [z_l,z_r)$,
\item \label{prop:critical_points_minima}if $g(\tilde{z})>0$ then $T(\tilde{z})$ is a local minimum,
\item \label{prop:critical_points_maxima}if $g(\tilde{z})<0$ then $T(\tilde{z})$ is a local maximum.
\end{compactenum}
\end{lemma}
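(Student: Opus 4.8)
The plan is to combine the two equivalent descriptions of critical points that are already available with a second--derivative test. Recall from \eqref{eq:critical_h(z)} that for $\tilde{z}\in(0,1)$ the condition $\frac{\partial T}{\partial z}(\tilde{z})=0$ is equivalent both to $D_0(\tilde{z})=f(\tilde{z})\,D_1(\tilde{z})$ and to $\alpha=1/h(\tilde{z})$; here $D_0(z)>0$ and $D_1(z)>0$ for every $z\in(0,1)$ because $\alpha>0$, so these relations are meaningful and $h(\tilde{z})$ below is well defined.

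For part $(i)$ I would first rule out $\tilde{z}=z_r$: substituting $f(z_r)=\tfrac{1}{\hi-1}$ into $D_0=f(z_r)D_1$ and cancelling the $(1-\alpha)$--summands forces $\alpha\,\lo\,(\hi-\lo)\,z_r^{\lo-1}=0$, which is impossible for $\alpha>0$ (equivalently, $h$ has a pole at $z_r$ by Lemma~\ref{lem:h(z)}). Hence $h(\tilde{z})=1/\alpha\geq 1$ since $\alpha\in(0,1]$. By Lemma~\ref{lem:h(z)} the function $h$ is strictly increasing on $(0,z_l]$ with $h(z_l)=1$, is strictly larger than $1$ on $(z_l,z_r)$, and is strictly smaller than $1$ on $(z_r,1)$, so $\{z\in(0,1):h(z)\geq 1\}=[z_l,z_r)$; therefore $\tilde{z}\in[z_l,z_r)$.

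For parts $(ii)$ and $(iii)$ I would evaluate the second derivative \eqref{eq:second_derivative_zz} at $\tilde{z}$, using the rearrangement $\tfrac{\ln(1-\tilde{z})}{\tilde{z}}=-\tfrac{D_0(\tilde{z})}{(1-\tilde{z})\,D_1(\tilde{z})}$ of \eqref{eq:critical_h(z)} to remove the logarithm. After this substitution the second and fourth summands of \eqref{eq:second_derivative_zz} cancel, leaving
\[
 \frac{\partial^2 T}{(\partial z)^2}(\tilde{z})=\frac{D_1(\tilde{z})-(1-\tilde{z})\,D_2(\tilde{z})}{\tilde{z}\,(1-\tilde{z})^2\,D_0(\tilde{z})\,D_1(\tilde{z})} \ .
\]
Next I would insert the purely algebraic identity $D_2=(\lo+\hi-2)\,D_1-(\lo-1)(\hi-1)\,D_0$ (immediate from the definitions of $D_0,D_1,D_2$) together with $D_0(\tilde{z})=f(\tilde{z})\,D_1(\tilde{z})$ once more, so that the numerator becomes $D_1(\tilde{z})\,(1-\tilde{z})\,\big(\tfrac{1}{1-\tilde{z}}+2-\lo-\hi+(\lo-1)(\hi-1)\,f(\tilde{z})\big)$, which by the definition of $g$ equals $D_1(\tilde{z})\,(1-\tilde{z})\,g(\tilde{z})$. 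Hence $\frac{\partial^2 T}{(\partial z)^2}(\tilde{z})=\frac{g(\tilde{z})}{\tilde{z}\,(1-\tilde{z})\,D_0(\tilde{z})}$, whose sign equals that of $g(\tilde{z})$ since $\tilde{z}(1-\tilde{z})>0$ and $D_0(\tilde{z})>0$. Thus $\tilde{z}$ is a strict local minimum of $T(\cdot,\alpha)$ when $g(\tilde{z})>0$ and a strict local maximum when $g(\tilde{z})<0$.

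The one place that demands care is the bookkeeping in this second--derivative computation: checking that precisely the second and fourth terms of \eqref{eq:second_derivative_zz} cancel, and that after inserting $D_2=(\lo+\hi-2)D_1-(\lo-1)(\hi-1)D_0$ and $D_0=f\,D_1$ the numerator collapses exactly to $D_1(1-z)g(z)$. This is entirely mechanical, and --- pleasantly --- no case distinctions are needed, in particular none at the degenerate endpoint $\tilde{z}=z_l$ (which corresponds to $\alpha=1$), because the whole $\alpha$--dependence is carried inside $D_0$ and $D_1$ while $g$ reappears verbatim.
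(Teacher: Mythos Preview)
Your proof is correct. Part $(i)$ proceeds essentially as in the paper, via the characterization $\alpha=1/h(\tilde z)$ together with the range information on $h$ from Lemma~\ref{lem:h(z)}.

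For parts $(ii)$ and $(iii)$ you take a cleaner route than the paper. Both arguments start from \eqref{eq:second_derivative_zz} and use $D_0(\tilde z)=f(\tilde z)\,D_1(\tilde z)$, and both effectively reach the intermediate numerator $D_1-(1-z)D_2$. From there the paper unpacks $\alpha$ explicitly via the representation $(\star)$ and rewrites everything in terms of the $Z_j$, eliminating $\alpha$ only after a chain of algebraic manipulations before $g$ emerges. Your shortcut is the quadratic recurrence $D_2=(\lo+\hi-2)\,D_1-(\lo-1)(\hi-1)\,D_0$, which holds identically in $\alpha$; combined once more with $D_0=f\,D_1$ it collapses the numerator directly to $D_1(1-z)\,g$ and yields the closed form $\partial_z^2 T(\tilde z)=g(\tilde z)/\big(\tilde z(1-\tilde z)D_0(\tilde z)\big)$. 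This buys you an exact expression for the second derivative (not just its sign) and avoids any reference to $Z_j$ or to the explicit form of $\alpha$, at the modest cost of spotting that three--term identity among the $D_j$.
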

\begin{proof}
\begin{asparaenum}[$(i)$]
 \item According to \eqref{eq:critical_h(z)} we must have $\alpha= 1/h(\tilde{z})$ for $\alpha\in(0,1]$.
 Therefore it must hold $h(\tilde{z})\in(1,+\infty)$. Using Lemma~\ref{lem:h(z)}$(\ref{prop:h(z)_first_interval}),(\ref{prop:h(z)_middle_interval}),(\ref{prop:h(z)_last_interval})$ 
it follows that $\tilde{z}\in [z_l,z_r)$.
\item Now consider the second derivative of $T(z)$ with respect to $z$. According to \eqref{eq:second_derivative_zz} we have
\begin{align*}
  \frac{\partial^2 T(z)}{(\partial z)^2} > 0 \Leftrightarrow 
\frac{1}{(1-z)^2} -\frac{2}{z\cdot (1-z)}\cdot \frac{D_1(z)}{D_0(z)}  \\
 + \frac{\ln(1-z)}{z^2}\cdot \frac{D_2(z)-D_1(z)}{D_0(z)} -\frac{2\cdot \ln(1-z)}{z^2}\cdot \frac{D_1(z)^2}{D_0(z)^2}>0  \ .
\end{align*}
Assume that $\tilde{z}\in[z_l,z_r)$ is a critical point.
For the rest of the proof let $z=\tilde{z}$.
Utilizing that $\frac{D_0({z})}{D_1({z})}=f({z})$ it follows that
\begin{align*}
  \frac{\partial^2 T(z)}{(\partial z)^2} > 0 \Leftrightarrow &\frac{1}{(1-z)^2} -\frac{2}{z\cdot (1-z)\cdot f(z)}
 + \frac{\ln(1-z)}{z^2}\cdot \frac{D_2(z)}{D_0(z)}\\& -\frac{\ln(1-z)}{z^2\cdot f(z)} -\frac{2\cdot \ln(1-z)}{z^2\cdot f(z)^2}>0 \\
\Leftrightarrow&
\frac{1}{(1-z)^2}-\frac{f(z)}{z\cdot (1-z)}\cdot \frac{D_2(z)}{D_0(z)} +\frac{1}{z(1-z)}>0 \\
\Leftrightarrow&
\frac{D_0(z)}{D_2(z)}>f(z)\cdot (1-z)
\Leftrightarrow
\frac{D_1(z)}{D_2(z)}> (1-z) \ .
\end{align*}

Factoring out $\alpha$ from $\frac{D_1(z)}{D_2(z)}> (1-z)$ gives that $D_1(z)> (1-z)\cdot D_2(z)$ is equivalent to
\begin{equation*}
\alpha \cdot \left( Z_1(z)-(1-z)\cdot Z_2(z) \right) > -\hi \cdot (\hi-1)\cdot z^{\hi-1} +(1-z)\cdot \hi \cdot ( \hi-1)^2 \cdot z^{\hi-1} \ . 
\end{equation*}
According to the proof of Lemma~\ref{lem:critical_points_interval}$(\ref{prop:critical_points_interval})$ it holds
$\alpha=1/h(z)$, which can be written as
\begin{equation}
\label{eq:alpha}
 \alpha=\frac{\hi\cdot z^{\hi-1}\cdot((\hi-1)\cdot f(z)-1)}{Z_0(z)-f(z)\cdot Z_1(z)} \tag{$\star$}\ .
\end{equation}
Division by $\hi\cdot z^{\hi-1}$ leads to
\begin{align*}
 \frac{\partial^2 T(z)}{(\partial z)^2} > 0 \Leftrightarrow
\frac{(\hi-1)\cdot f(z)-1}{Z_0(z)-f(z)\cdot Z_1(z)}
 \cdot \left( Z_1(z)-(1-z)\cdot Z_2(z) \right) >\\ 
-(\hi-1) +(1-z)\cdot ( \hi-1)^2 \ .
\end{align*}
Consider \eqref{eq:alpha}. Given that $\alpha\in(0,1]$ and $z<z_r$ (Lemma~\ref{lem:critical_points_interval}$(\ref{prop:critical_points_interval})$)
we have, according to the definition of $z_r$ and Lemma~\ref{lem:f(z)}$(\ref{prop:f(z)_decreasing})$, that $(\hi-1)\cdot f(z)>1$,
that is the numerator of \eqref{eq:alpha} is larger than $0$. Since $\alpha>0$ it follows that
the denominator $Z_0(z)-f(z)\cdot Z_1(z)$ is larger than $0$ too.
Hence we get
\begin{align*}
     &\frac{\partial^2 T(z)}{(\partial z)^2} > 0 \\
\Leftrightarrow & ( (\hi-1)\cdot f(z)-1)\cdot (Z_1(z)-(1-z)\cdot Z_2(z)) >\\
& ((\hi-1)^2\cdot (1-z) -(\hi-1))\cdot (Z_0(z)-f(z)\cdot Z_1(z)) \\
\Leftrightarrow & 
Z_2(z) \cdot (1-z -(1-z)\cdot (\hi-1)\cdot f(z))\\ + &Z_1(z)\cdot ((\hi-1)^2\cdot (1-z)\cdot f(z)-1) >
 Z_0(z)\cdot ((1-z)\cdot (\hi-1)^2-(\hi-1))\\
\Leftrightarrow &
(\lo-1)^2\cdot (1-z-(1-z)\cdot(\hi-1)\cdot f(z))\\ +&(\lo-1)\cdot ((\hi-1)^2\cdot (1-z)\cdot f(z)-1)>
(1-z)\cdot (\hi-1)^2-(\hi-1) \ .
\end{align*}
Factoring out $f(z)\cdot (\hi-1)\cdot (\lo-1)$ gives
\begin{align*}
  \frac{\partial^2 T(z)}{(\partial z)^2} > 0
\Leftrightarrow & f(z)\cdot (\hi-1)\cdot (\lo-1) \cdot (\hi-\lo) > (\hi-1)^2 -(\lo-1)^2- \frac{\hi-\lo}{1-z} \\
\Leftrightarrow & f(z)\cdot (\hi-1)\cdot (\lo-1) + \frac{1}{1-z} + 2 - \hi - \lo > 0 \Leftrightarrow g(z)>0 \ . 
\end{align*}
\item Analogous to $(ii)$.
\end{asparaenum}
This finishes the proof of the lemma.
\end{proof}
The next lemma can be seen as the central building block for understanding the behavior of the threshold function.
Using the function $g(z)$ we decide how many and which kind of extremal points $T(z)$ has.

\begin{lemma}
\label{lem:classifying_extrema}
Let $\alpha\in(0,1]$ be arbitrary but fixed.
\begin{compactenum}[$1.$]
 \item\label{prop:classify_fst} Let $\min_z g(z)\geq 0$ then the function $T(z)$ has exactly one critical point $\tilde{z}$,
       and $\tilde{z}\in[z_l,z_r)$ is a global minimum point.
 \item\label{prop:classify_snd} Let $\min_z g(z)<0$ then there are four pairwise distinct points $z_1^{<}<z_1<z_2<z_2^{>}$
       from the interval $[z_l,\bar{z_r})$ such that the following holds:
\begin{compactenum}[$(i)$]
 \item  For all $\alpha$ with $1/\alpha \in [1,h(z_2))$ the function $T(z)$ has exactly one critical point $\tilde{z}$,
        and $\tilde{z}\in(z_l,z_1^{<})$, is a global minimum point.
 \item  For $\alpha$ with $1/\alpha=h(z_2)$ the function $T(z)$ has exactly two critical points $\tilde{z}_1<\tilde{z}_2$, 
        and $\tilde{z}_1=z_1^{<}$ is a global minimum point, and $\tilde{z}_2=z_2$  is an inflection point.
 \item \label{prop:classify_snd_three} For all $\alpha$ with $1/\alpha\in(h(z_2),h(z_1))$ the function $T(z)$ has exactly three critical points $\tilde{z}_1<\tilde{z}_3<\tilde{z}_2$,
        and $\tilde{z}_1,\tilde{z}_2$ are local minimum points and $\tilde{z}_3$ is a local maximum point.
 \item  For $\alpha$ with $1/\alpha=h(z_1)$ the function $T(z)$ has exactly two critical points $\tilde{z}_1<\tilde{z}_2$,
        and $\tilde{z}_1=z_1$ is an inflection point, and $\tilde{z}_2=z_2^{>}$  is an global minimum point.
 \item For all $\alpha$ with $1/\alpha \in (h(z_1),\infty)$ the function $T(z)$ has exactly one critical point $\tilde{z}$,
        and $\tilde{z}\in(z_2^{>},z_r)$, is a global minimum point.
\end{compactenum}
\end{compactenum}
\end{lemma}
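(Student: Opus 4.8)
The plan is to reduce the whole question to the shape of the one–variable function $h(z)$ on the interval $[z_l,z_r)$. By \eqref{eq:critical_h(z)} together with Lemma~\ref{lem:critical_points_interval}, a point $\tilde z\in(0,1)$ is a critical point of $T(\cdot,\alpha)$ if and only if $\tilde z\in[z_l,z_r)$ and $h(\tilde z)=1/\alpha$, and it is a local minimum when $g(\tilde z)>0$ and a local maximum when $g(\tilde z)<0$. Since $\alpha\in(0,1]$ forces $1/\alpha\in[1,+\infty)$, counting and classifying the critical points of $T(\cdot,\alpha)$ amounts to counting the preimages of $1/\alpha$ under $h|_{[z_l,z_r)}$ and reading off the sign of $g$ at them. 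To upgrade ``local minimum'' to ``global minimum'' I would repeatedly use that $T(z)\to+\infty$ as $z\to0$ and as $z\to1$ by \eqref{eq:lim_T(z)}, so $T$ attains an interior global minimum at a critical point, and that $T'$ is nonzero away from the enumerated critical points, hence $T$ is strictly monotone between consecutive critical points and the sign of $T'$ on each gap is pinned down by the endpoint behaviour.

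For part~\ref{prop:classify_fst} ($\min_z g\geq0$) I would invoke Lemma~\ref{lem:h(z)}$(\ref{prop:h(z)_monotonicity_middle_interval_strict_increasing}),(\ref{prop:h(z)_middle_interval}),(\ref{prop:h(z)_limit_pole})$: on $[z_l,z_r)$ the map $h$ is continuous and strictly increasing with $h(z_l)=1$ and $h(z)\to+\infty$ as $z\to z_r^-$, hence a bijection onto $[1,+\infty)\ni 1/\alpha$. So $T(\cdot,\alpha)$ has exactly one critical point $\tilde z\in[z_l,z_r)$; being the only critical point of a function that tends to $+\infty$ at both ends of $(0,1)$, it is the global minimum.

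For part~\ref{prop:classify_snd} ($\min_z g<0$) I would first fix the geometry of $h$. By Lemma~\ref{lem:g(z)}$(\ref{prop:g(z)_monotonicity}),(\ref{prop:g(z)_zeros})$ the roots $z_1<z_2$ of $g$ lie in $(z_l,z_r)$, with $g>0$ on $[z_l,z_1)$, $g<0$ on $(z_1,z_2)$, $g>0$ on $(z_2,z_r)$; by Lemma~\ref{lem:h(z)}$(\ref{prop:h(z)_monotonicity_middle_interval_inc_dec_inc})$ the function $h$ is then strictly increasing on $[z_l,z_1]$, strictly decreasing on $[z_1,z_2]$, strictly increasing on $[z_2,z_r)$, with $h(z_l)=1$, $h(z)\to+\infty$ as $z\to z_r^-$, and $1<h(z_2)<h(z_1)$ (the first inequality from Lemma~\ref{lem:h(z)}$(\ref{prop:h(z)_middle_interval})$ since $z_2\in(z_l,z_r)$, the second since $h$ strictly decreases from $h(z_1)$ to $h(z_2)$). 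By the intermediate value theorem on the two increasing branches I would define $z_1^{<}\in(z_l,z_1)$ by $h(z_1^{<})=h(z_2)$ and $z_2^{>}\in(z_2,z_r)$ by $h(z_2^{>})=h(z_1)$, giving the ordering $z_l<z_1^{<}<z_1<z_2<z_2^{>}<z_r$ required by the statement. The five ranges $1/\alpha\in[1,h(z_2))$, $=h(z_2)$, $\in(h(z_2),h(z_1))$, $=h(z_1)$, $\in(h(z_1),+\infty)$ partition $[1,+\infty)$ exactly because $1<h(z_2)<h(z_1)<+\infty$, and for each of them I would count branch by branch the solutions of $h(z)=1/\alpha$, obtaining $1,2,3,2,1$ critical points respectively with the stated locations; the critical point in $(z_1,z_2)$ is a local maximum and those outside $[z_1,z_2]$ are local minima by Lemma~\ref{lem:critical_points_interval}$(\ref{prop:critical_points_minima}),(\ref{prop:critical_points_maxima})$. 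The global‑minimum identifications in cases $(i),(ii),(iv),(v)$, and the fact that in cases $(ii),(iv)$ the critical point sitting at $z_2$ respectively $z_1$ is an inflection point, then follow from the monotonicity‑between‑critical‑points argument: the two or one neighbouring critical points are genuine local minima (sign of $g$), which forces $T'$ to be positive on every gap adjacent to $1$, so $T'$ cannot change sign at the root of $g$, and the unique local minimum is the global one.

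I expect the boundary cases $(ii)$ and $(iv)$ to be the main obstacle, since there the critical point coincides with a root of $g$ and Lemma~\ref{lem:critical_points_interval} is silent about its type; the clean route I would take is precisely the global sign analysis of $T'$ described above (no zeros of $T'$ outside the listed critical points, endpoint values $+\infty$, hence strict monotonicity of $T$ on each gap together with the known local‑minimum status of the other critical point), which simultaneously delivers the inflection claims and singles out the global minimum in every case. The only remaining bookkeeping is the chain $1<h(z_2)<h(z_1)<+\infty$, which is what makes the five $\alpha$‑ranges a partition of $[1,+\infty)$ and which I have indicated how to extract from Lemmas~\ref{lem:g(z)} and~\ref{lem:h(z)}.
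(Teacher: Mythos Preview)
Your proposal is correct and follows essentially the same route as the paper: reduce the critical-point count to the level sets of $h|_{[z_l,z_r)}$ via \eqref{eq:critical_h(z)} and Lemma~\ref{lem:critical_points_interval}, read off the shape of $h$ from Lemmas~\ref{lem:g(z)} and~\ref{lem:h(z)}, classify each critical point by the sign of $g$, and handle the degenerate cases $(ii)$ and $(iv)$ together with the global-minimum upgrades by combining \eqref{eq:lim_T(z)} with the absence of further zeros of $T'$. Your treatment is in fact slightly more explicit than the paper's in two places---you construct $z_1^{<}$ and $z_2^{>}$ directly via the intermediate value theorem on the two increasing branches of $h$, and you spell out the chain $1<h(z_2)<h(z_1)$ that makes the five $\alpha$-ranges a genuine partition---but these are refinements of presentation, not a different argument.
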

Figure~\ref{fig:h(z)_extrema} illustrates the complete case 2 of Lemma~\ref{lem:classifying_extrema}.
The intersection points between the function $1/\alpha$ (horizontal lines) and the function $h(z)$ are the
extrema of $T(z)$. They are classified depending on the part of $h(z)$
where the intersection takes place.

\begin{minipage}{0.95\textwidth}
\upshape
\centering
\scalebox{0.8}{
\begingroup
  \makeatletter
  \providecommand\color[2][]{%
    \GenericError{(gnuplot) \space\space\space\@spaces}{%
      Package color not loaded in conjunction with
      terminal option `colourtext'%
    }{See the gnuplot documentation for explanation.%
    }{Either use 'blacktext' in gnuplot or load the package
      color.sty in LaTeX.}%
    \renewcommand\color[2][]{}%
  }%
  \providecommand\includegraphics[2][]{%
    \GenericError{(gnuplot) \space\space\space\@spaces}{%
      Package graphicx or graphics not loaded%
    }{See the gnuplot documentation for explanation.%
    }{The gnuplot epslatex terminal needs graphicx.sty or graphics.sty.}%
    \renewcommand\includegraphics[2][]{}%
  }%
  \providecommand\rotatebox[2]{#2}%
  \@ifundefined{ifGPcolor}{%
    \newif\ifGPcolor
    \GPcolortrue
  }{}%
  \@ifundefined{ifGPblacktext}{%
    \newif\ifGPblacktext
    \GPblacktexttrue
  }{}%
  \let\gplgaddtomacro\g@addto@macro
  \gdef\gplbacktext{}%
  \gdef\gplfronttext{}%
  \makeatother
  \ifGPblacktext
    \def\colorrgb#1{}%
    \def\colorgray#1{}%
  \else
    \ifGPcolor
      \def\colorrgb#1{\color[rgb]{#1}}%
      \def\colorgray#1{\color[gray]{#1}}%
      \expandafter\def\csname LTw\endcsname{\color{white}}%
      \expandafter\def\csname LTb\endcsname{\color{black}}%
      \expandafter\def\csname LTa\endcsname{\color{black}}%
      \expandafter\def\csname LT0\endcsname{\color[rgb]{1,0,0}}%
      \expandafter\def\csname LT1\endcsname{\color[rgb]{0,1,0}}%
      \expandafter\def\csname LT2\endcsname{\color[rgb]{0,0,1}}%
      \expandafter\def\csname LT3\endcsname{\color[rgb]{1,0,1}}%
      \expandafter\def\csname LT4\endcsname{\color[rgb]{0,1,1}}%
      \expandafter\def\csname LT5\endcsname{\color[rgb]{1,1,0}}%
      \expandafter\def\csname LT6\endcsname{\color[rgb]{0,0,0}}%
      \expandafter\def\csname LT7\endcsname{\color[rgb]{1,0.3,0}}%
      \expandafter\def\csname LT8\endcsname{\color[rgb]{0.5,0.5,0.5}}%
    \else
      \def\colorrgb#1{\color{black}}%
      \def\colorgray#1{\color[gray]{#1}}%
      \expandafter\def\csname LTw\endcsname{\color{white}}%
      \expandafter\def\csname LTb\endcsname{\color{black}}%
      \expandafter\def\csname LTa\endcsname{\color{black}}%
      \expandafter\def\csname LT0\endcsname{\color{black}}%
      \expandafter\def\csname LT1\endcsname{\color{black}}%
      \expandafter\def\csname LT2\endcsname{\color{black}}%
      \expandafter\def\csname LT3\endcsname{\color{black}}%
      \expandafter\def\csname LT4\endcsname{\color{black}}%
      \expandafter\def\csname LT5\endcsname{\color{black}}%
      \expandafter\def\csname LT6\endcsname{\color{black}}%
      \expandafter\def\csname LT7\endcsname{\color{black}}%
      \expandafter\def\csname LT8\endcsname{\color{black}}%
    \fi
  \fi
  \setlength{\unitlength}{0.0500bp}%
  \begin{picture}(7200.00,5040.00)%
    \gplgaddtomacro\gplbacktext{%
      \csname LTb\endcsname%
      \put(1078,767){\makebox(0,0)[r]{\strut{} 0.9}}%
      \put(1078,1435){\makebox(0,0)[r]{\strut{} 1}}%
      \put(1078,2103){\makebox(0,0)[r]{\strut{} 1.1}}%
      \put(1078,2771){\makebox(0,0)[r]{\strut{} 1.2}}%
      \put(1078,3439){\makebox(0,0)[r]{\strut{} 1.3}}%
      \put(1078,4107){\makebox(0,0)[r]{\strut{} 1.4}}%
      \put(1078,4775){\makebox(0,0)[r]{\strut{} 1.5}}%
      \put(1273,484){\makebox(0,0){\strut{} 0.7}}%
      \put(1567,484){\makebox(0,0){\strut{}$z_l$}}%
      \put(1812,484){\makebox(0,0){\strut{}$z_1^{<}$}}%
      \put(2232,484){\makebox(0,0){\strut{} 0.75}}%
      \put(2638,484){\makebox(0,0){\strut{}$z_1$}}%
      \put(3191,484){\makebox(0,0){\strut{} 0.8}}%
      \put(4151,484){\makebox(0,0){\strut{} 0.85}}%
      \put(5110,484){\makebox(0,0){\strut{} 0.9}}%
      \put(5527,484){\makebox(0,0){\strut{}$z_2$}}%
      \put(6069,484){\makebox(0,0){\strut{} 0.95}}%
      \put(6512,484){\makebox(0,0){\strut{}$z_2^{>}$}}%
      \put(6804,484){\makebox(0,0){\strut{}$z_r$}}%
      \put(308,2771){\rotatebox{-270}{\makebox(0,0){\strut{}$1/\alpha$}}}%
      \put(4038,154){\makebox(0,0){\strut{}$z$}}%
      \put(3191,1569){\makebox(0,0)[l]{\strut{}$(i)$}}%
      \put(3191,1903){\makebox(0,0)[l]{\strut{}$(ii)$}}%
      \put(5110,2637){\makebox(0,0)[l]{\strut{}$(iii)$}}%
      \put(3191,3105){\makebox(0,0)[l]{\strut{}$(iv)$}}%
      \put(3191,3773){\makebox(0,0)[l]{\strut{}$(v)$}}%
    }%
    \gplgaddtomacro\gplfronttext{%
      \csname LTb\endcsname%
      \put(4045,4602){\makebox(0,0)[r]{\strut{}$h(z)$ with $g(z)>0$}}%
      \csname LTb\endcsname%
      \put(4045,4382){\makebox(0,0)[r]{\strut{}$h(z)$ with $g(z)<0$}}%
    }%
    \gplbacktext
    \put(0,0){\includegraphics{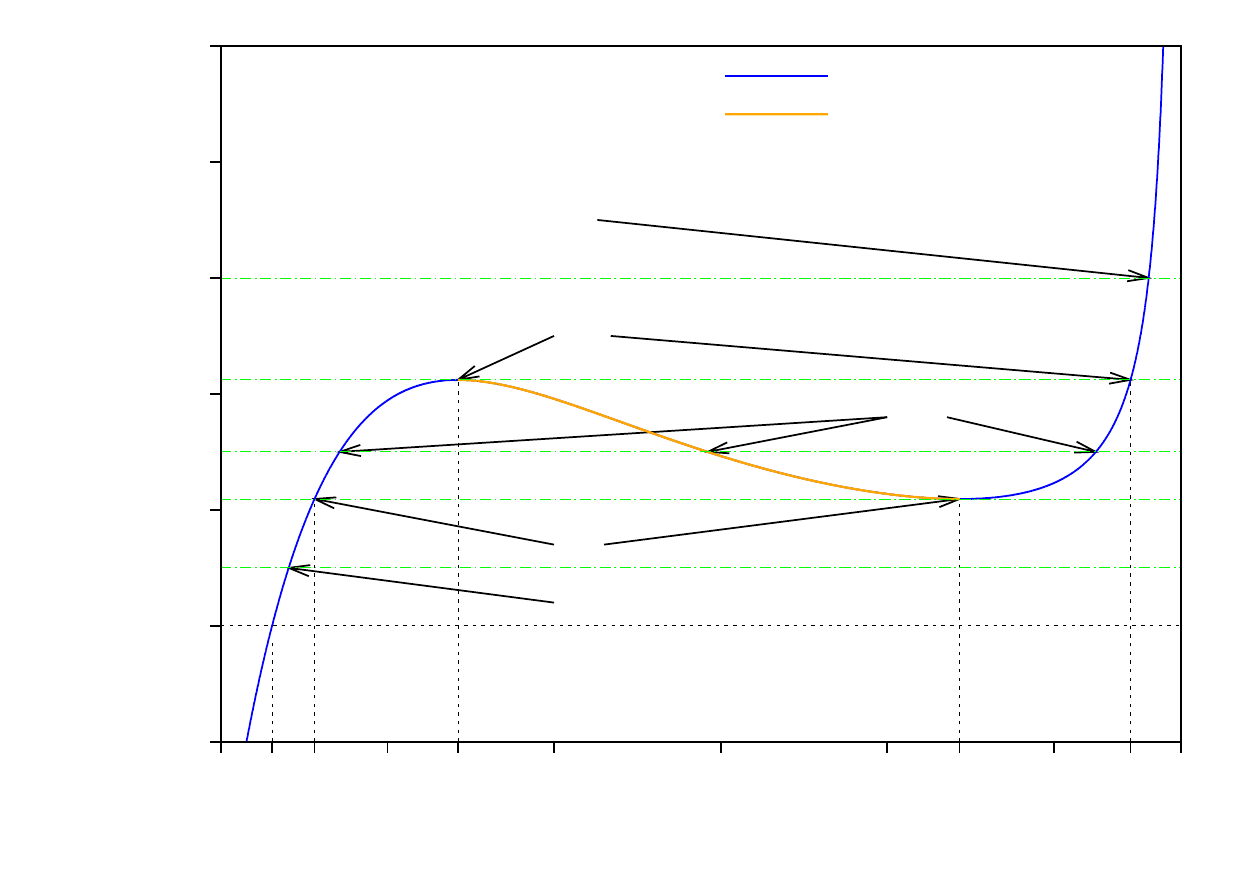}}%
    \gplfronttext
  \end{picture}%
\endgroup
}
\parbox{0.8\textwidth}{
\captionof{figure}{\label{fig:h(z)_extrema}$h(z)$ for $\lo=3, \hi=20$, $\min_z g(z)<0$.}}
\end{minipage}

\begin{proof}
\begin{compactenum}[$1.$]
\item From Lemma~\ref{lem:critical_points_interval}$(\ref{prop:critical_points_interval})$ it follows that all critical points $\tilde{z}$ must be from $[z_l,z_r)$.
Consider the function $h(z)$. According to Lemma~\ref{lem:h(z)}$(\ref{prop:h(z)_middle_interval}),(\ref{prop:h(z)_monotonicity_middle_interval_strict_increasing})$
it holds that for each $x$ from $[1,+\infty)$ there is exactly one $z$ from $[z_l,z_r)$ such that $h(z)=x$.
Furthermore, according to \eqref{eq:critical_h(z)} we have $\frac{\partial T(z)}{\partial z}=0 \Leftrightarrow \alpha= 1/h(z)$.
It follows that for each $\alpha\in(0,1]$ there is exactly one $\tilde{z}$ that is a critical point, that is it holds
$\alpha=1/h(\tilde{z})$. Let $\min_z g(z) \geq 0$ then it must hold $g(\tilde{z})\geq 0$ as well.
Since $\tilde{z}$ is the only critical point it follows with \eqref{eq:lim_T(z)} that it must be a global minimum point.
%
\item From Lemma~\ref{lem:h(z)}$(\ref{prop:h(z)_monotonicity_middle_interval_inc_dec_inc})$
we know that for $z\in [z_l,z_r)$ the function $h(z)$ is strictly increasing, reaches a local 
maximum at $z_1$, is strictly decreasing, reaches a local minimum at $z_2$ and is strictly increasing to $+\infty$ afterwards. 
Furthermore it holds $g(z)=0$ for $z\in\{z_1,z_2\}$ (definition of $z_1,z_2$),  $g(z)>0$ for $z<z_1$ and $z>z_2$, as well as $g(z)<0$ for $z\in(z_1,z_2)$ (Lemma~\ref{lem:h(z)}$(\ref{prop:h(z)_monotonicity})$).

Consider the condition \eqref{eq:critical_h(z)}.
\begin{compactenum}[$(i)$]
\item For all $\alpha$ with $1/\alpha\in [1,h(z_2))$ there is, according to Lemma~\ref{lem:h(z)},
      exactly one $z$ with $1/\alpha=h(z)$. In addition we have that $z<z_1$.
      Utilizing that $g(z)>0$, for $z<z_1$,(Lemma~\ref{lem:g(z)}$(\ref{prop:g(z)_monotonicity}),(\ref{prop:g(z)_left_interval})$)
      the claim follows by Lemma~\ref{lem:critical_points_interval}$(\ref{prop:critical_points_minima})$.
\item Let $1/\alpha=h(z_2)$ and let $\tilde{z}_2=z_2$ then according to Lemma~\ref{lem:h(z)} there is exactly one other point $\tilde{z}_1$, such that
      $\alpha=1/h(\tilde{z}_1)$. Furthermore it holds $g(\tilde{z}_1)>0$ and $g(\tilde{z}_2)=0$.
      According to Lemma~\ref{lem:h(z)}$(\ref{prop:h(z)_monotonicity})$ $\tilde{z}_1$ must be a local minimum point.
      Because of the monotonicity of $T(z)$ \eqref{eq:lim_T(z)} the other critical point must be an inflection point.
      Hence $\tilde{z}_1$ is also a global minimum point.
\item According to Lemma~\ref{lem:h(z)} there are exactly $3$ different points $\tilde{z}_i$, $1\leq i\leq 3$, such that 
      $1/\alpha=h(\tilde{z}_i)$ and $\frac{\partial T}{\partial z}(\tilde{z_i})=0$, respectively.
      Furthermore it holds $\tilde{z}_1< z_1< \tilde{z}_3< z_2<\tilde{z}_2$ and $g(\tilde{z}_1)>0, g(\tilde{z}_2)>0, g(\tilde{z}_3)<0$.
      From Lemma \ref{lem:critical_points_interval}$(\ref{prop:critical_points_minima}),(\ref{prop:critical_points_maxima})$ it follows that $\tilde{z}_1$ and $\tilde{z}_2$ are local minimum points of $T(z)$ and
      $\tilde{z}_3$ is a local maximum point of $T(z)$.
\item The case  $1/\alpha=1/h(z_1)$ is analogous to the case $(ii)$. 
\item The case  $1/\alpha \in (h(z_1),\infty)$ is analogous to the case $(i)$.
\end{compactenum}
\end{compactenum}
This finishes the proof of the lemma.
\end{proof}
The last lemma gives a complete characterization of the local extrema of \eqref{eq:transformed_threshold_function}
in $z$-direction including the global minimum for arbitrary but fixed $\alpha$.
It remains to find a value $\alpha^*$ that maximizes the threshold function at the corresponding global minimum in $z$-direction.
So the point we are looking for could be a saddle point of $T(z,\alpha)$. Indeed the following lemma shows
that $T(z,\alpha)$ has exactly one saddle point for \emph{unbounded} $\alpha$, i.e. $\alpha\in\mathbb{R}$,
and Theorem~\ref{theo:main} finally shows under which conditions 
this point is the optimum we are looking for.
\begin{lemma}
\label{lem:saddle}
Let $\alpha \in \mathbb{R}$. Then $T(z,\alpha)$ has exactly one saddle point 
\begin{displaymath}
(\tilde{z},\tilde{\alpha}) =\Big(\left(\tfrac{\lo}{\hi}\right)^{\tfrac{1}{\hi-\lo}} ,\tfrac{\hi-1}{\hi-\lo}-\tfrac{1}{f(\tilde{z})\cdot (\hi-\lo)}\Big). 
\end{displaymath}
\end{lemma}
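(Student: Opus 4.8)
The plan is to treat $T$ as a function of the two variables $z$ and $\alpha$, locate all of its critical points, show there is exactly one, and then read off from the Hessian that this point is a nondegenerate saddle; uniqueness of the saddle point is then immediate.

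First I would impose $\nabla T = 0$. By \eqref{eq:first_derivative_alpha} we have $\partial T/\partial\alpha = \ln(1-z)\cdot Z_0(z)/D_0(z,\alpha)^2$, and since $\ln(1-z)\neq 0$ for $z\in(0,1)$ and $D_0\neq 0$ on the domain of $T$, this vanishes precisely when $Z_0(z)=0$, i.e.\ $\lo\, z^{\lo-1}=\hi\, z^{\hi-1}$, whose unique solution in $(0,1)$ is $z=z'=(\lo/\hi)^{1/(\hi-\lo)}$ (indeed $z'\in(0,z_r)$ by Lemma~\ref{lem:special_points}$(iii)$). With $z$ pinned to $z'$, the equation $\partial T/\partial z = 0$ together with \eqref{eq:critical_h(z)} forces $\alpha = 1/h(z')$; here $h(z')$ is well defined and nonzero because $z'<z_r$ and $f(z')>f(z_r)=1/(\hi-1)$ by Lemma~\ref{lem:f(z)}$(vi)$, so the denominator $(\hi-1)f(z')-1$ of $h$ is strictly positive at $z'$. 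Hence $T$ possesses exactly one critical point.

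Second I would simplify $h(z')$ to match the claimed coordinate. Using $\lo\, z'^{\lo-1}=\hi\, z'^{\hi-1}$ one gets $Z_1(z')=\lo(\lo-1)z'^{\lo-1}-\hi(\hi-1)z'^{\hi-1}=(\lo-\hi)\,\hi\, z'^{\hi-1}$, and substituting $Z_0(z')=0$ and this value of $Z_1(z')$ into the second representation of $h$ yields
\[
 h(z')=\frac{-f(z')\,Z_1(z')}{\hi\, z'^{\hi-1}\big((\hi-1)f(z')-1\big)}=\frac{f(z')\,(\hi-\lo)}{(\hi-1)f(z')-1},
\]
so that $\tilde\alpha := 1/h(z') = \frac{\hi-1}{\hi-\lo}-\frac{1}{f(z')\,(\hi-\lo)}$, exactly the second coordinate in the statement.

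Finally I would classify $(\tilde z,\tilde\alpha)=(z',1/h(z'))$ via the Hessian. Write $T_{zz},T_{z\alpha},T_{\alpha\alpha}$ for the second partial derivatives listed after \eqref{eq:second_derivative_zz}. Since $Z_0(z')=0$, the expression for $\partial^2 T/(\partial\alpha)^2$ gives $T_{\alpha\alpha}(\tilde z,\tilde\alpha)=0$; and in the mixed partial $\partial^2 T/(\partial z\,\partial\alpha)$ the two terms carrying the factor $Z_0$ drop out, leaving $T_{z\alpha}(\tilde z,\tilde\alpha)=\frac{\ln(1-z')}{z'}\cdot\frac{Z_1(z')}{D_0(\tilde z,\tilde\alpha)^2}$, which is strictly positive (hence nonzero) because $\ln(1-z')<0$ and $Z_1(z')=(\lo-\hi)\,\hi\, z'^{\hi-1}<0$. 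Therefore the Hessian determinant is $T_{zz}T_{\alpha\alpha}-T_{z\alpha}^2=-T_{z\alpha}^2<0$, so the Hessian is indefinite and $(\tilde z,\tilde\alpha)$ is a genuine saddle point; since it is the only critical point of $T$, it is the only saddle point. I expect the only real obstacle to be the bookkeeping in step two — the algebraic reduction of $h(z')$ and checking that every denominator that occurs ($D_0$, $(\hi-1)f(z')-1$, and $h(z')$ itself) is nonzero at $z'$ — but this is precisely what the cited properties of $f$ and of the special points supply.
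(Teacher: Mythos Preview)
Your proof is correct and follows essentially the same approach as the paper: solve $\nabla T=0$ to find the unique critical point $(z',1/h(z'))$, simplify $1/h(z')$ to the stated form, and use $T_{\alpha\alpha}=0$ together with $T_{z\alpha}\neq 0$ to obtain a negative Hessian determinant. Your simplification of $h(z')$ via the $Z_0,Z_1$ representation is slightly cleaner than the paper's direct computation, but the argument is otherwise identical.
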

\begin{proof}
Solving the linear system $\{ \frac{\partial T(z,\alpha)}{\partial z}=0 , \frac{\partial T(z,\alpha)}{\partial \alpha}=0\}$
gives 
\begin{align*}
\frac{\partial T(z,\alpha)}{\partial z}=0 &\Leftrightarrow \alpha= 1/h(z)\\
\frac{\partial T(z,\alpha)}{\partial \alpha}= 0& \Leftrightarrow   \frac{\ln(1-z)\cdot Z_0(z)}{D_0(z,\alpha)^2}=0 \Leftrightarrow Z_0(z)=0\\
  &\Leftrightarrow 
\lo\cdot z^{\lo-1}=\hi\cdot z^{\hi-1}\Leftrightarrow z=\left(\frac{\lo}{\hi}\right)^{\frac{1}{\hi-\lo}} \ .
\end{align*}
There is only one solution of $\frac{\partial T(z,\alpha)}{\partial \alpha}= 0$ and 
according to Lemma~\ref{lem:h(z)}$(\ref{prop:h(z)_pole})$ and Lemma~\ref{lem:f(z)}$(\ref{prop:f(z)_z'})$
$h(z)$ is defined at $z'=\left(\frac{\lo}{\hi}\right)^{\frac{1}{\hi-\lo}}$. Hence we get a unique critical point
$(\tilde{z},\tilde{\alpha})$ where $\tilde{z}=z'$ and 
\begin{align*}
\tilde{\alpha}=1/h(\tilde{z})&= \frac{\hi\cdot \left(\frac{\lo}{\hi}\right)^{\frac{\hi-1}{\hi-\lo}} \cdot ( f(\tilde{z})\cdot (\hi-1)-1) }
{-f(\tilde{z})\cdot \left(  \lo^2\cdot  \left(\frac{\lo}{\hi}\right)^{\frac{\lo-1}{\hi-\lo}} -\hi^2\cdot \left(\frac{\lo}{\hi}\right)^{\frac{\hi-1}{\hi-\lo}} \right)}\\
&= 
\frac{\hi\cdot \left(\frac{\lo}{\hi}\right)^{\frac{\hi-1}{\hi-\lo}} \cdot ( f(\tilde{z})\cdot (\hi-1)-1) }
{-f(\tilde{z})\cdot \hi^2 \cdot \left(\frac{\lo}{\hi}\right)^{\frac{\hi-1}{\hi-\lo}} \cdot (\frac{\lo}{\hi}-1)}\\
&=\frac{f(\tilde{z})\cdot(\hi-1)-1}{f(\tilde{z})\cdot (\hi-\lo)}=\frac{\hi-1}{\hi-\lo}-\frac{1}{f(\tilde{z})\cdot (\hi-\lo)} \ .
\end{align*}

To classify this critical point we consider the second partial derivatives of $T(z,\alpha)$.
We have $Z_0(\tilde{z})=0$ and $Z_1(\tilde{z})>0$, since
\begin{align*}
Z_1(\tilde{z})=&\lo\cdot (\lo-1)\cdot  \left(\frac{\lo}{\hi}\right)^{(\lo-1)/(\hi-\lo)}-\hi\cdot (\hi-1)\cdot \left(\frac{\lo}{\hi}\right)^{(\hi-1)/(\hi-\lo)}<0\\
&\Leftrightarrow \frac{\lo\cdot (\lo-1)}{\hi\cdot (\hi-1)}<\left( \frac{\lo}{\hi}\right)^{(\hi-\lo)/(\hi-\lo)} \Leftrightarrow \frac{\lo-1}{\hi-1}<1\checkmark
\end{align*}
It follows that $\frac{\partial^2 }{(\partial \alpha)^2} T(\tilde{z},\tilde{\alpha})=0$ as well as $\frac{\partial^2}{\partial z \partial \alpha} T(\tilde{z},\tilde{\alpha})>0$.
Therefore the Hessian matrix $H$ with
\begin{equation*}
H= \begin{pmatrix}
\frac{\partial^2 }{(\partial z)^2} T(\tilde{z},\tilde{\alpha})            & \frac{\partial^2}{\partial z \partial \alpha} T(\tilde{z},\tilde{\alpha})\\
\frac{\partial^2}{\partial z \partial \alpha} T(\tilde{z},\tilde{\alpha}) & \frac{\partial^2 }{(\partial \alpha)^2} T(\tilde{z},\tilde{\alpha}) 
   \end{pmatrix} \hat{=}
\begin{pmatrix}
=0 & >0 \\>0 &\frac{\partial^2 }{(\partial z)^2} T(\tilde{z},\tilde{\alpha})
\end{pmatrix}
\end{equation*}
has determinant $\det(H)< 0$, that is $(\tilde{z},\tilde{\alpha})$ is a saddle point.
\end{proof}
\subsection{Putting It All Together}
Now we prove Theorem~\ref{theo:main}.
\begin{proof}
Using \eqref{eq:critical_h(z)} we can define
a function of critical points $\tilde{T}(z)$ of $T(z,\alpha)$ as follows
\begin{align*}
 \tilde{T}(z):=&T(z,1/h(z))=\frac{-\ln(1-z)}{1/h(z)\cdot Z_0(z)+\hi\cdot z^{\hi-1}}\\
              =&\frac{\frac{z}{1-z}\cdot Z_0(z)+\ln(1-z)\cdot Z_1(z)}{\hi\cdot\lo\cdot(\hi-\lo)\cdot z^{\hi+\lo-2}} \ .
\end{align*}
The first derivative of $\tilde{T}(z)$ is
\begin{align*}
\frac{\partial \tilde{T}(z)}{\partial z} =& \frac{1}{\hi\cdot\lo\cdot(\hi-\lo)\cdot z^{\hi+\lo-2}}\cdot\left( \frac{Z_0(z)}{(1-z)^2}+\frac{\ln(1-z)\cdot Z_2(z)}{z} \right)\\
-&\frac{\hi+\lo-2}{\hi\cdot\lo\cdot(\hi-\lo)\cdot z^{\hi+\lo-2}} \cdot \left(\frac{Z_0(z)}{1-z} +\frac{\ln(1-z)\cdot Z_1(z)}{z}\right) \ .
\end{align*}
We are interested in the monotonicity of $\tilde{T}(z)$.
\begin{align*}
 &\frac{\partial \tilde{T}(z)}{\partial z} \overset{!}{>} 0 \\
&\Leftrightarrow
\frac{Z_0(z)}{(1-z)^2}+\frac{\ln(1-z)\cdot Z_2(z)}{z}\\
&\hspace{1.9cm}-(\hi+\lo-2) \cdot \left(\frac{Z_0(z)}{1-z} +\frac{\ln(1-z)\cdot Z_1(z)}{z}\right)>0 \\
&\Leftrightarrow 
\frac{Z_0(z)}{1-z}-(\hi+\lo-2)\cdot Z_0(z)-f(z)\cdot \left( Z_2(z)-(\hi+\lo-2)\cdot Z_1(z)\right)>0\\
&\Leftrightarrow
\frac{Z_0(z)}{1-z}-(\hi+\lo-2)\cdot Z_0(z)+f(z)\cdot Z_0(z)\cdot(\hi-1)\cdot(\lo-1)>0 \ .
\end{align*}
Note that $Z_0(z)>0 \Leftrightarrow z<\left(\frac{\lo}{\hi}\right)^{\frac{1}{\hi-\lo}}=z'$.
Division by $Z_0(z)$ gives, by definition of $g(z)$
\begin{equation*}
\label{eq:function_critical_points}
\tag{$\star$}
\begin{split}
 &\forall z<\left(\frac{\lo}{\hi}\right)^{\frac{1}{\hi-\lo}} \colon \frac{\partial \tilde{T}(z)}{\partial z} > 0 \Leftrightarrow g(z)>0 \\
 &\forall z>\left(\frac{\lo}{\hi}\right)^{\frac{1}{\hi-\lo}} \colon \frac{\partial \tilde{T}(z)}{\partial z} < 0 \Leftrightarrow g(z)>0 \ .
\end{split}
\end{equation*}

\begin{compactenum}[1.]
 \item If $\min_z g(z)>0$ then according to \eqref{eq:function_critical_points} we have $\frac{\partial \tilde{T}(z)}{\partial z}>0$ 
for all $z<z'$ and we have $\frac{\partial \tilde{T}(z)}{\partial z}<0$ for all $z>z'$. Hence the function of critical points has a global maximum 
in $\alpha$-direction at $z'$. Consider the special case $\min_z g(z)=0$ with $z_{\min}=\arg\min_z g(z)$. According to Lemma~\ref{lem:g(z)}$(\ref{prop:g(z)_monotonicity})$
and the definition of $z_1$ and $z_2$ we have $z_1=z_2=z_{\min}$. From Lemma~\ref{lem:special_points}$(\ref{prop:z'_not_zero})$
it follows that $z_{\min}\neq z'$. Hence $z_{\min}$ must be an inflection point of
$\tilde{T}(z)$ since before and after $z_{\min}$ the monotonicity is the same.
Hence the function of critical points has a global maximum 
in $\alpha$-direction at $z'$ also in this case. 
\begin{compactenum}[$(i)$]
\item If $h(z')>1$ then according to Lemma~\ref{lem:h(z)}$(\ref{prop:h(z)_middle_interval})$ we have $z'\in(z_l,z_r)$.
It follows from Lemma~\ref{lem:classifying_extrema}$(\ref{prop:classify_fst})$ that $\tilde{T}(z')$ is a global minimum in $z$-direction.
Hence, $(z',1/h(z'))$ is the optimum point, which is according to Lemma~\ref{lem:saddle} the saddle point.

\item If $h(z')\leq 1$ then $z'$ must be from the interval $(0,z_l]$ (Lemma~\ref{lem:h(z)}$(\ref{prop:h(z)_first_interval})$)
 and not from the interval $(z_r,1)$, (Lemma~\ref{lem:h(z)}$(\ref{prop:h(z)_last_interval})$), since we have 
$f(z')>f(z_r)$ (Lemma~\ref{lem:f(z)}$(\ref{prop:f(z)_z'})$)  and $f(z)$ is monotonically decreasing (Lemma~\ref{lem:f(z)}$(\ref{prop:f(z)_decreasing})$).
But if $z'\leq z_l$ then because of the monotonicity of $\tilde{T}(z)$ the optimal
$z$ value is the nearest feasible critical point. That is the optimum point is the (degenerated) solution $(z_l,1)$.
\end{compactenum}

\item Since $\min_z g(z)<0$ it follows from Lemma~\ref{lem:h(z)}$(\ref{prop:h(z)_monotonicity_middle_interval_inc_dec_inc})$
that for $z\in[z_l,z_r)$ the function $h(z)$ is strictly increasing, reaches a maximum at $z_1$,
is strictly decreasing, reaches a minimum at $z_2$, and is strictly increasing afterwards.
Furthermore we have $g(z)>0$ for $z\in [z_l,z_1)$, $g(z)<0$ for $z\in(z_1,z_2)$, $g(z)>0$ for $z\in(z_2,z_r)$, 
and $g(z)=0$ for $z\in\{z_1,z_2\}$.
An optimal $z$ must be global minimum point in $z$-direction. According to 
Lemma~\ref{lem:critical_points_interval}$(\ref{prop:critical_points_minima})$ and 
Lemma~\ref{lem:classifying_extrema}$(\ref{prop:classify_snd})$
global minimum points are the points from $[z_l,z_1) \cup (z_2,z_r)$.
\begin{compactenum}[$(i)$]
\item An optimal $z$ cannot be from $(z_2,z_r)$ since for each $z\in(z_2,z_r)$ there is an $\eps>0$
such that $z-\eps \in (z_2,z_r)$ and  $\tilde{T}(z)<\tilde{T}(z-\eps)$. This descent converges to $z_2$.
But according to Lemma~\ref{lem:classifying_extrema}$(\ref{prop:classify_snd})$  $z_2$ is an inflection point and not a global minimum point.
Hence the optimal $z$ must be from $[z_l,z_1)$. For each $z\in(z_l,z_1)$ there is an $\eps>0$
such that $z-\eps \in [z_l,z_1)$ and  $\tilde{T}(z)<\tilde{T}(z-\eps)$. This descent converges to $z_l$.
\item According to Lemma~\ref{lem:special_points}$(\ref{prop:z'_not_zero})$ $z'\neq z_2$. It follows that $z'\in (z_l,z^{<}_1]$,
see also Lemma~\ref{lem:classifying_extrema}$(\ref{prop:classify_snd})$. An optimal $z$ cannot be from $(z_2,z_r)$
for the same reasons as in case $2(i)$.
\item Consider an arbitrary but fixed $\alpha$ with 
$1/\alpha \in (h(z_2),h(z_1))$. According to Lemma~\ref{lem:classifying_extrema}$(\ref{prop:classify_snd_three})$ we have two different points $\tilde{z}_1,\tilde{z}_2$, with 
$\tilde{z}_1<z_1<z_2<\tilde{z}_2$, that are local minimum points of the threshold function $T(z,\alpha)$ in $z$-direction.
\begin{compactitem}[$\bullet$]
 \item Let $z_1<z'<z_2$. Decreasing $\alpha$ (increasing $1/\alpha$) by an arbitrary small but fixed positive value 
gives two new 
local minimum points in $z$-direction, $\tilde{z}_1+\varepsilon$, $\tilde{z}_2+\delta$, where $\varepsilon,\delta>0$.
According to \eqref{eq:function_critical_points} it holds that 
$\tilde{T}(\tilde{z}_1)<\tilde{T}(\tilde{z}_1+\varepsilon)$
and
$\tilde{T}(\tilde{z}_2)>\tilde{T}(\tilde{z}_2+\delta)$.
Hence for the left critical point the local minimum in $z$-direction becomes smaller while the potential threshold becomes larger
and for the right critical point the local minimum in $z$-direction becomes larger while the potential threshold becomes smaller.
Increasing $\alpha$ by an arbitrary small but fixed positive value reverses the behavior.
Assume we have found an optimal $\alpha$, that is $\alpha=\alpha^*$. Decreasing $\alpha$ 
by some small fixed positive value increases the threshold for the left critical point
but because of the optimality of $\alpha$ we have no global minimum for the left critical point but only a local minimum.
Increasing $\alpha$ increases the threshold for the right critical point but because of the optimality of
$\alpha$ we have no global minimum for the right critical point but only a local minimum. 
Hence for $\alpha^*$ both critical points $z^*$ and $z^{**}$, with $1/\alpha^*=h(z^*)=h(z^{**})$,
lead to the same minimum in $z$-direction, that is both local minimum points are also global minimum points and it holds
$T(z^*,\alpha^*)=T(z^{**},\alpha^*)$ is the optimal threshold.
\item Let $z'<z_1$. Assume that $1/\alpha\in (h(z'),h(z_1))$, then $\alpha$ cannot be optimal
since increasing $\alpha$ by an arbitrary small but fixed positive value increases $\tilde{T}(\tilde{z}_1)$ as well as
$\tilde{T}(\tilde{z}_2)$ and one of the critical points must be the global minimum point in $z$-direction.
Hence the optimum $1/\alpha$ must be in the interval $[h(z_2),h(z')]$.
\item The case $z'>z_2$ is analogous to the case $z'<z_1$.
\end{compactitem}
\item According to Lemma~\ref{lem:special_points}$(\ref{prop:z'_not_zero})$ $z'\neq z_1$. It follows that $z'\in [z^{>}_2,z_r)$.
An optimal $z$ cannot be from $[z_l,z_1)$.
\end{compactenum}
\end{compactenum}
\end{proof}
For given $\vec{\dg}=(\lo,\hi)$, Algorithm~\ref{algo:opt_thresh} calculates $\alpha^*,z^*$ and $c^*=T(z^*,\alpha^*)$
of Theorem~\ref{theo:main} and optimization problem \eqref{eq:optimization_problem}, respectively. 
If one wants to determine the optimal values for fixed $\lo$ but increasing $\hi$
one can make use of the following observation. 
According to Lemma~\ref{lem:g(z)}$(\ref{prop:g(z)_b'})$
there is a threshold $\hi'$, such that for $\lo <\hi < \hi'$ 
it holds $\min_z g(z,\hi)\geq 0$ and for $\hi\geq \hi'$ it holds $g(z,\hi)<0$.
That is after reaching $b'$ we don't need to further calculate the minimum of $g(z)$.
The following table lists some values for $\hi'$.
\begin{center}
\begin{tabular}{l|rrrrrrrr}
 $\lo$ & 3  &  4  &  5  &  6 &  7  &  8 &   9  &  10 \\\hline
 $\hi'$ & 16 & 29  & 45  & 62 & 79  & 98 & 117  & 137
 \end{tabular}
\end{center}
\vfill\pagebreak
\begin{algorithm}[H]
\caption{\label{algo:opt_thresh}Optimal Thresholds}
\SetKwInput{KwPrerequisite}{Prerequisite}
\SetKwInput{KwPurpose}{Purpose}
\SetKw{KwTrue}{true}
\SetKw{KwOr}{or}
\SetKw{KwBreak}{break}

\newcommand{\ns}[2]{ \texttt{numSolve(}#1,#2\texttt{)}} 
\KwIn{$\lo,\hi$, $\eps$ (stopping criterion for binary search)}
 \KwPurpose{finds optimal thresholds for parameters $\lo$ and $\hi$. }
\KwPrerequisite{subroutine $\ns{ equation}{interval}$ that returns numerical solution of $equation$ within the given $interval$}
\emph{Initialization}:

$z_l \leftarrow \ns{ f(z)=\frac{1}{\lo-1}}{ z \in (0,1) }$\;
$z_r \leftarrow \ns{ f(z)=\frac{1}{\hi-1}}{ z \in (z_l,1)}$\;
$z_g \leftarrow \ns{ \frac{\partial g(z)}{\partial z}=0}{ z \in (0,1)}$\;
$z'        \leftarrow \left( \frac{\lo}{\hi} \right)^{\frac{1}{\hi-\lo}}$; $z_1 \leftarrow z'$; $z_2 \leftarrow z'$\;
\If{ $g(z_g,\lo,\hi)<0$ }
{
 $z_1       \leftarrow \ns{ g(z,\lo,\hi) = 0}{z \in (z_l,z_g)}$\;
 $z_2       \leftarrow \ns{ g(z,\lo,\hi) = 0}{z \in (z_g,z_r)}$\;
}
\smallskip
\emph{Optimization}:

\eIf{ $h(z',\lo,\hi)\leq 1$ }
{
 $z^* \leftarrow z_l$; $\alpha^* \leftarrow 1$; $T^* \leftarrow \frac{-\ln( 1-z_l)}{\lo\cdot z_l^{\lo-1}}$\;
}
{
  \eIf{$h(z',\lo,\hi)\leq h(z_2,\lo,\hi)$  \KwOr $h(z',\lo,\hi)\geq h(z_1,\lo,\hi)$}
  {
    $z^*      \leftarrow z'$; $\alpha^* \leftarrow \tfrac{\hi-1}{\hi-\lo}-\tfrac{1}{f({z}^*)\cdot (\hi-\lo)}$ ;
    $T^*      \leftarrow \ln\left( 1-\left(\frac{\lo}{\hi} \right)^\frac{1}{\hi-\lo}  \right)\cdot \left(\frac{\hi^{\lo-1}}{\lo^{\hi-1}}\right)^{\frac{1}{\hi-\lo}}$\;
  }
  {
    $u\leftarrow z_1$; $l \leftarrow z_2$\;
    \lIf{$z'<z_1$}{$u \leftarrow z'$}\;
    \lIf{$z'>z_2$}{$l \leftarrow z'$}\;
    $\alpha_{\mathrm{min}}\leftarrow \frac{1}{h(u,\lo,\hi)}$;
    $\alpha_{\mathrm{max}}\leftarrow \frac{1}{h(l,\lo,\hi)}$\;
    \While{\KwTrue}
    {
        $\alpha^*\leftarrow \frac{\alpha_{\mathrm{max}}-\alpha_{\mathrm{min}}}{2}+\alpha_{\mathrm{min}}$\;
        $z^{**}   \leftarrow \ns{h(z,\lo,\hi)-\frac{1}{\alpha^*}=0}{z\in(z_l, u)}$\;
        $z^{*\ }\leftarrow     \ns{h(z,\lo,\hi)-\frac{1}{\alpha^*}=0}{z\in(l      ,z_r)}$\; 
        $t^{**} \leftarrow T(z^{**\ },\lo,\hi,\alpha^*) $\;
        $t^{*\ }  \leftarrow T(z^{*},\lo,\hi,\alpha^*) $\;
        \eIf{ $|t^{*\ }-t^{**}|<\eps$}
        {
           \KwBreak
        }
        {

            \lIf{ $t^{*\ }>t^{**}$ }
            {
              $\alpha_{\mathrm{min}}\leftarrow \alpha^*$\;
            }
            \lElse
            {
              $\alpha_{\mathrm{max}} \leftarrow \alpha^*$\;
            }
        }

    }
  }
}

$\Return(z^*,\alpha^*,T^*)$
\end{algorithm}

\vspace{-1cm}
\section{Optimal Thresholds}
\label{app:optimal_values}
The following four tables list optimal thresholds for different edge sizes  $\lo=\dg_1$ and $\hi=\dg_2$, with $\lo\in\{3,4,5,6\}$ and $\lo\leq\hi\leq 50$.\hfill\\
\begin{minipage}{0.5\textwidth}
  \centering
\scalebox{0.8}{
\begin{tabular}{c|ccccc}
$\hi$ & $z^*$ & $\lambda^*$ & $\alpha^*$ & $\Adg$ & $c^*$\\\hline
3 & 0.71533 & 1.25643 & 1.00000 & 3.00000 & 0.81847 \\\hline
4 & 0.75000 & 1.38629 & 0.83596 & 3.16404 & 0.82151 \\ 
5 & 0.77460 & 1.48986 & 0.84671 & 3.30658 & 0.82770 \\ 
6 & 0.79370 & 1.57843 & 0.85419 & 3.43744 & 0.83520 \\ 
7 & 0.80911 & 1.65604 & 0.86014 & 3.55944 & 0.84321 \\ 
8 & 0.82188 & 1.72527 & 0.86512 & 3.67439 & 0.85138 \\ 
9 & 0.83268 & 1.78787 & 0.86940 & 3.78359 & 0.85952 \\ 
10 & 0.84198 & 1.84505 & 0.87315 & 3.88795 & 0.86752 \\ 
11 & 0.85009 & 1.89774 & 0.87648 & 3.98818 & 0.87535 \\ 
12 & 0.85724 & 1.94662 & 0.87946 & 4.08482 & 0.88298 \\ 
13 & 0.86361 & 1.99224 & 0.88217 & 4.17830 & 0.89040 \\ 
14 & 0.86932 & 2.03503 & 0.88464 & 4.26898 & 0.89761 \\ 
15 & 0.87449 & 2.07533 & 0.88690 & 4.35715 & 0.90461 \\ 
16 & 0.90263 & 2.32922 & 0.88684 & 4.47102 & 0.91089 \\ 
17 & 0.92384 & 2.57487 & 0.88616 & 4.59372 & 0.91510 \\ 
18 & 0.93703 & 2.76508 & 0.88599 & 4.71015 & 0.91772 \\ 
19 & 0.94632 & 2.92464 & 0.88620 & 4.82077 & 0.91922 \\ 
20 & 0.95328 & 3.06354 & 0.88671 & 4.92601 & 0.91992 \\ 
\rowcolor{lightgray}21 & 0.95871 & 3.18715 & 0.88743 & 5.02626 & 0.92004 \\ 
22 & 0.96307 & 3.29883 & 0.88832 & 5.12190 & 0.91974 \\ 
23 & 0.96666 & 3.40086 & 0.88934 & 5.21328 & 0.91914 \\ 
24 & 0.96965 & 3.49488 & 0.89044 & 5.30070 & 0.91832 \\ 
25 & 0.97218 & 3.58213 & 0.89162 & 5.38446 & 0.91734 \\ 
26 & 0.97436 & 3.66355 & 0.89283 & 5.46482 & 0.91626 \\ 
27 & 0.97624 & 3.73993 & 0.89408 & 5.54202 & 0.91510 \\ 
28 & 0.97789 & 3.81185 & 0.89535 & 5.61628 & 0.91390 \\ 
29 & 0.97935 & 3.87983 & 0.89662 & 5.68780 & 0.91266 \\ 
30 & 0.98064 & 3.94430 & 0.89790 & 5.75675 & 0.91141 \\ 
31 & 0.98179 & 4.00560 & 0.89917 & 5.82330 & 0.91016 \\ 
32 & 0.98282 & 4.06404 & 0.90043 & 5.88761 & 0.90891 \\ 
33 & 0.98375 & 4.11988 & 0.90167 & 5.94981 & 0.90768 \\ 
34 & 0.98460 & 4.17334 & 0.90290 & 6.01003 & 0.90645 \\ 
35 & 0.98537 & 4.22462 & 0.90411 & 6.06839 & 0.90525 \\ 
36 & 0.98607 & 4.27390 & 0.90530 & 6.12498 & 0.90406 \\ 
37 & 0.98672 & 4.32132 & 0.90647 & 6.17992 & 0.90290 \\ 
38 & 0.98731 & 4.36702 & 0.90762 & 6.23328 & 0.90177 \\ 
39 & 0.98786 & 4.41113 & 0.90875 & 6.28516 & 0.90065 \\ 
40 & 0.98837 & 4.45375 & 0.90985 & 6.33562 & 0.89956 \\ 
41 & 0.98884 & 4.49497 & 0.91093 & 6.38475 & 0.89850 \\ 
42 & 0.98927 & 4.53489 & 0.91198 & 6.43260 & 0.89746 \\ 
43 & 0.98968 & 4.57359 & 0.91302 & 6.47925 & 0.89644 \\ 
44 & 0.99006 & 4.61114 & 0.91403 & 6.52474 & 0.89545 \\ 
45 & 0.99042 & 4.64761 & 0.91502 & 6.56913 & 0.89448 \\ 
46 & 0.99075 & 4.68305 & 0.91599 & 6.61246 & 0.89354 \\ 
47 & 0.99106 & 4.71752 & 0.91694 & 6.65480 & 0.89261 \\ 
48 & 0.99136 & 4.75108 & 0.91786 & 6.69617 & 0.89171 \\ 
49 & 0.99164 & 4.78376 & 0.91877 & 6.73662 & 0.89083 \\ 
50 & 0.99190 & 4.81563 & 0.91966 & 6.77619 & 0.88997 \\
\end{tabular}}

\parbox{0.9\textwidth}{
\captionof{table}{Optimal values for $\lo=3$ and $\lo\leq\hi\leq 50$. The maximum threshold in this range is about $0.92004$.}}
\end{minipage}\hfill
\begin{minipage}{0.5\textwidth}
\centering
\scalebox{0.8}{
\begin{tabular}{c|ccccc}
$\hi$ & $z^*$ & $\lambda^*$ & $\alpha^*$ & $\Adg$ & $c^*$\\\hline
\\
4 & 0.85100 & 1.90381 & 1.00000 & 4.00000 & 0.77228 \\\hline
5 & 0.85100 & 1.90381 & 1.00000 & 4.00000 & 0.77228 \\ 
6 & 0.85100 & 1.90381 & 1.00000 & 4.00000 & 0.77228 \\ 
7 & 0.85100 & 1.90381 & 1.00000 & 4.00000 & 0.77228 \\ 
8 & 0.85100 & 1.90381 & 1.00000 & 4.00000 & 0.77228 \\ 
9 & 0.85100 & 1.90381 & 1.00000 & 4.00000 & 0.77228 \\ 
10 & 0.85837 & 1.95457 & 0.98319 & 4.10087 & 0.77261 \\ 
11 & 0.86544 & 2.00576 & 0.97048 & 4.20664 & 0.77358 \\ 
12 & 0.87169 & 2.05327 & 0.96143 & 4.30855 & 0.77501 \\ 
13 & 0.87725 & 2.09762 & 0.95477 & 4.40707 & 0.77677 \\ 
14 & 0.88225 & 2.13922 & 0.94974 & 4.50259 & 0.77878 \\ 
15 & 0.88678 & 2.17841 & 0.94587 & 4.59540 & 0.78097 \\ 
16 & 0.89090 & 2.21548 & 0.94285 & 4.68579 & 0.78329 \\ 
17 & 0.89467 & 2.25065 & 0.94046 & 4.77397 & 0.78571 \\ 
18 & 0.89814 & 2.28411 & 0.93856 & 4.86013 & 0.78819 \\ 
19 & 0.90134 & 2.31604 & 0.93704 & 4.94444 & 0.79072 \\ 
20 & 0.90430 & 2.34658 & 0.93581 & 5.02703 & 0.79329 \\ 
21 & 0.90706 & 2.37584 & 0.93482 & 5.10804 & 0.79587 \\ 
22 & 0.90964 & 2.40393 & 0.93402 & 5.18757 & 0.79847 \\ 
23 & 0.91205 & 2.43096 & 0.93338 & 5.26572 & 0.80106 \\ 
24 & 0.91431 & 2.45699 & 0.93287 & 5.34257 & 0.80365 \\ 
25 & 0.91643 & 2.48211 & 0.93247 & 5.41822 & 0.80623 \\ 
26 & 0.91844 & 2.50638 & 0.93215 & 5.49272 & 0.80880 \\ 
27 & 0.92033 & 2.52986 & 0.93191 & 5.56615 & 0.81135 \\ 
28 & 0.92212 & 2.55259 & 0.93173 & 5.63855 & 0.81388 \\ 
29 & 0.93047 & 2.66596 & 0.93157 & 5.71069 & 0.81638 \\ 
30 & 0.94616 & 2.92176 & 0.93133 & 5.78542 & 0.81858 \\ 
31 & 0.95404 & 3.08007 & 0.93119 & 5.85792 & 0.82036 \\ 
32 & 0.95955 & 3.20757 & 0.93113 & 5.92826 & 0.82179 \\ 
33 & 0.96375 & 3.31740 & 0.93115 & 5.99653 & 0.82293 \\ 
34 & 0.96713 & 3.41513 & 0.93124 & 6.06282 & 0.82383 \\ 
35 & 0.96992 & 3.50380 & 0.93138 & 6.12721 & 0.82453 \\ 
36 & 0.97227 & 3.58530 & 0.93157 & 6.18979 & 0.82505 \\ 
37 & 0.97429 & 3.66093 & 0.93180 & 6.25063 & 0.82544 \\ 
38 & 0.97605 & 3.73160 & 0.93206 & 6.30980 & 0.82570 \\ 
39 & 0.97758 & 3.79802 & 0.93236 & 6.36738 & 0.82586 \\ 
\rowcolor{lightgray}40 & 0.97895 & 3.86074 & 0.93268 & 6.42343 & 0.82593 \\ 
41 & 0.98016 & 3.92020 & 0.93303 & 6.47802 & 0.82593 \\ 
42 & 0.98125 & 3.97674 & 0.93339 & 6.53122 & 0.82587 \\ 
43 & 0.98224 & 4.03068 & 0.93377 & 6.58308 & 0.82576 \\ 
44 & 0.98313 & 4.08225 & 0.93416 & 6.63365 & 0.82560 \\ 
45 & 0.98394 & 4.13168 & 0.93456 & 6.68299 & 0.82540 \\ 
46 & 0.98469 & 4.17914 & 0.93497 & 6.73115 & 0.82518 \\ 
47 & 0.98537 & 4.22479 & 0.93539 & 6.77818 & 0.82492 \\ 
48 & 0.98600 & 4.26878 & 0.93582 & 6.82413 & 0.82465 \\ 
49 & 0.98658 & 4.31123 & 0.93624 & 6.86903 & 0.82436 \\ 
50 & 0.98712 & 4.35225 & 0.93668 & 6.91294 & 0.82405 \\
\end{tabular}}
\parbox{0.9\textwidth}{
\captionof{table}{Optimal values for $\lo=4$ and $\lo\leq\hi\leq 50$. The maximum threshold in this range is about $0.82593$.}}
\end{minipage}

\begin{minipage}{0.5\textwidth}
\centering
\scalebox{0.8}{
\begin{tabular}{c|ccccc}
$\hi$ & $z^*$ & $\lambda^*$ & $\alpha^*$ & $\Adg$ & $c^*$\\\hline
5 & 0.90335 & 2.33666 & 1.00000 & 5.00000 & 0.70178 \\\hline
6 & 0.90335 & 2.33666 & 1.00000 & 5.00000 & 0.70178 \\ 
7 & 0.90335 & 2.33666 & 1.00000 & 5.00000 & 0.70178 \\ 
8 & 0.90335 & 2.33666 & 1.00000 & 5.00000 & 0.70178 \\ 
9 & 0.90335 & 2.33666 & 1.00000 & 5.00000 & 0.70178 \\ 
10 & 0.90335 & 2.33666 & 1.00000 & 5.00000 & 0.70178 \\ 
11 & 0.90335 & 2.33666 & 1.00000 & 5.00000 & 0.70178 \\ 
12 & 0.90335 & 2.33666 & 1.00000 & 5.00000 & 0.70178 \\ 
13 & 0.90335 & 2.33666 & 1.00000 & 5.00000 & 0.70178 \\ 
14 & 0.90335 & 2.33666 & 1.00000 & 5.00000 & 0.70178 \\ 
15 & 0.90335 & 2.33666 & 1.00000 & 5.00000 & 0.70178 \\ 
16 & 0.90335 & 2.33666 & 1.00000 & 5.00000 & 0.70178 \\ 
17 & 0.90335 & 2.33666 & 1.00000 & 5.00000 & 0.70178 \\ 
18 & 0.90617 & 2.36622 & 0.99375 & 5.08121 & 0.70187 \\ 
19 & 0.90905 & 2.39743 & 0.98793 & 5.16898 & 0.70215 \\ 
20 & 0.91172 & 2.42727 & 0.98300 & 5.25495 & 0.70258 \\ 
21 & 0.91421 & 2.45588 & 0.97880 & 5.33924 & 0.70315 \\ 
22 & 0.91654 & 2.48335 & 0.97518 & 5.42198 & 0.70383 \\ 
23 & 0.91871 & 2.50978 & 0.97204 & 5.50326 & 0.70460 \\ 
24 & 0.92076 & 2.53524 & 0.96931 & 5.58318 & 0.70545 \\ 
25 & 0.92268 & 2.55981 & 0.96691 & 5.66183 & 0.70637 \\ 
26 & 0.92450 & 2.58356 & 0.96480 & 5.73927 & 0.70734 \\ 
27 & 0.92621 & 2.60653 & 0.96293 & 5.81557 & 0.70836 \\ 
28 & 0.92783 & 2.62878 & 0.96127 & 5.89081 & 0.70942 \\ 
29 & 0.92937 & 2.65036 & 0.95979 & 5.96502 & 0.71051 \\ 
30 & 0.93084 & 2.67130 & 0.95847 & 6.03827 & 0.71163 \\ 
31 & 0.93223 & 2.69165 & 0.95728 & 6.11061 & 0.71278 \\ 
32 & 0.93356 & 2.71143 & 0.95622 & 6.18206 & 0.71394 \\ 
33 & 0.93483 & 2.73069 & 0.95526 & 6.25268 & 0.71512 \\ 
34 & 0.93604 & 2.74944 & 0.95440 & 6.32250 & 0.71631 \\ 
35 & 0.93720 & 2.76772 & 0.95361 & 6.39156 & 0.71752 \\ 
36 & 0.93831 & 2.78556 & 0.95291 & 6.45989 & 0.71873 \\ 
37 & 0.93937 & 2.80296 & 0.95227 & 6.52751 & 0.71995 \\ 
38 & 0.94039 & 2.81996 & 0.95168 & 6.59446 & 0.72117 \\ 
39 & 0.94137 & 2.83657 & 0.95115 & 6.66075 & 0.72240 \\ 
40 & 0.94232 & 2.85281 & 0.95067 & 6.72643 & 0.72362 \\ 
41 & 0.94323 & 2.86870 & 0.95024 & 6.79150 & 0.72485 \\ 
42 & 0.94410 & 2.88425 & 0.94984 & 6.85599 & 0.72608 \\ 
43 & 0.94495 & 2.89948 & 0.94948 & 6.91991 & 0.72731 \\ 
44 & 0.94576 & 2.91440 & 0.94915 & 6.98330 & 0.72853 \\ 
45 & 0.95990 & 3.21627 & 0.94897 & 7.04133 & 0.72973 \\ 
46 & 0.96531 & 3.36136 & 0.94889 & 7.09535 & 0.73078 \\ 
47 & 0.96890 & 3.47066 & 0.94885 & 7.14826 & 0.73171 \\ 
48 & 0.97164 & 3.56291 & 0.94883 & 7.20010 & 0.73252 \\ 
49 & 0.97386 & 3.64437 & 0.94884 & 7.25089 & 0.73322 \\ 
\rowcolor{lightgray}50 & 0.97572 & 3.71813 & 0.94887 & 7.30068 & 0.73384 \\ 
\end{tabular}}

\parbox{0.9\textwidth}{
\captionof{table}{Optimal values for $\lo=5$ and $\lo\leq\hi\leq 50$. The maximum threshold in this range is about $0.73384$.}}
\end{minipage}\hfill
\begin{minipage}{0.5\textwidth}
\centering
\scalebox{0.8}{
\begin{tabular}{c|ccccc}
$\hi$ & $z^*$ & $\lambda^*$ & $\alpha^*$ & $\Adg$ & $c^*$\\\hline
\\
6 & 0.93008 & 2.66040 & 1.00000 & 6.00000 & 0.63708 \\\hline
7 & 0.93008 & 2.66040 & 1.00000 & 6.00000 & 0.63708 \\ 
8 & 0.93008 & 2.66040 & 1.00000 & 6.00000 & 0.63708 \\ 
9 & 0.93008 & 2.66040 & 1.00000 & 6.00000 & 0.63708 \\ 
10 & 0.93008 & 2.66040 & 1.00000 & 6.00000 & 0.63708 \\ 
11 & 0.93008 & 2.66040 & 1.00000 & 6.00000 & 0.63708 \\ 
12 & 0.93008 & 2.66040 & 1.00000 & 6.00000 & 0.63708 \\ 
13 & 0.93008 & 2.66040 & 1.00000 & 6.00000 & 0.63708 \\ 
14 & 0.93008 & 2.66040 & 1.00000 & 6.00000 & 0.63708 \\ 
15 & 0.93008 & 2.66040 & 1.00000 & 6.00000 & 0.63708 \\ 
16 & 0.93008 & 2.66040 & 1.00000 & 6.00000 & 0.63708 \\ 
17 & 0.93008 & 2.66040 & 1.00000 & 6.00000 & 0.63708 \\ 
18 & 0.93008 & 2.66040 & 1.00000 & 6.00000 & 0.63708 \\ 
19 & 0.93008 & 2.66040 & 1.00000 & 6.00000 & 0.63708 \\ 
20 & 0.93008 & 2.66040 & 1.00000 & 6.00000 & 0.63708 \\ 
21 & 0.93008 & 2.66040 & 1.00000 & 6.00000 & 0.63708 \\ 
22 & 0.93008 & 2.66040 & 1.00000 & 6.00000 & 0.63708 \\ 
23 & 0.93008 & 2.66040 & 1.00000 & 6.00000 & 0.63708 \\ 
24 & 0.93008 & 2.66040 & 1.00000 & 6.00000 & 0.63708 \\ 
25 & 0.93008 & 2.66040 & 1.00000 & 6.00000 & 0.63708 \\ 
26 & 0.93008 & 2.66040 & 1.00000 & 6.00000 & 0.63708 \\ 
27 & 0.93088 & 2.67194 & 0.99807 & 6.04054 & 0.63709 \\ 
28 & 0.93237 & 2.69378 & 0.99463 & 6.11825 & 0.63717 \\ 
29 & 0.93379 & 2.71495 & 0.99153 & 6.19490 & 0.63732 \\ 
30 & 0.93514 & 2.73551 & 0.98873 & 6.27054 & 0.63754 \\ 
31 & 0.93642 & 2.75549 & 0.98619 & 6.34522 & 0.63781 \\ 
32 & 0.93765 & 2.77491 & 0.98388 & 6.41900 & 0.63812 \\ 
33 & 0.93881 & 2.79382 & 0.98178 & 6.49189 & 0.63849 \\ 
34 & 0.93993 & 2.81224 & 0.97986 & 6.56396 & 0.63889 \\ 
35 & 0.94100 & 2.83020 & 0.97810 & 6.63523 & 0.63932 \\ 
36 & 0.94202 & 2.84771 & 0.97648 & 6.70573 & 0.63979 \\ 
37 & 0.94301 & 2.86481 & 0.97498 & 6.77551 & 0.64028 \\ 
38 & 0.94395 & 2.88151 & 0.97361 & 6.84458 & 0.64080 \\ 
39 & 0.94486 & 2.89783 & 0.97233 & 6.91297 & 0.64134 \\ 
40 & 0.94573 & 2.91379 & 0.97116 & 6.98071 & 0.64190 \\ 
41 & 0.94657 & 2.92941 & 0.97006 & 7.04783 & 0.64248 \\ 
42 & 0.94738 & 2.94469 & 0.96905 & 7.11433 & 0.64308 \\ 
43 & 0.94816 & 2.95966 & 0.96810 & 7.18026 & 0.64369 \\ 
44 & 0.94892 & 2.97433 & 0.96722 & 7.24562 & 0.64431 \\ 
45 & 0.94965 & 2.98871 & 0.96640 & 7.31044 & 0.64494 \\ 
46 & 0.95035 & 3.00281 & 0.96563 & 7.37472 & 0.64558 \\ 
47 & 0.95103 & 3.01665 & 0.96491 & 7.43850 & 0.64624 \\ 
48 & 0.95170 & 3.03022 & 0.96424 & 7.50178 & 0.64690 \\ 
49 & 0.95233 & 3.04355 & 0.96361 & 7.56458 & 0.64756 \\ 
\rowcolor{lightgray}50 & 0.95295 & 3.05665 & 0.96302 & 7.62692 & 0.64823 \\
\end{tabular}}

\parbox{0.9\textwidth}{
\captionof{table}{Optimal values for $\lo=6$ and $\lo\leq\hi\leq 50$. The maximum threshold in this range is about $0.64823$.}}
\end{minipage}

\pagebreak
\section{Properties of $f(z)$}
\label{app:f(z)}
In this section we prove Lemma~\ref{lem:f(z)}.

\begin{asparaenum}[($i$)]
\Item 
\begin{align*}
 &f(z)=\frac{-\ln(1-z)\cdot (1-z)}{z} \overset{!}{>} 1-z \\
 &\Leftrightarrow -\ln(1-z)>z 
 \Leftrightarrow \frac{1}{1-z} > e^z\Leftrightarrow e^{-z} >1-z  \ \checkmark
\end{align*}
\item Applying L'Hôpital's rule it follows that
\begin{equation*}
 \lim_{z\to 0} f(z)=\lim_{z \to 0}\frac{-\ln(1-z)\cdot (1-z)}{z}=\lim_{z\to 0} \frac{\frac{1}{|1-z|}\cdot (1-z)+\ln(1-z)}{1}=1 \ . 
\end{equation*}
\item Applying L'Hôpital's rule it follows that
\begin{align*}
 \lim_{z\to 1} f(z)&=\lim_{z \to 1}\frac{-\ln(1-z)\cdot (1-z)}{z}=\lim_{z \to 1}\frac{-\ln(1-z)}{\frac{z}{1-z}}
=\lim_{z\to 1} \frac{\frac{1}{|1-z|}}{\frac{1-z+z}{(1-z)^2}}\\
&=\lim_{z\to 1}1-z=0 \ .
\end{align*}
\Item
\begin{equation*}
 \frac{\de f(z)}{\de z}=\frac{z+\ln(1-z)}{z^2} \overset{!}{<} 0 \Leftrightarrow \ln(1-z)<-z \Leftrightarrow 1-z < e^{-z} \ \checkmark
\end{equation*}
\Item
\begin{align*}
 &\frac{\de^2 f(z)}{(\de z)^2}=\frac{-2z+z^2-2\ln(1-z)\cdot(1-z)}{(1-z)\cdot z^3} \overset{!}{<}0\\
 \Leftrightarrow& \underbrace{\frac{z^2-2\cdot z}{1-z}}_{f_1(z)} < \underbrace{2\ln(1-z)}_{f_2(z)}
\end{align*}
which is true since it holds 
\begin{itemize}
 \item $\lim_{z\to 0} f_1(z)=\lim_{z\to 0}f_2(z)=0$ and 
 \item $\frac{\de f_1(z)}{\de z}= \frac{-z^2+2\cdot z-2}{(1-z)^2}< \frac{\de f_2(z)}{\de z}=\frac{-2}{1-z}<0$.
\end{itemize}
\item First we show that $z'$ is strictly increasing for growing $\lo$.
Utilizing Lemma~\ref{lem:f(z)}$(\ref{prop:f(z)_decreasing})$ this implies that $f(z')$ is striclty monotonically decreasing for growing $\lo$. 
\begin{align*}
&\frac{\partial z'}{\partial \lo} = z' \cdot \left( \frac{\ln(\lo/\hi)}{(\hi-\lo)^2} + \frac{1}{\lo\cdot (\hi-\lo)} \right) \overset{!}{>}0\\
\Leftrightarrow & \frac{\hi-\lo}{\lo}>\ln(\hi/\lo)
\Leftrightarrow \exp\left( \frac{\hi-\lo}{\lo} \right) > \frac{\hi}{\lo} \\
\Leftrightarrow &\sum_{i=0}^\infty \left(\frac{\hi-\lo}{\lo}\right)^i \cdot \frac{1}{i!} > \frac{\hi}{\lo} \\
\Leftrightarrow & 1 + \frac{\hi-\lo}{\lo}+\underbrace{\left(\frac{\hi-\lo}{\lo}\right)^2\cdot \frac{1}{2} +\left(\frac{\hi-\lo}{\lo}\right)^3\cdot \frac{1}{6}+ \ldots}_{>0} >  \frac{\hi}{\lo} \ \checkmark
\end{align*}
Now all we need to show is that our assumption holds for the maximum value of $\lo$, that is $\lo = \hi-1$.
\begin{align*}
 f\left( \left(\frac{\hi-1}{\hi}\right)^{\frac{1}{\hi-(\hi-1)}} \right)\overset{!}{>} \frac{1}{\hi-1}
&\Leftrightarrow \frac{-\ln\left(1-\frac{\hi-1}{\hi}\right) \cdot \left(1-\frac{\hi-1}{\hi}\right)}{\frac{\hi-1}{\hi}}>\frac{1}{\hi-1}\\
&\Leftrightarrow  \frac{-\ln(1/\hi)}{\hi-1}>\frac{1}{\hi-1} \Leftrightarrow \ln(\hi)>1
\end{align*}
which is true since $\hi\geq 4$.
\item For $z\in(0,1)$ we have $f(z)\in(0,1)$. Consider $\varphi(\lo,\hi)$ with
\begin{equation*}
\varphi(\lo,\hi)=-\frac{1}{1-z'}-2+\hi+\lo=-\frac{1}{1-\left(\frac{\lo}{\hi}\right)^{\frac{1}{\hi-\lo}}}-2+\hi+\lo \ .
\end{equation*}
Let $\lo$ be fixed. We will show that $z'=z'(\hi)$ is strictly increasing for growing $\hi$.
\begin{align*}
 &\frac{\partial z'}{\partial \hi}= z' \cdot \left(- \frac{\ln(\lo/\hi)}{(\hi-\lo)^2} - \frac{1}{\hi\cdot (\hi-\lo)} \right) \overset{!}{>}0\\
&\Leftrightarrow 
 \frac{-\ln(\frac{\lo}{\hi})}{(\hi-\lo)^2} >\frac{1}{\hi\cdot(\hi-\lo)} \Leftrightarrow \frac{-\hi+\lo}{\hi}>\ln\left(\frac{\lo}{\hi}\right )\\
& \Leftrightarrow x>\ln(1+x) \ ,
\end{align*}
for $\frac{\lo}{\hi}=1+x$  and $-1<x<0$. Hence
\begin{align*}
  \frac{\partial z'}{\partial \hi}\overset{!}{<}0&\Leftrightarrow x>\ln(1+x) \Leftrightarrow x> \sum_{i=0}^{\infty}(-1)^i\cdot \frac{x^{i+1}}{(i+1)}\\
 &\Leftrightarrow x > x \underbrace{-x^2/2+x^3/3-x^4/4}_{<0} \checkmark
\end{align*}
It follows directly that the function $\varphi(\hi)$ is strictly increasing for growing $\hi$.
Consider the minimum of 
$
 \varphi(\lo+1,\lo)=-\frac{1}{1-\left(\frac{\lo}{\lo+1}\right)}-1+2\cdot \lo=\lo-2 \ .
$
Since $\lo\geq 3$ we have $\varphi(\lo+1,\lo)\geq 1$ which is not in the range of $f(z)$  for $z\in (0,1)$.
\end{asparaenum}

\section{Properties of $g(z,\lo,\hi)$}
\label{app:g(z)}
In this section we prove Lemma~\ref{lem:g(z)}.
\begin{asparaenum}[$(i)$]
 \item Consider the first derivative of $g(z)$.
 \begin{align*}
 &\frac{\partial g(z)}{\partial z}=\frac{\ln(1-z)\cdot(\hi-1)\cdot(\lo-1)}{z^2}+\frac{1}{(1-z)^2}+\frac{(\hi-1)\cdot(\lo-1)}{z}\overset{!}{<}0 \\
\Leftrightarrow & \frac{-f(z)\cdot(\hi-1)\cdot(\lo-1)}{z\cdot(1-z)} +\frac{1}{(1-z)^2} + \frac{(\hi-1)\cdot(\lo-1)}{z} < 0 \\
\Leftrightarrow &  1-z + \frac{z}{1-z}\cdot \frac{1}{(\hi-1)\cdot (\lo-1)}<f(z) \\
\Leftrightarrow & \frac{1}{(\hi-1)\cdot (\lo-1)} <\underbrace{\frac{1-z}{z} \cdot(f(z)-(1-z))}_{g_1(z)}
\end{align*}
Hence we have $\frac{\partial g(z)}{\partial z}\ R \ 0  \Leftrightarrow \frac{1}{(\hi-1)\cdot (\lo-1)} \ R \ g_1(z)$ for $R\in \{<,>,=\}$.
Now consider $g_1(z)$. It holds that 
\begin{itemize}
\item  $\lim\limits_{z\to 0} g_1(z)=0.5$, since 
\begin{align*}
 \lim_{z\to 0} g_1(z)&=\lim_{z\to 0} \frac{(1-z)\cdot(f(z)-1+z)}{z}\\
&=\lim_{z\to 0} \frac{-1\cdot(f(z)-1+z)+(1-z)\cdot \left(\frac{\de f(z)}{\de z}+1\right) }{1}=1+\lim_{z\to0}\frac{\de f(z)}{\de z} \\
&=1+\lim_{z\to 0}\frac{z+\ln(1-z)}{z^2}=1+\lim_{z\to 0}\frac{1+\frac{-1}{1-z}}{2\cdot z}\\
&=1+\lim_{z\to 0}\frac{-z}{2\cdot z\cdot(1-z)}= 1+\lim_{z\to 0}\frac{-1}{2\cdot (1-z)-2\cdot z}= 0.5 \ ,
\end{align*}
using L'Hôpital's rule three times.
\item $\lim\limits_{z\to 1} g_1(z)= \frac{1-1}{1}\cdot (0-1+1)=0$.
\item $g_1(z)$ is strictly decreasing for growing $z\in(0,1)$, since
\begin{align*}
  &\frac{\de g_1(z)}{\de z}=-\frac{-2\cdot \ln(1-z) \cdot (1-z)+ z^3+z^2-2\cdot z}{z^3} \overset{!}{<}0 \\
 \Leftrightarrow &2\cdot \ln(1-z) \cdot (1-z)<z^3+z^2-2\cdot z \Leftrightarrow \ln(1-z)<\underbrace{\frac{z^3+z^2-2\cdot z}{2\cdot(1-z)}}_{g_2(z)} \ ,
\end{align*}
which is true because
\begin{itemize}
\item $\lim_{z\to 0}\ln(1-z)=0 = \lim_{z\to 0}g_2(z)=0$ and 
\item $\frac{\de \ln(1-z)}{\de z}=\frac{-1}{1-z}< \frac{\de g_2(z)}{\de z}=-1-z<0$.
\end{itemize}
\end{itemize}
Using that $0<\frac{1}{(\hi-1)\cdot (\lo-1)}<0.5$ it follows that
for growing $z$ there is a first phase with $g_1(z)> \frac{1}{(\hi-1)\cdot (\lo-1)}$ which implies $\frac{\de g(z)}{\de z}{<}0$. Then there is
exactly one $z$ where $g_1(z)= \frac{1}{(\hi-1)\cdot (\lo-1)}$, which is a local minimum.
After this point we have $g_1(z) < \frac{1}{(\hi-1)\cdot (\lo-1)}$ which impilies $\frac{\de g(z)}{\de z}{>}0$.
It follows that the local minimum is actual a global minimum. 

\item If  $z \in (0,z_l]$ then it holds $f(z)\geq \frac{1}{\lo-1}>\frac{1}{\hi-1}$. 
Furthermore, according to Lemma~\ref{lem:f(z)}$(\ref{prop:f(z)_1-z})$ we have $\frac{1}{1-z}>\frac{1}{f(z)}$.
Let $f(z)=\frac{1+\eps}{\lo-1}$ and $f(z)=\frac{1+\delta}{\hi-1}$
as well as $\frac{1}{1-z}=\frac{1+\gamma}{f(z)}$ with $\eps\geq0$ and $\delta,\gamma>0$. Using that $f(z)>0$, for $z\in(0,1)$, it follows that
\begin{align*}
 g(z)\overset{!}{>}0 &\Leftrightarrow f(z)\cdot(\hi-1)\cdot(\lo-1)+\tfrac{1+\gamma}{f(z)}-(\lo-1)-(\hi-1)>0 \\
&\Leftrightarrow f(z)^2\cdot(\hi-1)\cdot(\lo-1)+(1+\gamma)-f(z)\cdot (\lo-1)-f(z)\cdot (\hi-1)>0\\
&\Leftrightarrow (1+\eps)\cdot(1+\delta)+(1+\gamma)-(1+\eps)-(1+\delta)>0\\
&\Leftrightarrow \eps\cdot \delta+\gamma>0 \ \checkmark
\end{align*}

\item If  $z \in [z_r,1)$ then it holds $f(z)\leq \frac{1}{\hi-1}<\frac{1}{\lo-1}$.
Furthermore, according to Lemma~\ref{lem:f(z)}$(\ref{prop:f(z)_1-z})$ we have $\frac{1}{1-z}>\frac{1}{f(z)}$.
Let $f(z)=\frac{1-\eps}{\lo-1}$ and $f(z)=\frac{1-\delta}{\hi-1}$
as well as $\frac{1}{1-z}=\frac{1+\gamma}{f(z)}$ with $\eps\geq0$ and $\delta,\gamma>0$.
Following the proof of Lemma~\ref{lem:g(z)}$(\ref{prop:g(z)_left_interval})$ we get
\begin{align*}
 g(z)\overset{!}{>}0 
&\Leftrightarrow (1-\eps)\cdot(1-\delta)+(1+\gamma)-(1-\eps)-(1-\delta)>0\\
&\Leftrightarrow \eps\cdot \delta + \gamma>0 \ \checkmark
\end{align*}
\item The existance of the roots $z_1$ and $z_2$ follows directly from Lemma~\ref{lem:g(z)}~$(\ref{prop:g(z)_monotonicity})$.
Moreover, from Lemma~\ref{lem:g(z)}~$(\ref{prop:g(z)_left_interval})$,~$(\ref{prop:g(z)_right_interval})$ it follows that if $g(z)\leq 0$ for $z\in(0,1)$ then
it holds $z\in(z_l,z_r)$.
\item Let $z>z_l$, that is $f(z)=\frac{1-\eps}{\lo-1}$ for $\eps>0$. If follows
\begin{align*}
g(z,\lo,\hi)
&=f(z)\cdot(\hi-1)\cdot(\lo-1)+\frac{1}{1-z}+2-\hi-\lo\\
&=f(z)\cdot \hi\cdot(\lo-1)+\frac{1}{1-z} +2-(\hi+1)-\lo -f(z)\cdot(\lo-1)+1 \\
&=g(z,\lo,\hi+1)-f(z)\cdot(\lo-1)+1=g(z,\lo,\hi+1)-(1-\eps)+1\\
&>g(z,\lo,\hi+1) \ .
\end{align*}

\item Assume that there is some $\hi'$ such that $\check{z}=\min_z g(z,\lo,\hi')<0$. 
Then from Lemma~\ref{lem:g(z)}$(\ref{prop:g(z)_left_interval}),(\ref{prop:g(z)_right_interval})$ it follows that
$\check{z}>z_l$. Using Lemma~\ref{lem:g(z)}$(\ref{prop:g(z)_monotonicity_in_b})$ we conclude that 
for all $b\geq \hi'$ it holds that $ g(\check{z},\lo,\hi)<0$ and therefore $\min_zg(z,\lo,\hi)<0$ as well.
It remains to find one such $\hi'$.   

Consider the inequality $g(z',\lo,\hi)\geq 0$ which is equivalent to
\begin{equation*}
 (\hi-1)\cdot \Big( \underbrace{f(z')\cdot(\lo-1)}_{g_1(\lo,\hi)}+\underbrace{\frac{1}{1-z'}\cdot \frac{1}{\hi-1}}_{g_1(\lo,\hi)}-1\Big) -(\lo-1)\geq 0 \ .
\end{equation*}
Assume that $\lim_{\hi\to\infty} g_1(\hi)=0$ and $\lim_{\hi\to\infty} g_2(\hi)\leq 0$.
It follows that there must be a $\hi'$ with $g(z'(\lo,\hi'),\lo,\hi')< 0$ and thus $\min_z g(z'(\lo,\hi'),\lo,\hi')<0$.

\begin{itemize}
 \item $\lim_{\hi\to\infty} g_1(\hi)= 0$: Assume that it holds $\lim_{\hi\to\infty} z'=1$.
Lemma~\ref{lem:f(z)}~$(\ref{prop:f(z)_lim_1})$ gives that $\lim_{\hi\to\infty}f(z'(\hi))=0=\lim_{\hi\to\infty} g_1(\hi)$.
\begin{align*}
\lim_{\hi\to\infty} z'=&\lim_{\hi\to\infty}\exp\left( \ln\left(\frac{\lo}{\hi}\right)^{\frac{1}{\hi-\lo}} \right)
=\lim_{\hi\to\infty}\exp\left( \frac{\ln(\lo)-\ln(\hi)}{\hi-\lo}  \right) \\
=&\exp\left( \lim_{\hi\to\infty} \frac{\ln(\lo)}{\hi-\lo}  -\lim_{\hi\to\infty}\frac{\ln(\hi)}{\hi-\lo}  \right) 
=\exp\left(0-\lim_{\hi\to\infty} \frac{1/\hi}{1}\right)=1 \ .
\end{align*}
 \item  $\lim_{\hi\to\infty} g_2(\hi)\leq 0$:
Since $\frac{2-z}{1-z}>\frac{1}{1-z}$, for $z\in(0,1)$, it is sufficient to show that
$\lim_{\hi \to \infty} \frac{2-z'}{1-z'}\cdot  \frac{1}{\hi-1}=0$. Hence
\begin{align*}
 &\lim_{\hi \to \infty} \frac{(2-z')/(\hi-1)}{1-z'} 
=  \lim_{\hi \to \infty}\frac{ - \frac{1}{\hi-1}\cdot \frac{\partial z'}{\partial \hi}  +   \frac{-2+z'}{(\hi-1)^2}}{-\frac{\partial z'}{\partial \hi}}\\
=&  \lim_{\hi \to \infty} \left( \frac{1}{\hi-1} + \frac{\frac{1}{(\hi-1)^2}\cdot (2-z')}{\frac{\partial z'}{\partial \hi}} \right) \ .
\end{align*}
Using that $\frac{\partial z'}{\partial \hi}=z'\cdot \left( \frac{-\ln\left(\lo/\hi \right)}{(\hi-\lo)^2}-\frac{1}{\hi \cdot (\hi-\lo)} \right)$ we get
\begin{align*}
&\lim_{\hi \to \infty} \frac{(2-z')/(\hi-1)}{1-z'} =  \lim_{\hi \to \infty} \frac{2-z'}{(\hi-1)^2}\cdot\frac{1}{z'}\cdot  \left(  \frac{-\ln(\frac{\lo}{\hi})}{(\hi-\lo)^2} -\frac{1}{\hi\cdot(\hi-\lo)} \right)^{-1}\\
=&  \lim_{\hi \to \infty} \underbrace{\frac{2-z'}{z'}}_{\to 1}\cdot \Big( \underbrace{(\hi-1)^2\cdot \frac{-\ln(\frac{\lo}{\hi})}{(\hi-\lo)^2}}_{\to \infty} -\underbrace{\frac{(\hi-1)^2}{\hi\cdot(\hi-\lo)}}_{\to 1} \Big)^{-1}=0 \ .
\end{align*}
\end{itemize}
\end{asparaenum}

\section{Properties of $h(z,\lo,\hi)$}
\label{app:h(z)}
In this section we prove Lemma~\ref{lem:h(z)}.
\begin{asparaenum}[$(i)$]
\item  Consider the denominator of $h(z)$.
\begin{equation*}
\hi\cdot ((\hi-1)\cdot f(z)-1) \overset{!}{=}0 \Leftrightarrow f(z)=\frac{1}{\hi-1} \ ,
\end{equation*}
which is true for exactly one $z$ from $(0,1)$, which is per definition $z = z_r$.
\item With $\lim_{z\to 0}f(z)=1$, Lemma \ref{lem:f(z)}$(\ref{prop:f(z)_lim_0})$, and 
$\lim_{z\to 0}(1-f(z)\cdot(\lo-1))=2-a\leq -1$ we get
\begin{equation*}
\lim_{z\to 0}h(z)
=\lim_{z\to 0} \frac{  \lo\cdot \frac{1}{z^{\hi -\lo}}\cdot(1-f(z)\cdot(\lo-1)) -\hi +f(z)\cdot \hi\cdot(\hi-1)}{\hi\cdot ((\hi-1)\cdot f(z)-1)}
{=}-\infty \ .
\end{equation*}

\item It holds $f(z)>\frac{1}{\hi-1}, \forall z\in (0,z_r)$. Let
$f(z)=\frac{1+\eps}{\hi-1}$. Consider the limit of the numerator of $h(z)$.
\begin{align*}
 &\lim_{\eps\to 0} \lo\cdot \tfrac{1}{z_r^{\hi -\lo}}\cdot(1-\tfrac{1+\eps}{\hi-1}\cdot(\lo-1)) -\hi +\tfrac{1+\eps}{\hi-1}\cdot \hi\cdot(\hi-1) \\
=&\lim_{\eps\to 0} \lo\cdot \tfrac{1}{z_r^{\hi -\lo}}\cdot(1-\tfrac{\lo-1}{\hi-1})=K \ ,
\end{align*}
for some positive constant $K$ (depending on $\lo$ and $\hi$).
For the denominator of $h(z)$ it holds
\begin{equation*}
\lim_{\eps\to +0} \hi \cdot((\hi-1)\cdot \tfrac{1+\eps}{\hi-1}-1)=\hi \cdot \eps=+0 \ .
\end{equation*}
Hence $\lim_{z\to z_r}h(z)=+\infty$.
\item It holds $f(z)\geq\frac{1}{\lo-1}, \forall z\in (0,z_l]$. Let $\eps \geq 0$ and let $f(z)=\frac{1+\eps}{\lo-1}$. Hence
\begin{align*}
   h(z)=&\frac{ \lo\cdot z^{\lo -\hi} -\hi -\frac{1+\eps}{\lo-1} \cdot (\lo\cdot(\lo-1)\cdot z^{\lo-\hi}-\hi\cdot(\hi-1)) }{\hi\cdot ((\hi-1)\cdot \frac{1+\eps}{\lo-1}-1)}\overset{!}{\leq} 1 \\
\Leftrightarrow & \lo \cdot z^{\lo-\hi}-(1+\eps)\cdot \lo \cdot z^{\lo-\hi} -\hi + (1+\eps)\cdot\hi\cdot \tfrac{\hi-1}{\lo-1}\leq -\hi + (1+\eps)\cdot\hi\cdot \tfrac{\hi-1}{\lo-1}\\
\Leftrightarrow & \lo \cdot z^{\lo-\hi} -(1+\eps)\cdot \lo \cdot z^{\lo-\hi}\leq 0 \Leftrightarrow \eps \geq 0 \ .
\end{align*}
It holds $\frac{1}{\lo-1}\geq f(z)>\frac{1}{\hi-1},\forall z \in (z_l,z_r)$. 
Let $\eps> 0$, and let $\delta>0$ with $\frac{1-\eps}{\lo-1}=f(z)=\frac{1+\delta}{\hi-1}$.
Hence 
\begin{align*}
h(z)=&\frac{ \lo\cdot z^{\lo -\hi} -\hi -\frac{1-\eps}{\lo-1} \cdot \lo\cdot(\lo-1)\cdot z^{\lo-\hi} + \frac{1+\delta}{\hi-1}\cdot \hi\cdot(\hi-1)) }{\hi\cdot ((\hi-1)\cdot \frac{1+\delta}{\hi-1}-1)}\overset{!}{>}1\\
\Leftrightarrow &\lo\cdot z^{\lo -\hi}  -(1-\eps)\cdot \lo\cdot z^{\lo-\hi} + (1+\delta)\cdot \hi-\hi > \delta\cdot \hi\\
\Leftrightarrow &\eps\cdot \lo\cdot z^{\lo-\hi} > 0 \Leftrightarrow \eps > 0 \ .
\end{align*}
Note that for $z=z_l$, that is $\eps=0$, we have $h(z)=1$.

\item It holds $f(z)<\frac{1}{\hi-1}, \forall z\in (z_r,1)$. Let $\eps \in (0,1)$ and let $f(z)=\frac{1-\eps}{\hi-1}$. Hence
\begin{align*}
h(z)=&\frac{ \lo\cdot z^{\lo -\hi} -\hi -\frac{1-\eps}{\hi-1} \cdot (\lo\cdot(\lo-1)\cdot z^{\lo-\hi}-\hi\cdot(\hi-1)) }{\hi\cdot ((\hi-1)\cdot \frac{1-\eps}{\hi-1}-1)} \overset{!}{<} 1 \\
\Leftrightarrow & \lo \cdot z^{\lo-\hi} \cdot (1-(1-\eps)\cdot \tfrac{\lo-1}{\hi-1}) -\hi + \hi\cdot (1-\eps) > -\eps\cdot \hi \\
\Leftrightarrow &  1-(1-\eps)\cdot \tfrac{\lo-1}{\hi-1} >0  \Leftarrow \eps \in (0,1) \ .
 \end{align*}

\item We will use the following representation of $h(z)$
\begin{equation*}
h(z) =\frac{Z_0(z)-f(z) \cdot Z_1(z)}{\hi \cdot z^{\hi-1} \cdot \left((\hi-1)\cdot f(z)-1\right)} \ .
\end{equation*}
Let $h_1(z)=(\hi-1)\cdot f(z)-1$. Note that $h_1(z)\neq 0$, if $z\neq z_r$.

The first derivative of  $h(z)$ is
\begin{align*}
\frac{\partial h(z)}{\partial z}=&\frac{f(z)}{\hi \cdot z^\hi \cdot {h_1(z)}} \\
&\cdot \Bigg( \frac{Z_1(z)}{1-z}-Z_2(z) -(Z_0(z)-f(z)\cdot Z_1(z))\cdot  \frac{(\hi-1)\cdot (\hi-1-\frac{1}{1-z})}{{h_1(z)}} \Bigg) \ .
\end{align*}
The function $h(z)$ is strictly increasing  if and only if 
\begin{align*}
  &\frac{\partial h(z)}{\partial z}\overset{!}{>}0 \\
 \Leftrightarrow& \frac{\partial h(z)}{\partial z}\cdot h_1(z)^2
=f(z) \cdot \Big[( h_1(z) \cdot \big( \tfrac{Z_1(z)}{1-z}-Z_2(z)\big) \\
 &\hspace{3.4cm}-\big(Z_0(z)-f(z)\cdot Z_1(z)\big)\cdot  (\hi-1)\cdot (\hi-1-\tfrac{1}{1-z}) \Big] > 0 \ .
\end{align*}
Note that $f(z)$ is positive for $z\in(0,1)$. So we get
\begin{align*}
   \frac{\partial h(z)}{\partial z}\overset{!}{>}0
\Leftrightarrow (\tfrac{Z_1(z)}{1-z}-Z_2(z))\cdot& ( (\hi-1)\cdot f(z)-1) > 
\\&(Z_0(z)-f(z)\cdot Z_1(z))\cdot ( (\hi-1)^2- \tfrac{\hi-1}{1-z}) 
\end{align*}
This inequality is equivalent to
\begin{align*}
&\frac{Z_1(z)}{1-z}\cdot (\hi-1)\cdot f(z) - \frac{Z_1(z)}{1-z} -Z_2(z)\cdot (\hi-1)\cdot f(z) + Z_2(z)> \\
&\frac{Z_1(z)}{1-z}\cdot (\hi-1)\cdot f(z)+ Z_0(z)\cdot (\hi-1)^2\\ -&Z_0(z)\cdot \frac{(\hi-1) }{1-z} -f(z)\cdot Z_1(z)\cdot (\hi-1)^2\\
\Leftrightarrow 
&   f(z)\cdot (\hi-1) \cdot (Z_1(z)\cdot (\hi-1)-Z_2(z))
 + \frac{1}{1-z}\cdot(Z_0\cdot (\hi-1)-Z_1(z)) \\
& + Z_2(z)-Z_0(z)\cdot (\hi-1)^2 >0 \ .
\end{align*}
Expanding the functions $Z_j(z)$ gives
\begin{align*}
  \frac{\partial h(z)}{\partial z}\overset{!}{>}0
\Leftrightarrow&f(z)\cdot (\hi-1)\cdot \lo\cdot z^{\lo-1} \cdot( (\lo-1)\cdot(\hi-1)-(\lo-1)^2)\\
&+\tfrac{1}{1-z}\cdot \lo\cdot z^{\lo-1}\cdot( (\hi-1)-(\lo-1))\\
&+\lo\cdot z^{\lo-1}\cdot( (\lo-1)^2-(\hi-1)^2)>0\\
\Leftrightarrow
&f(z)\cdot(\hi-1)\cdot (\lo-1)\cdot(\hi-\lo)\\
&+ \frac{1}{1-z}\cdot(\hi-\lo) +(\lo-1)^2 -(\hi-1)^2 >0 \\
\Leftrightarrow
& f(z)\cdot (\hi-1)\cdot(\lo-1) + \frac{1}{1-z} +2 -\hi -\lo>0\\
 \Leftrightarrow & g(z)>0 \ ,
\end{align*}
where we divided by $\hi-\lo$ which is larger than $0$ by definition.

\item This follows directly from Lemma~\ref{lem:g(z)}$(\ref{prop:g(z)_left_interval})$ and Lemma~\ref{lem:h(z)}$(\ref{prop:h(z)_monotonicity})$.
\item This follows directly from Lemma~\ref{lem:g(z)}$(\ref{prop:g(z)_right_interval})$ and Lemma~\ref{lem:h(z)}$(\ref{prop:h(z)_monotonicity})$
\item For $\min_z g(z)>0$ this follows directly from Lemma~\ref{lem:h(z)}$(\ref{prop:h(z)_monotonicity})$.
      Let $\min_z g(z)=0$. According to Lemma~\ref{lem:g(z)}$(\ref{prop:g(z)_monotonicity})$ the point $z_{\min}=\arg\min_z g(z)$ is the only point
      where $g(z)=0$. It follows that $z_{\min}$ is the only inflection point of $h(z)$. Therefore, 
      according to Lemma~\ref{lem:h(z)}$(\ref{prop:h(z)_monotonicity})$,
      $h(z)$ is strictly increasing.
\item According to Lemma~\ref{lem:g(z)}$(\ref{prop:g(z)_zeros})$ the function $g(z)$ has exactly two different roots $z_1, z_2$
in the interval $(z_l,z_r)$ and according to Lemma~\ref{lem:h(z)}$(\ref{prop:h(z)_monotonicity})$ and Lemma~\ref{lem:g(z)}$(\ref{prop:g(z)_monotonicity})$
it follows that $h(z)$ is strictly increasing for $z<z_1$, strictly decreasing for $z$ with $z_1<z<z_2$, and strictly increasing 
for $z>z_2$. Hence the claim follows.
\end{asparaenum}

\section{Special Points}
\label{app:points}
In this section we prove Lemma~\ref{lem:special_points}.
\begin{compactenum}[$(i)$]
 \item follows from Lemma~\ref{lem:f(z)}$(\ref{prop:f(z)_decreasing})$,$(\ref{prop:f(z)_lim_0})$,$(\ref{prop:f(z)_lim_1})$.
 \item follows from Lemma~\ref{lem:g(z)}$(\ref{prop:g(z)_zeros})$.
 \item follows from Lemma~\ref{lem:f(z)}$(\ref{prop:f(z)_z'})$.
 \item According to the definition of $g(z)$ we have $f(z')=g(z')-\frac{1}{1-z'}-2+\hi+\lo$. 
Assume that $z'=z_1$ or $z'=z_2$, then $g(z')=0$ and $f(z')=-\frac{1}{1-z'}-2+\hi+\lo$ which is contradiction to Lemma~\ref{lem:f(z)}$(\ref{prop:f(z)_phi})$.
\end{compactenum}

\end{document}